\RequirePackage{etoolbox}
\newtoggle{journalVersion}

 \togglefalse{journalVersion} 

\newcommand{\inJournal}[1]{\iftoggle{journalVersion}{#1}{}} 
\newcommand{\inArXiv}[1]{\iftoggle{journalVersion}{}{#1}}  

\inArXiv{\documentclass[11pt]{article}}
\inJournal{
	\documentclass[prodmode,acmtalg]{acmsmall}
}
\usepackage{latexsym}
\usepackage{amssymb}
\usepackage{amsmath}
\inArXiv{
	\usepackage{graphicx}
	\usepackage{amsthm}
	\usepackage[paper=letterpaper,margin=1in]{geometry}
	\usepackage{paralist}
	\usepackage{comment}
}
\usepackage{forloop}
\usepackage[ruled,vlined,linesnumbered]{algorithm2e}
\usepackage{nicefrac}

\inJournal{
	
	\SetAlFnt{\small}
	\SetAlCapFnt{\small}
	\SetAlCapNameFnt{\small}
	\SetAlCapHSkip{0pt}
	\IncMargin{-\parindent}
}

\newcommand{\enumerateEnv}{\inArXiv{compactenum}\inJournal{enumerate}}
\newcommand{\itemizeEnv}{\inArXiv{compactitem}\inJournal{itemize}}

\inArXiv
{
	\newtheorem{theorem}{Theorem}[section]
	\newtheorem{lemma}[theorem]{Lemma}
	\newtheorem{corollary}[theorem]{Corollary}

	\newtheorem{reduction}{Reduction}
}
\newtheorem{observation}[theorem]{Observation}
\inJournal
{
	\newtheorem{reduction}[theorem]{Reduction}
}

\makeatletter
\inArXiv{
	\newtheorem*{rep@theorem}{\rep@title}
	\newcommand{\newreptheorem}[2]{%
	\newenvironment{rep#1}[1]{%
		\def\rep@title{#2 \ref{##1}}%
		\begin{rep@theorem}}%
		{\end{rep@theorem}}}
}
\inJournal{
	\newcommand{\newreptheorem}[2]{%
		\newenvironment{rep#1}[1]{%
		\expandafter\renewcommand\csname the#1\endcsname{\ref{##1}}%
		\begin{#1}}%
		{\end{#1}%
		\addtocounter{#2}{-1}}}
}
\makeatother

\inArXiv{
	\newreptheorem{theorem}{Theorem}
	\newreptheorem{lemma}{Lemma}
}
\inJournal{
	\newreptheorem{theorem}{theorem}
	\newreptheorem{lemma}{theorem}
}

\newcommand{\defcal}[1]{\expandafter\newcommand\csname c#1\endcsname{{\mathcal{#1}}}}
\newcommand{\defbb}[1]{\expandafter\newcommand\csname b#1\endcsname{{\mathbb{#1}}}}
\newcounter{calBbCounter}
\forLoop{1}{26}{calBbCounter}{
    \edef\letter{\Alph{calBbCounter}}
    \expandafter\defcal\letter
		\expandafter\defbb\letter
}

\newcommand{\eps}{\varepsilon}
\newcommand{\cOPT}{{\overline{OPT}}}
\newcommand{\characteristic}{\mathbf{1}}
\newcommand{\RSet}{\mathbf{\cR}}
\newcommand{\SW}{\textup{\texttt{SW}}}
\newcommand{\SWfull}{\texttt{Submodular Welfare}}
\newcommand{\opt}{\mathtt{opt}}
\inArXiv{
	\newcommand{\ie}{{\it i.e.}}
	\newcommand{\eg}{{\it e.g.}}
	\newcommand{\citeWithName}[2]{#1~\cite{#2}}
}
\inJournal{
	\newcommand{\ie}{{i.e.}}
	\newcommand{\eg}{{e.g.}}
	\newcommand{\qedhere}{\qed}
	\newcommand{\citeWithName}[2]{\citeN{#2}}
}

\newcommand{\titleString}{Maximizing Symmetric Submodular Functions}
\title{\inArXiv{\textbf{\titleString}}\inJournal{\titleString}\footnote{This work has been supported in part by ERC Starting Grant 335288-OptApprox. An extended abstract of this work appeared in ESA 2015.}}
\inArXiv{
\author{Moran Feldman\thanks{Department of Mathematics and Computer Science, Open University of Israel.
Email:
\texttt{moranfe@openu.ac.il}.}}
}
\inJournal{
\author{MORAN FELDMAN \affil{EPFL}}
}

\inArXiv{
	\begin{document}
	\maketitle
}
\begin{abstract}
Symmetric submodular functions are an important family of submodular functions capturing many interesting cases including cut functions of graphs and hypergraphs. Maximization of such functions subject to various constraints receives little attention by current research, unlike similar minimization problems which have been widely studied. In this work, we identify a few submodular maximization problems for which one can get a better approximation for symmetric objectives than the state of the art approximation for general submodular functions.

We first consider the problem of maximizing a non-negative symmetric submodular function $f\colon 2^\cN \to \bR^+$ subject to a down-monotone solvable polytope $\cP \subseteq [0, 1]^\cN$. For this problem we describe an algorithm producing a fractional solution of value at least $0.432 \cdot f(OPT)$, where $OPT$ is the optimal \emph{integral} solution. Our second result considers the problem $\max \{f(S) : |S| = k\}$ for a non-negative symmetric submodular function $f\colon 2^\cN \to \bR^+$. For this problem, we give an approximation ratio that depends on the value $k / |\cN|$ and is always at least $0.432$. Our method can also be applied to non-negative \emph{non-symmetric} submodular functions, in which case it produces $\nicefrac{1}{e} - o(1)$ approximation, improving over the best known result for this problem. For unconstrained maximization of a non-negative symmetric submodular function we describe a \emph{deterministic linear-time} $\nicefrac{1}{2}$-approximation algorithm. Finally, we give a $[1 - (1 - 1/k)^{k - 1}]$-approximation algorithm for {\SWfull} with $k$ players having identical non-negative submodular utility functions, and show that this is the best possible approximation ratio for the problem.
\end{abstract}

\inArXiv{\noindent \textbf{Keywords:} Symmetric submodular functions, cardinality constraint, matroid constraint}

\inArXiv{\excludecomment{CCSXML}}
\begin{CCSXML}
<ccs2012>
<concept>
<concept_id>10003752.10003809.10003636</concept_id>
<concept_desc>Theory of computation~Approximation algorithms analysis</concept_desc>
<concept_significance>500</concept_significance>
</concept>
<concept>
<concept_id>10002950.10003624.10003625.10003630</concept_id>
<concept_desc>Mathematics of computing~Combinatorial optimization</concept_desc>
<concept_significance>500</concept_significance>
</concept>
<concept>
<concept_id>10003752.10003809.10003716.10011136</concept_id>
<concept_desc>Theory of computation~Discrete optimization</concept_desc>
<concept_significance>300</concept_significance>
</concept>
<concept>
<concept_id>10003752.10003753.10003757</concept_id>
<concept_desc>Theory of computation~Probabilistic computation</concept_desc>
<concept_significance>100</concept_significance>
</concept>
</ccs2012>
\end{CCSXML}

\inJournal
{
	\markboth{Moran Feldman}{\titleString}

	\ccsdesc[500]{Theory of computation~Approximation algorithms analysis}
	\ccsdesc[500]{Mathematics of computing~Combinatorial optimization}
	\ccsdesc[300]{Theory of computation~Discrete optimization}
	\ccsdesc[100]{Theory of computation~Probabilistic computation}

	\terms{Algorithms, Theory}
	\keywords{Symmetric submodular functions, cardinality constraint, matroid constraint, submodular welfare}
	\acmformat{Moran Feldman. 2016. \titleString.}
	\begin{document}
	\begin{bottomstuff}
	Author’s addresses: (Current address) M. Feldman, Department of Mathematics and Computer Science, The Open University of Israel, 1 University Road, POB 808, Raanana 4353701, Israel.
	\end{bottomstuff}
	\maketitle
}
\inArXiv
{
\thispagestyle{empty}
\newpage
\setcounter{page}{1}
}

\newcommand{\ProofGeneralProperties}
{
\begin{proof}
The first part of the lemma holds since:
\[
	\bar{F}(x)
	=
	F(\characteristic_\cN - x)
	=
	\bE[f(\RSet(\characteristic_\cN - x))]
	=
	\bE[f(\cN \setminus \RSet(x))]
	=
	\bE[\bar{f}(\RSet(x))]
	\enspace.
\]
Using the above observation, the second part of the lemma follows since, for a symmetric $f$,
\[
	F(x)
	=
	\bE[f(\RSet(x))]
	=
	\bE[\bar{f}(\RSet(x))]
	=
	\bar{F}(x)
	\enspace.
\]
Finally, for the third part of the lemma, let us assume $\RSet(x), \RSet(y)$ and $\RSet(z)$ are chosen using the following process: for every element $u \in \cN$ an independent and uniformly random threshold $t_u \in [0, 1]$ is selected. Then, $u$ is added to $\RSet(x), \RSet(y)$ or $\RSet(z)$ if $t_u \leq x_u$, $t_u \leq y_u$ or $t_u \leq z_u$, respectively. Observe that this process indeed results in sets $\RSet(x), \RSet(y)$ and $\RSet(z)$ having the right distributions. Moreover, $\RSet(z) \subseteq \RSet(y) \subseteq \RSet(x)$. Thus,
\begin{align*}
	F(x) - F(y)
	={} &
	\bE[f(\RSet(x)) - f(\RSet(y))]
	\leq
	\bE[f(\RSet(x) \setminus \RSet(z)) - f(\RSet(y) \setminus \RSet(z))]\\
	={} &
	\bE[f(\RSet(x - z))] - \bE[f(\RSet(y - z))]
	=
	F(x - z) - F(y - z)
	\enspace.
	\qedhere
\end{align*}
\end{proof}
}

\newcommand{\ProofSingeltonOK}
{
\begin{proof}
An element $u \in \cN$ such that $\characteristic_u \not \in \cP$ cannot appear in any integral solution. Thus, removing all such elements from $\cN$ results in a new instance with the same value of $\max \{F(x) : x \in \cP \cap \{0,1\}^\cN\}$. Moreover, such a removal can only increase $d(\cP)$, and thus, the guarantee of Theorem~\ref{thm:polytope_symmetric} for the new polytope must be as strong as for the original polytope.
\end{proof}
}

\newcommand{\LemmaStepApproximation}
{
\begin{lemma} \label{lem:step_approximation}
For every time $0 \leq t < T$, $\sum_{u \in \cN} (1 - y_u(t)) \cdot I_u(t) \cdot \partial_u F(y(t)) \geq F(y(t) \vee \characteristic_{OPT}) - F(y(t))$.
\end{lemma}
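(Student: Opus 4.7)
The plan follows the standard telescoping route used to analyse continuous-greedy-style algorithms, combined with a final step that invokes the defining optimality property of $I(t)$. First, I would enumerate $OPT = \{o_1, \dots, o_k\}$ arbitrarily, set $O_i = \{o_1, \dots, o_i\}$, and telescope:
\[
F(y(t) \vee \characteristic_{OPT}) - F(y(t)) = \sum_{i=1}^{k}\bigl[F(y(t) \vee \characteristic_{O_i}) - F(y(t) \vee \characteristic_{O_{i-1}})\bigr].
\]
Since $o_i \notin O_{i-1}$, multilinearity of $F$ in coordinate $o_i$ rewrites the $i$-th summand as $(1 - y_{o_i}(t)) \cdot \partial_{o_i} F(y(t) \vee \characteristic_{O_{i-1}})$, and submodularity of the multilinear extension (equivalently, the marginal $\partial_{o_i} F$ is non-increasing in its base vector) upper-bounds this by $(1 - y_{o_i}(t)) \cdot \partial_{o_i} F(y(t))$. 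Summing over $i$ gives the key intermediate bound
\[
F(y(t) \vee \characteristic_{OPT}) - F(y(t)) \leq \sum_{o \in OPT}(1 - y_o(t)) \cdot \partial_o F(y(t)).
\]

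The next step converts this $OPT$-sum into the full $\cN$-sum appearing on the left-hand side of the lemma. Here I would invoke the algorithm's definition of $I(t)$: at each time $t$, $I(t)$ is presumably chosen as a maximiser of the linear objective $\sum_u (1 - y_u(t)) \cdot x_u \cdot \partial_u F(y(t))$ over $x$ in a feasible set that contains the integral point $\characteristic_{OPT}$. Since $\characteristic_{OPT}$ is feasible, $I(t)$ attains at least as much on this objective as $\characteristic_{OPT}$ would, which is exactly the inequality needed to conclude.

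Everything before the last step is standard bookkeeping on the multilinear extension and needs neither symmetry nor monotonicity of $f$. The main obstacle --- really the only algorithm-specific point --- is verifying that the feasible region over which $I(t)$ is optimised does contain $\characteristic_{OPT}$. For a down-monotone solvable polytope $\cP$ this should follow from integral feasibility of $OPT$ together with the fact that any $0/1$ vector dominated by a feasible point is itself feasible, but the precise justification must rely on the exact rule by which the algorithm defines its direction $I(t)$ at each time.
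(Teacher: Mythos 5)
Your proposal is correct and follows essentially the same route as the paper's proof: both bound $F(y(t)\vee\characteristic_{OPT})-F(y(t))$ by $w(t)\cdot\characteristic_{OPT}=\sum_{u\in OPT}(1-y_u(t))\,\partial_u F(y(t))$ via submodularity (your telescoping argument is just the standard unpacking of that appeal), and both then invoke the LP optimality of $I(t)$ over $\cP$ together with $\characteristic_{OPT}\in\cP$ and the identity $(1-y_u(t))\,\partial_u F(y(t))=F(y(t)\vee\characteristic_u)-F(y(t))$. Your hedges at the end are unnecessary: Line~4 of Algorithm~\ref{alg:MeasuredContinuousGreedyKeepDown} chooses $I(t)\in\cP$ maximizing $I(t)\cdot w(t)$, which is exactly the linear objective you describe, and $\characteristic_{OPT}\in\cP$ holds by definition since $OPT$ is the optimal point of $\cP\cap\{0,1\}^\cN$.
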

\begin{proof}
Let us calculate the weight of $OPT$ according to the weight function $w(t)$.
\[
    w(t) \cdot \characteristic_{OPT}
    =
    \sum_{u \in OPT} w_u(t)
    =
    \sum_{u \in OPT} \left[F(y(t) \vee \characteristic_u) - F(y(t))\right]
		\geq
    F(y(t) \vee \characteristic_{OPT}) - F(y(t))
		\enspace,
\]
where the inequality follows from submodularity. Since $\characteristic_{OPT} \in \cP$, we get:
\[
    w(t) \cdot I(t)
		\geq
		w(t) \cdot \characteristic_{OPT}
		\geq
		F(y(t) \vee \characteristic_{OPT}) - F(y(t))
		\enspace.
\]
Hence,
\begin{align*}
    \sum_{u \in \cN} (1 - y_u(t)) \cdot I_u(t) \cdot \partial_u F(y(t))
    &=
    \sum_{u \in \cN} I_u(t) \cdot [F(y(t) \vee \characteristic_u) - F(y(t))]
    =
    I(t) \cdot w(t)\\
    &\geq
    F(y(t) \vee \characteristic_{OPT}) - F(y(t)) \enspace. \qedhere
\end{align*}
\end{proof}
}

\newcommand{\ProofStepImprovement}
{
\begin{proof}
Notice that $\max_{u \in \cN} f(\{u\}) \leq f(OPT)$ by Reduction~\ref{re:singleton_ok}. Hence, by combining Lemmata~\ref{lem:greedy_step_increase_bound_general} and~\ref{lem:step_approximation}, we get:
\[
	F(y(t + \delta)) - F(y(t))
	\geq
	\delta \cdot [F(y(t) \vee \characteristic_{OPT}) - F(y(t))] - O(n^3\delta^2) \cdot f(OPT)
	\enspace,
\]
where $y(t + \delta)$ represents its value before the loop starting on Line~\ref{line:second_scan} of Algorithm~\ref{alg:MeasuredContinuousGreedyKeepDown}. The corollary follows by noticing that the above loop can only increase the value of $F(y(t + \delta))$.
\end{proof}
}

\newcommand{\LemmaRemovingOnlyDecrease}
{
\begin{lemma} \label{lem:removing_only_decrease}
$F(x) \leq F(y(t))$ for every $0 \leq t \leq T$ and vector $x \in [0, 1]^\cN$ such that $x \leq y(t)$.
\end{lemma}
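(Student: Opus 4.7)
The plan is to combine two facts already recorded in the general properties lemma earlier in the paper — the symmetry identity $F(x) = F(\characteristic_\cN - x)$ and the inequality $F(x) - F(y) \leq F(x - z) - F(y - z)$ valid whenever $0 \leq z \leq y \leq x$ — with a structural invariant of Algorithm~\ref{alg:MeasuredContinuousGreedyKeepDown}, namely that $y(t) \leq \characteristic_\cN / 2$ coordinate-wise for every $0 \leq t \leq T$. This invariant is precisely what the ``KeepDown'' modification of the measured continuous greedy is designed to enforce, and without it the claim would fail in general (consider a linear $f$ with $y(t)$ close to $\characteristic_\cN$). I would either invoke this invariant from a preceding lemma about the algorithm or establish it separately by induction on the differential update defining $y(t)$.

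Granted the invariant, the rest is a short algebraic manipulation. I set $x' := \characteristic_\cN - x$, $y' := \characteristic_\cN - y(t)$, and $z := \characteristic_\cN - y(t) - x$. From $x \leq y(t)$ one obtains $y' \leq x'$; the bound $z \geq 0$ reduces to $x + y(t) \leq \characteristic_\cN$, which follows because both $x$ and $y(t)$ are bounded above by $\characteristic_\cN / 2$; and $z \leq y'$ is simply $x \geq 0$. Applying part three of the general properties lemma to the triple $(x', y', z)$ yields
\[
F(\characteristic_\cN - x) - F(\characteristic_\cN - y(t)) \leq F(x' - z) - F(y' - z),
\]
and a direct check gives $x' - z = y(t)$ and $y' - z = x$, so the right-hand side equals $F(y(t)) - F(x)$. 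Symmetry then rewrites the left-hand side as $F(x) - F(y(t))$, and the resulting inequality $F(x) - F(y(t)) \leq F(y(t)) - F(x)$ is exactly $F(x) \leq F(y(t))$.

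The main obstacle I foresee is establishing (or locating in the paper) the KeepDown invariant $y(t) \leq \characteristic_\cN / 2$, since without it the statement is genuinely false. The rest is a mechanical substitution that uses symmetry to reflect the cube around its midpoint and convert the target monotonicity into the concavity-type inequality already furnished by the general properties lemma.
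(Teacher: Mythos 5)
The main step of your plan does not go through: the invariant $y(t) \leq \tfrac{1}{2}\characteristic_\cN$ that you attribute to the ``KeepDown'' modification is not what the algorithm enforces, and it is in fact false in general. The extra loop at Line~\ref{line:second_scan} of Algorithm~\ref{alg:MeasuredContinuousGreedyKeepDown} zeroes out a coordinate $u$ whenever $\partial_u F(y(t)) < 0$; it never caps coordinates at $\nicefrac{1}{2}$. The actual coordinate-wise bound one can extract from the update rule is $y_u(t) \leq 1 - (1-\delta)^{t/\delta} \approx 1 - e^{-t}$ (this is exactly Lemma~\ref{lem:max_y} for the analogous algorithm, and Lemma~III.6 of \cite{FNS11}), which already exceeds $\nicefrac{1}{2}$ once $t > \ln 2$. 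Since Theorem~\ref{thm:polytope_symmetric} is stated for arbitrary $T \geq 0$, you cannot assume $T \leq \ln 2$, so the condition $z = \characteristic_\cN - y(t) - x \geq 0$ in your reflection argument (which reduces to $x + y(t) \leq \characteristic_\cN$, i.e.\ $y(t) \leq \tfrac{1}{2}\characteristic_\cN$ in the worst case $x = y(t)$) simply need not hold.

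The correct invariant supplied by the KeepDown loop is weaker and of a different shape: after the loop, for each $u$ (processed incrementally, with intermediate vectors $y^i(t)$), either $\partial_{u} F(y^i(t)) \geq 0$ or $y_u(t) = 0$. The paper's proof exploits exactly this. It writes $F(y(t)) - F(x)$ via the chain rule as $\sum_i \int_0^1 (y_{u_i}(t) - x_{u_i}) \,\partial_{u_i} F\bigl((1-z)x + z\,y(t)\bigr)\,dz$, uses submodularity (the third part of Lemma~\ref{lem:general_properties} applied pointwise, noting $x \leq (1-z)x + z\,y(t) \leq y(t) \leq y^i(t)$) to lower bound each partial derivative by $\partial_{u_i} F(y^i(t))$, and then invokes the above invariant to conclude each term is non-negative. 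Your reflection through $\characteristic_\cN$ is a clean idea and it does use symmetry, which the paper's proof does not, but as written it rests on a coordinate bound that the algorithm does not guarantee, so it does not prove the lemma. To repair it you would have to replace the $\nicefrac{1}{2}$-bound with the genuine sign condition on partial derivatives, which essentially forces you back to the chain-rule argument.
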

\begin{proof}
First observe that the lemma is trivial for $t = 0$ since $y(0) = \characteristic_\varnothing$. Hence, we assume in the rest of the proof $t > 0$.

Let $u_1, u_2, \ldots, u_n$ be the order in which the algorithm scans the elements in the loop starting on Line~\ref{line:second_scan}. Let $y^i(t)$ be the vector $y(t)$ immediately after the iteration of this loop corresponding to $u_i$. Notice that $y(t) = y^n(t)$. Then,
\begin{align*}
	F(y(t))
	={} &
	F(x) + \sum_{i = 1}^n \int_0^1 (y_{u_i}(t) - x_{u_i}) \cdot \partial_{u_i} F((1 - z)x + z \cdot y(t)) dz\\
	\geq{} &
	F(x) + \sum_{i = 1}^n  \int_0^1 (y_{u_i}(t) - x_{u_i}) \cdot \partial_{u_i} F(y^i(t)) dz
	\enspace,
\end{align*}
where the equality follows from the chain rule and the inequality follows from submodularity and the observation that $x \leq y(t) \leq y^i(t)$. The algorithm guarantees that for every $u_i \in \cN$, either $\partial_{u_i} F(y^i(t)) \geq 0$ or $x_{u_i} = y_{u_i}(t) = 0$. Notice that, in both cases, $(y_{u_i}(t) - x_{u_i}) \cdot \partial_{u_i} F(y^i(t)) \geq 0$.
\end{proof}
}

\newcommand{\ProofGoodSolution}
{
\begin{proof}
Combining Lemmata~\ref{lem:union_bound_symmetric} and~\ref{lem:removing_only_decrease} implies $F(y(t) \vee \characteristic_{OPT}) \geq f(OPT) - F(y(t))$. The corollary now follows by plugging this inequality into Corollary~\ref{cor:step_improvment}.
\end{proof}
}

\newcommand{\LemmataGH}
{
Let $g(t)$ be defined as follows: $g(0) = 0$ and $g(t + \delta) = g(t) + \delta [f(OPT) - 2 \cdot g(t)]$. The next lemma shows that a lower bound on $g(t)$ also gives a lower bound on $F(y(t))$

\begin{lemma} \label{lem:g_bound2}
For every $0 \leq t \leq T$, $g(t) \leq F(y(t)) + O(n^3 \delta) \cdot t \cdot f(OPT)$.
\end{lemma}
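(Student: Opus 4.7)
The plan is to prove the bound by induction on $t$ in increments of $\delta$, using the one-step progress guarantee already established for the symmetric setting. The base case $t = 0$ is immediate: $g(0) = 0$ and $F(y(0)) = F(\characteristic_\varnothing) = f(\varnothing) \geq 0$ by non-negativity of $f$, so the inequality holds with zero error.

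For the inductive step, I will invoke the ``good solution'' corollary obtained by plugging Lemma~\ref{lem:union_bound_symmetric} together with Lemma~\ref{lem:removing_only_decrease} into Corollary~\ref{cor:step_improvment}, which yields a bound of the form
\[
F(y(t+\delta)) \geq F(y(t)) + \delta \cdot [f(OPT) - 2 F(y(t))] - O(n^3\delta^2)\cdot f(OPT),
\]
equivalently, $F(y(t+\delta)) \geq (1-2\delta)\, F(y(t)) + \delta \cdot f(OPT) - O(n^3\delta^2)\cdot f(OPT)$. The defining recurrence for $g$ rearranges to the clean linear form $g(t+\delta) = (1-2\delta)\, g(t) + \delta \cdot f(OPT)$, which is exactly parallel to the above inequality but without the error term. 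Assuming $\delta$ is small enough that $1 - 2\delta \geq 0$ (which is standard for the continuous-greedy discretization), I will multiply the inductive hypothesis $g(t) \leq F(y(t)) + O(n^3\delta)\cdot t \cdot f(OPT)$ by $(1-2\delta)$ and add $\delta \cdot f(OPT)$ to both sides. The leading terms on the right-hand side then combine to at most $F(y(t+\delta)) + O(n^3\delta^2)\cdot f(OPT)$, while the accumulated error is $(1-2\delta)\cdot O(n^3\delta)\cdot t \cdot f(OPT) + O(n^3\delta^2)\cdot f(OPT) \leq O(n^3\delta)\cdot (t+\delta)\cdot f(OPT)$, which closes the induction.

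The only real subtlety is the sign of the factor $(1-2\delta)$: because it is non-negative, multiplying the inductive bound by it preserves the direction of the inequality and makes the error term grow only additively rather than multiplicatively in $t$. Everything else is routine bookkeeping with the $O(\cdot)$ constants, and no additional appeal to submodularity, symmetry, or properties of the polytope $\cP$ is needed beyond what is already encapsulated in the one-step progress bound.
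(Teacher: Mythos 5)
Your proposal is correct and follows essentially the same induction as the paper: use the recurrence $g(t+\delta) = (1-2\delta)g(t) + \delta f(OPT)$ together with Corollary~\ref{cor:good_solution}, multiply the inductive hypothesis by the (non-negative) factor $(1-2\delta)$, and absorb the one-step $O(n^3\delta^2)f(OPT)$ error into the additive term to close the induction. Your explicit note about the sign of $1-2\delta$ (guaranteed since $\delta\leq n^{-5}$) is a small clarification the paper leaves implicit, but otherwise the argument is identical.
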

\begin{proof}
Let $c$ be the constant hiding behind the big $O$ notation in Corollary~\ref{cor:good_solution}. We prove by induction on $t$ that $g(t) \leq F(y(t)) + c n^3\delta^2 t \cdot f(OPT)$. For $t = 0$, $g(0) = 0 \leq F(y(0))$. Assume now that the claim holds for some $t$, and let us prove it for $t + \delta$. Corollary~\ref{cor:good_solution} gives:
\begin{align*}
    g(t + \delta)
    &=
    g(t) + \delta [f(OPT) - 2 \cdot g(t)]
    =
    (1 - 2\delta) g(t) + \delta \cdot f(OPT)\\
    &\leq
    (1 - 2\delta) [F(y(t)) + cn^3\delta t \cdot f(OPT)] + \delta \cdot f(OPT)\\
    &=
    F(y(t)) + \delta [f(OPT) - 2 \cdot F(y(t))] + c(1 - 2\delta) n^3\delta t \cdot f(OPT)\\
    &\leq
    F(y(t + \delta)) + cn^3\delta^2 \cdot f(OPT) + c(1 - 2\delta)n^3\delta t \cdot f(OPT)\\
    &\leq
    F(y(t + \delta)) + cn^3\delta(t + \delta) \cdot f(OPT)
		\enspace.
		\qedhere
\end{align*}
\end{proof}

The function $g$ is given by a recursive formula, thus, evaluating it is not immediate. Instead, we show that the function $h(t) = \nicefrac[]{1}{2} \cdot [1 - e^{-2t}] \cdot f(OPT)$ lower bounds $g$.
\begin{lemma} \label{lem:h_bound2}
For every $0 \leq t \leq T$, $g(t) \geq h(t)$.
\end{lemma}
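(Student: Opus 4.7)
I would prove Lemma~\ref{lem:h_bound2} by induction on $t$ (with step size $\delta$), using the fact that $h$ satisfies the ODE analogue of the recursion defining $g$ and is concave.

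More precisely, observe that $h'(t) = e^{-2t} \cdot f(OPT) = f(OPT) - 2h(t)$, so the recurrence $g(t+\delta) = g(t) + \delta[f(OPT) - 2g(t)]$ is exactly the forward Euler discretization of the ODE solved by $h$ with initial condition $h(0) = 0 = g(0)$. Since $h''(t) = -2e^{-2t} \cdot f(OPT) \leq 0$, the function $h$ is concave, which will give $h(t+\delta) \leq h(t) + \delta h'(t)$ by a standard tangent-line bound.

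The induction goes as follows. The base case $t = 0$ is immediate from $g(0) = 0 = h(0)$. For the inductive step, assume $g(t) \geq h(t)$. Rewriting the recursion as $g(t+\delta) = (1-2\delta) g(t) + \delta \cdot f(OPT)$ and using that $1 - 2\delta \geq 0$ (which holds for the small step sizes $\delta$ used in the algorithm), we obtain
\[
    g(t+\delta) \geq (1-2\delta) h(t) + \delta \cdot f(OPT) = h(t) + \delta \cdot [f(OPT) - 2h(t)] = h(t) + \delta \cdot h'(t).
\]
By concavity of $h$, the right-hand side is at least $h(t+\delta)$, completing the induction.

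The only mild subtlety is verifying that $1 - 2\delta \geq 0$ so that the inductive hypothesis $g(t) \geq h(t)$ can be used monotonically — but this is a parameter choice that the paper already commits to (otherwise the recursion for $g$ would not behave reasonably either). Everything else is a routine tangent-line computation, so I expect no real obstacle.
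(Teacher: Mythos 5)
Your proof is correct and takes essentially the same route as the paper's: the paper also argues by induction, bounding $h(t+\delta) = h(t) + \int_t^{t+\delta} h'(\tau)\,d\tau \leq h(t) + \delta e^{-2t} f(OPT) = (1-2\delta)h(t) + \delta f(OPT) \leq (1-2\delta)g(t) + \delta f(OPT) = g(t+\delta)$, where the first inequality is the integral form of your tangent-line/concavity bound and the second uses the induction hypothesis together with $1-2\delta \geq 0$. You have simply read the chain of inequalities in the opposite direction and made the use of $1-2\delta \geq 0$ explicit (which the paper leaves implicit but which indeed holds since $\delta \leq n^{-5}$).
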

\begin{proof}
The proof is by induction on $t$. For $t = 0$, $g(0) = 0 = \nicefrac[]{1}{2} \cdot [1 - e^{-2\cdot 0}] \cdot f(OPT) = h(0)$. Assume now that the lemma holds for some $t$, and let us prove it holds for $t + \delta$.
\begin{align*}
    h(t + \delta)
    &=
    h(t) + \int_t^{t + \delta} h'(\tau) d\tau
    =
    h(t) + f(OPT) \cdot \int_t^{t + \delta} e^{-2\tau} d\tau\\
    &\leq
    h(t) + f(OPT) \cdot \delta e^{-2t}
    =
    (1 - 2\delta) h(t) + \delta \cdot f(OPT)\\
    &\leq
    (1 - 2\delta) g(t) + \delta \cdot f(OPT)
    =
    g(t) + \delta [f(OPT) - 2 \cdot g(t)]
    =
    g(t + \delta) \enspace. \qedhere
\end{align*}
\end{proof}
}

\newcommand{\ProofReductionBase}
{
\begin{proof}
The reduction follows immediately from the proof of Corollary~5.3 in~\cite{LMNS10}. The idea is that if this is not the case, then let $\bar{k} = n - k$. It can be verified that $2\bar{k} \leq n$, that the problem $\max\{\bar{f}(S) : |S| = \bar{k}\}$ is equivalent to the original problem and that $\bar{f}$ is a a non-negative symmetric submodular function if and only if $f$ has these properties (in fact, if $f$ is symmetric then $f = \bar{f}$).
\end{proof}
}

\newcommand{\ProofYProperties}
{
\begin{proof}
We first prove the first part of the lemma by induction on $t$. For $t = 0$ the claim is trivial. Assume the claim holds for time $t$, and let us prove it for time $t + \delta$. By definition, for every element $u \in \cN$, either:
\[
	y^1_u(t + \delta) = 0
	\qquad
	\text{or}
	\qquad
	y^1_u(t + \delta) = y^1_u(t) + \delta I^1_u(t) \cdot (1 - y^1_u(t)) \leq y^1_u(t) + (1 - y^1_u(t)) = 1
	\enspace.
\]
Similarly, we also get either:
\[
	y^2_u(t + \delta) = 1
	\qquad
	\text{or}
	\qquad
	y^2_u(t + \delta) = y^2_u(t) - \delta I^2_u(t) \cdot y^2_u(t) \geq 0
	\enspace.
\]

To prove the second part of the lemma, let $y^3(t) = \characteristic_\cN - y^2(t)$ for every time $t$. It is easy to see that for every time $0 \leq t < T$ and element $u \in \cN$:
\[
	y^3_u(t + \delta) \leq y^3_u(t) + \delta I^2_u(t) \cdot (1 - y^3_u(t))
	\enspace.
\]
Think of $(y^1_u(t), y^3_u(t))$ as a point in the polytope $y^1_u + y^3_u \leq 1$. The density of this polytope is $1/2$, and thus, by Lemma~\ref{lem:feasibility},
\[
	y^1_u(T) + y^3_u(T) \leq 1
	\Rightarrow
	y^1_u(T) \leq y^2_u(T)
\]
as long as $T \leq -\nicefrac{n}{k} \cdot \ln(1 - k/n +n^{-4}) \leq -2\ln(1/2 +n^{-4})$.

To prove the third part of the lemma, notice that the densities of both polytopes $\sum_{u \in \cN} x_u \leq k$ and $\sum_{u \in \cN} x_u \leq n - k$ are at least $k/n$. Thus, by Lemma~\ref{lem:feasibility}, $|y^1(T)| \leq k$ and $|y^3(T)| \leq n - k$ as long as $T \leq -\nicefrac{n}{k} \cdot \ln(1 - k/n +n^{-4})$. The third part of the lemma follows from these observations since $|y^2(T)| = n - |y^3(T)|$.
\end{proof}
}

\newcommand{\ProofFeasiblity}
{
\begin{proof}
Consider first the case $|y^2(T)| = |y^1(T)|$. Since $y^1(T) \leq k \leq y^2(T)$, we must have $|y^1(T)| = k$. Hence, $y = y^1(T)$ is indeed feasible in this case.

Consider now the case $|y^1(T)| \neq |y^2(T)|$. In this case, the vector $y$ is defined by:
\[
	y^1(T) \cdot \frac{|y^2(T)| - k}{|y^2(T)| - |y^1(T)|} + y^2(T) \cdot \frac{k - |y^1(T)|}{|y^2(T)| - |y^1(T)|}
	\enspace.
\]
Observe that $y$ is a convex combination of $y^1(T)$ and $y^2(T)$, and thus, it is a vector in $[0, 1]^\cN$. Moreover,
\[
	|y|
	=
	|y^1(T)| \cdot \frac{|y^2(T)| - k}{|y^2(T)| - |y^1(T)|} + |y^2(T)| \cdot \frac{k - |y^1(T)|}{|y^2(T)| - |y^1(T)|}
	=
	k
	\enspace.
	\qedhere
\]
\end{proof}
}

\newcommand{\ProofConcave}
{
\begin{proof}
The non-negativity of $r$ follows immediately from the non-negativity of $F$. Thus, it only remains to prove that $r$ is concave. Let $\hat{r}(x) = y^1(T) + x(y^2(T) - y^1(T))$, \ie, $r(t) = F(\hat{r}(t))$. By the chain rule,
\[
	\frac{d r(x)}{d x}
	=
	\frac{d F(\hat{r}(x))}{d x}
	=
	\sum_{u \in \cN} \left[ \frac{\partial \hat{r}_u(x)}{\partial x} \cdot \partial_u F(\hat{r}(x))\right]
	=
	\sum_{u \in \cN} \left[ (y^2_u(T) - y^1_u(T)) \cdot \partial_u F(\hat{r}(x))\right]
	\enspace.
\]
By Lemma~\ref{lem:y_properties}, $y^2_u(T) - y^1_u(T)$ is a non-negative constant for every $u \in \cN$. On the other hand, by submodularity, $\partial_u F(\hat{r}(x))$ is a non-increasing function of $x$ since $\hat{r}(x)$ is a linear non-decreasing function (coordinate-wise). Hence, $\frac{d r(x)}{d x}$ is a non-increasing function of $x$.
\end{proof}
}

\newcommand{\ProofStepApproximationTwoSided}
{
\begin{proof}
Let us calculate the weights of $OPT$ and $\cOPT$ according to the weight functions $w^1(t)$ and $w^2(t)$, respectively.
\[
    w^1(t) \cdot \characteristic_{OPT}
    =
    \sum_{u \in OPT} w^1_u(t)
    =
    \sum_{u \in OPT} \left[F(y^1(t) \vee \characteristic_u) - F(y^1(t))\right]
		\geq
    F(y^1(t) \vee \characteristic_{OPT}) - F(y^1(t)) \enspace,
\]
and
\[
    w^2(t) \cdot \characteristic_{\cOPT}
    =
    \sum_{u \in \cOPT} w^2_u(t)
    =
    \sum_{u \in \cOPT} \left[F(y^2(t) \wedge \characteristic_{\cN - u}) - F(y^2(t))\right]
		\geq
    F(y^2(t) \wedge \characteristic_{OPT}) - F(y^2(t)) \enspace,
\]
where the inequalities follow from submodularity. Since $|OPT| = k$,
\begin{align*}
	&
	\min\{I^1(t) \cdot w^1(t) + 2 \cdot F(y^1(t)), I^t(t) \cdot w^2(t) + 2 \cdot F(y^2(t)) \}\\
	\geq{} &
	\min\{\characteristic_{OPT} \cdot w^1(t) + 2 \cdot F(y^1(t)), \characteristic_{\cOPT} \cdot w^2(t) + 2 \cdot F(y^2(t)) \}\\
	\geq{} &
	\min\{F(y^1(t) \vee \characteristic_{OPT}) + F(y^1(t)), F(y^2(t) \wedge \characteristic_{OPT}) + F(y^2(t)) \}
	=
	\Delta(t)
	\enspace.
\end{align*}
Hence,
\begin{align*}
    \sum_{u \in \cN} (1 - y^1_u(t)) \cdot I^1_u(t) \cdot \partial_u F(y^1(t))
    &=
    \sum_{u \in \cN} I^1_u(t) \cdot [F(y^1(t) \vee \characteristic_u) - F(y^1(t))]
    =
    I^1(t) \cdot w^1(t)\\
    &\geq
    \Delta(t) - 2 \cdot F(y^1(t)) \enspace,
\end{align*}
and
\begin{align*}
    -\sum_{u \in \cN} y^2_u(t) \cdot I^2_u(t) \cdot \partial_u F(y^2(t))
    &=
    \sum_{u \in \cN} I^2_u(t) \cdot [F(y^2(t) \wedge \characteristic_{\cN - u}) - F(y^2(t))]
    =
    I^2(t) \cdot w^2(t)\\
    &\geq
    \Delta(t) - 2 \cdot F(y^2(t)) \enspace.
		\qedhere
\end{align*}
\end{proof}
}

\newcommand{\ProofOptNotBad}
{
\begin{proof}
Let $u^*$ be the element of $\cN$ for which $f(\{u^*\}) = \max_{u \in \cN} f(\{u\})$, and let $A,B \subseteq \cN - u^*$ be two disjoint subsets of size $k - 1$ (there are such sets since $|\cN - u^*| = n - 1 \geq 2(k - 1)$). Then,
\[
	f(OPT)
	\geq
	\frac{f(A + u^*) + f(B + u^*)}{2}
	\geq
	\frac{f(\{u^*\})}{2}
	\enspace.
	\qedhere
\]
\end{proof}
}

\newcommand{\ProofDeltaBound}
{
\begin{proof}
It can be easily verified that Lemma~\ref{lem:removing_only_decrease} applies here (for $y^1(t)$), \ie, for every vector $x \leq y^1(t)$, $F(x) \leq F(y^1(t))$. Combining this observation with Lemma~\ref{lem:union_bound_symmetric} gives:
\begin{equation} \label{eqn:bound_positive}
	F(y^1(t) \vee \characteristic_{OPT})
	\geq
	f(OPT) - F(y^1(t))
	\enspace.
\end{equation}

Using an analogous proof to the one of Lemma~\ref{lem:removing_only_decrease}, it can be shown that for every vector $x \geq y^2(t)$, $F(x) \leq F(y^2(t))$. This implies that $\bar{F}(x) \leq \bar{F}(\characteristic_\cN - y^2(t))$ for every $x \leq \characteristic_\cN - y^2(t)$. Combing this observation with Lemma~\ref{lem:union_bound_symmetric} gives:
\begin{equation} \label{eqn:bound_negative}
	F(y^2(t) \wedge \characteristic_{OPT})
	=
	\bar{F}((\characteristic_\cN - y^2(t)) \vee \characteristic_{\cOPT})
	\geq
	\bar{f}(\cOPT) - \bar{F}(\characteristic_\cN - y^2(t))
	=
	f(OPT) - F(y^2(t))
	\enspace.
\end{equation}

The lemma now follows by plugging inequalities~\eqref{eqn:bound_positive} and~\eqref{eqn:bound_negative} into the definition of $\Delta(t)$.
\end{proof}
}
\section{Introduction} \label{sec:introduction}

The study of combinatorial problems with submodular objective functions has recently attracted much attention, and is motivated by the principle of economy of scale, prevalent in real world applications. Submodular functions are also commonly used as utility functions in economics and algorithmic game theory. Symmetric submodular functions are an important family of submodular functions capturing, for example, the mutual information function and cut functions of graphs and hypergraphs.

Minimization of symmetric submodular functions subject to various constrains and approximating such functions by other functions received a lot of attention~\cite{DDSSS12,D09,GS13,N10,Q98}. However, maximization of symmetric submodular functions was the subject of only limited research, despite an extensive body of works dealing with maximization of general non-negative submodular functions (see, \eg,~\cite{BFNS15,CCPV11,CVZ14,LSV10,V13}). In fact, we are only aware of two papers dealing with maximization of symmetric submodular functions.
First, \citeWithName{Feige et al.}{FMV11} show an $\nicefrac[]{1}{2}$-approximation algorithm for the problem of maximizing a symmetric submodular function subject to no constraint (which is the best possible). This result was later complemented by an algorithm achieving the same approximation ratio for general submodular functions~\cite{BFNS15}. Second, \citeWithName{Lee et al.}{LMNS10} show a $\nicefrac[]{1}{3}$-approximation algorithm for maximizing a symmetric submodular function subject to a general matroid base constraint.

In this work, we identify a few submodular maximization problems for which one can get a better approximation for symmetric objectives than the state of the art approximation for general submodular functions. Our first result is an improved algorithm for maximizing a non-negative symmetric submodular function\footnote{A set function $f\colon 2^\cN \to \bR^+$ is \emph{symmetric} if $f(S) = f(\cN \setminus S)$ for every set $S \in \cN$, and \emph{submodular} if $f(A) + f(B) \geq f(A \cup B) + f(A \cap B)$ for every pair of sets $A, B \subseteq \cN$.} $f\colon 2^\cN \to \bR^+$ subject to a down-monotone solvable polytope\footnote{A polytope $\cP \subseteq [0, 1]^\cN$ is \emph{solvable} if one can optimize linear functions subject to it, and \emph{down-monotone} if for every two vectors $x, y \in [0, 1]^\cN$, $x \leq y$ and $y \in \cP$ imply $x \in \cP$.} $\cP \subseteq [0, 1]^\cN$. More formally, given a set function $f\colon 2^\cN \to \bR$, its \emph{multilinear} extension is the function $F\colon [0, 1]^\cN \to \bR$ defined by $F(x) = \bE[f(\RSet(x))]$, where $\RSet(x)$ is a random set containing every element $u \in \cN$ with probability $x_u$, independently. Our result is an approximation algorithm for the problem $\max \{F(x) : x \in \cP\}$ whose approximation ratio is about: $\nicefrac[]{1}{2} \cdot [1 - (1 - d(\cP)^{2/d(\cP)})]$, where $d(\cP)$ is the density\footnote{Consider a representation of $\cP$ using $m$ inequality constraints, and let $\sum_{u \in \cN} a_{i, u} x_u \leq b_i$ denote the $i^{th}$ inequality constraint. By Section~3.A of~\cite{F13}, we may assume all the coefficients are non-negative and each constraint has at least one non-free non-zero coefficient. The {\em density} $d(\cP)$ of $\cP$ is defined as the maximum value of $\min_{1 \leq i \leq m} \frac{b_i}{\sum_{u \in \cN} a_{i,u}}$ for any such representation.} of $\cP$. In the following theorem, and throughout the paper, we use $n$ to denote $|\cN|$.

\begin{theorem} \label{thm:polytope_symmetric}
Given a non-negative symmetric submodular function $f\colon 2^\cN \to \bR^+$, a down-monotone solvable polytope $\cP \subseteq [0,1]^\cN$ and a constant $T \geq 0$, there exists an efficient algorithm that finds a point $x \in [0,1]^\cN$ such that $F(x) \geq \nicefrac[]{1}{2} \cdot [1 - e^{-2T} - o(1)] \cdot \max \{F(x) : x \in \cP \cap \{0,1\}^\cN\}$.
Additionally,
\begin{\enumerateEnv}[(a)]
    \item $x / T \in \cP$. \label{cond:feasibility_1}
    \item Let $T_{\cP} = -\ln(1 - d(\cP) + n^{-4})/d(\cP)$. Then, $T \leq T_{\cP}$ implies $x \in \cP$. \label{cond:feasibility_2}
\end{\enumerateEnv}
\end{theorem}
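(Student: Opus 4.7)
The plan is to analyze the output $y(T)$ of the algorithm implicit in the preceding lemmata, \emph{i.e.}, Algorithm~\ref{alg:MeasuredContinuousGreedyKeepDown} (a measured continuous greedy on $\cP$ augmented with a ``keep down'' sweep on Line~\ref{line:second_scan} that zeros out coordinates whose marginal $\partial_u F(y(t))$ has become non-positive). First I would apply Reduction~\ref{re:singleton_ok} so that we may assume $\characteristic_u \in \cP$ for every $u \in \cN$; this justifies the bound $\max_{u \in \cN} f(\{u\}) \leq f(OPT)$ that controls the $O(n^3 \delta^2) f(OPT)$ Taylor error terms appearing throughout.

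The core of the argument is a discrete differential inequality for $F(y(t))$. Lemma~\ref{lem:step_approximation} bounds the weighted marginal sum by $F(y(t) \vee \characteristic_{OPT}) - F(y(t))$, and standard multilinear-extension smoothness (Corollary~\ref{cor:step_improvment}) turns this into
\[
    F(y(t+\delta)) - F(y(t)) \geq \delta \cdot [F(y(t) \vee \characteristic_{OPT}) - F(y(t))] - O(n^3 \delta^2) \cdot f(OPT).
\]
Here I would exploit symmetry through Lemma~\ref{lem:union_bound_symmetric} combined with Lemma~\ref{lem:removing_only_decrease} (the keep-down step ensures $F(x) \leq F(y(t))$ for every $x \leq y(t)$), which yields $F(y(t) \vee \characteristic_{OPT}) \geq f(OPT) - F(y(t))$. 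This is the crucial step where symmetry doubles the effective gap; plugging in gives
\[
    F(y(t+\delta)) - F(y(t)) \geq \delta \cdot [f(OPT) - 2 F(y(t))] - O(n^3 \delta^2) \cdot f(OPT),
\]
which is the recurrence of Corollary~\ref{cor:good_solution}.

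To pass from this recurrence to a closed-form bound I would introduce the auxiliary sequence $g$ defined by $g(0)=0$ and $g(t+\delta) = g(t) + \delta[f(OPT) - 2 g(t)]$. Lemma~\ref{lem:g_bound2} transfers the recurrence into $g(t) \leq F(y(t)) + O(n^3\delta\, t)\cdot f(OPT)$ by induction, and Lemma~\ref{lem:h_bound2} shows $g(t) \geq h(t) = \tfrac{1}{2}[1-e^{-2t}]\cdot f(OPT)$, again by induction using $h'(t) = e^{-2t} f(OPT)$. Choosing $\delta = n^{-c}$ for a sufficiently large constant $c$ and specializing to $t=T$ gives
\[
    F(y(T)) \geq \tfrac{1}{2}[1-e^{-2T}] \cdot f(OPT) - o(1) \cdot f(OPT),
\]
which is the desired approximation ratio.

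For feasibility, part~(\ref{cond:feasibility_1}) is immediate: since each update adds a coordinate-wise scaled copy of some $I(t)\in\cP$ (and the keep-down sweep only decreases coordinates), $y(T)/T$ is a convex combination of vectors in the down-monotone set $\cP$. For part~(\ref{cond:feasibility_2}) I would invoke the feasibility/density lemma (Lemma~\ref{lem:feasibility}): the measured update $y_u(t+\delta) = y_u(t) + \delta I_u(t)(1 - y_u(t))$ makes $1 - y_u(t)$ decay roughly like $e^{-\int_0^t I_u(\tau)\,d\tau}$, so for $T \leq T_\cP = -\ln(1 - d(\cP) + n^{-4})/d(\cP)$ the vector $y(T)$ itself still lies in $\cP$ (the $n^{-4}$ slack absorbs the discretization error). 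The main obstacle is reconciling the keep-down sweep with these three components simultaneously: it must not increase any coordinate (so feasibility is preserved), it must not decrease $F$ (used to upgrade Corollary~\ref{cor:step_improvment}), and it must enable the upper bound $F(x) \leq F(y(t))$ for $x \leq y(t)$ used in the symmetry step — all three are handled by the sweep's invariant that after the iteration for $u_i$ we have either $\partial_{u_i} F(y^i(t)) \geq 0$ or $y_{u_i}(t) = 0$, exactly as exploited in the proof of Lemma~\ref{lem:removing_only_decrease}.
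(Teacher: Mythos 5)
Your proposal is correct and follows essentially the same structure as the paper's proof: Reduction~\ref{re:singleton_ok}, the step-wise improvement bound of Lemma~\ref{lem:step_approximation} and Corollary~\ref{cor:step_improvment}, the symmetry-based lower bound $F(y(t)\vee\characteristic_{OPT}) \geq f(OPT) - F(y(t))$ via Lemmata~\ref{lem:union_bound_symmetric} and~\ref{lem:removing_only_decrease}, the $g$/$h$ comparison functions, and Lemma~\ref{lem:feasibility} for feasibility. The only cosmetic difference is that you argue part~(\ref{cond:feasibility_1}) directly via a convex-combination domination argument while the paper invokes Lemma~\ref{lem:feasibility} for both parts, and you write $\delta = n^{-c}$ where the paper takes $\delta = T(\lceil n^5 T\rceil)^{-1}$; neither affects correctness.
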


Theorem~\ref{thm:polytope_symmetric} improves over the result of~\cite{FNS11}, who gave an approximation ratio of $e^{-1} - o(1)$ for the case of general submodular functions. More specifically, Theorem~\ref{thm:polytope_symmetric} provides an approximation ratio of at least $\nicefrac[]{1}{2} \cdot [1 - e^{-2}] - o(1) \geq 0.432$ for an arbitrary down-monotone solvable polytope since $T$ can always be set to be at least $1$. For many polytopes the fractional solution produced by Theorem~\ref{thm:polytope_symmetric} can be rounded using known rounding methods (see, \eg, pipage rounding~\cite{CCPV11}, swap rounding~\cite{CVZ11a} and contention resolution schemes~\cite{CVZ14}). For example, matroid polytopes allow rounding without any loss in the approximation ratio. Moreover, due to property~\eqref{cond:feasibility_1} of Theorem~\ref{thm:polytope_symmetric}, the combination of our algorithm with the contention resolution schemes rounding described by~\cite{CVZ14} produces better approximation ratios than might be expected by a black box combination (see~\cite{FNS11} for details).

Our next result considers the problem $\max \{f(S) : |S| = k\}$ for a non-negative symmetric submodular function $f\colon 2^\cN \to \bR^+$. For this problem we prove the following theorem.

\begin{theorem} \label{thm:uniform_base_symmetric_approximation}
There exists an efficient algorithm that given a non-negative symmetric submodular function $f\colon 2^\cN \to \bR^+$ and an integer cardinality parameter $1 \leq k \leq n/2$, achieves an approximation of $\nicefrac[]{1}{2}[1 - (1 - k/n)^{2n/k}] - o(1)$ for the problem: $\max \{f(S) : |S| = k\}$. If $k > n / 2$, then the same result holds with the cardinality parameter replaced by $n - k$.
\end{theorem}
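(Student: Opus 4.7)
The plan is to extend the one-sided measured continuous greedy underlying Theorem~\ref{thm:polytope_symmetric} to a \emph{two-sided} process that simultaneously maintains a growing fractional vector $y^1(t)$, initialized at $\characteristic_\varnothing$ and updated via a greedy direction $I^1(t)$ with element weights $w^1_u(t) = F(y^1(t) \vee \characteristic_u) - F(y^1(t))$, and a shrinking vector $y^2(t)$, initialized at $\characteristic_\cN$ and updated via $I^2(t)$ with weights $w^2_u(t) = F(y^2(t) \wedge \characteristic_{\cN - u}) - F(y^2(t))$. Both directions are chosen inside a uniform-matroid-style budget so that at the final time $T = -\nicefrac[]{n}{k}\ln(1 - k/n + n^{-4})$ the guarantees of Lemma~\ref{lem:y_properties} hold: $|y^1(T)| \leq k \leq |y^2(T)|$ and $y^1(T) \leq y^2(T)$ coordinate-wise. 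This choice of $T$ is the largest for which the density-based feasibility argument still applies.

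Given these invariants, Lemma~\ref{lem:feasibility} produces a convex combination $y$ of $y^1(T)$ and $y^2(T)$ with $|y| = k$, and the concavity-along-the-segment lemma implies $F(y) \geq \min\{F(y^1(T)), F(y^2(T))\}$, so it suffices to lower-bound the two endpoint values. The two-sided step-approximation lemma states
\[
	\min\bigl\{I^1(t)\cdot w^1(t) + 2 F(y^1(t)),\ I^2(t)\cdot w^2(t) + 2 F(y^2(t))\bigr\} \geq \Delta(t),
\]
where $\Delta(t) = \min\{F(y^1(t)\vee\characteristic_{OPT}) + F(y^1(t)),\ F(y^2(t)\wedge\characteristic_{OPT}) + F(y^2(t))\}$. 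The accompanying bound $\Delta(t) \geq f(OPT)$ invokes symmetry ($\bar F = F$) twice: on the $y^1$ side it is the usual symmetric union bound obtained from Lemmata~\ref{lem:union_bound_symmetric} and~\ref{lem:removing_only_decrease}, and on the $y^2$ side it is the analogous statement applied to $\bar F$ at $\characteristic_\cN - y^2(t)$.

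Combining these with a per-step improvement estimate analogous to Corollary~\ref{cor:step_improvment} yields, for a function $g(t)$ that lower-bounds $\min\{F(y^1(t)), F(y^2(t))\}$ up to an $O(n^3\delta)\cdot t \cdot f(OPT)$ error (Lemma~\ref{lem:g_bound2}), the recursion $g(t+\delta) = g(t) + \delta[f(OPT) - 2 g(t)]$, which is dominated by $h(t) = \nicefrac[]{1}{2}[1 - e^{-2t}]f(OPT)$ (Lemma~\ref{lem:h_bound2}). Substituting $T$ gives $e^{-2T} = (1 - k/n + n^{-4})^{2n/k} = (1 - k/n)^{2n/k} + o(1)$, hence $F(y) \geq \nicefrac[]{1}{2}[1 - (1-k/n)^{2n/k}]\,f(OPT) - o(1) \cdot f(OPT)$. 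Rounding $y$ losslessly on the uniform matroid via pipage or swap rounding produces an integral set of size exactly $k$ with the same expected value, and the case $k > n/2$ is handled by running the whole procedure on $\bar f$ with cardinality parameter $n - k \leq n/2$ as in the reduction proved elsewhere.

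The main obstacle is engineering the two flows to reinforce each other rather than conflict. The concavity step requires $y^1(T) \leq y^2(T)$ coordinate-wise, and the single bound $\Delta(t) \geq f(OPT)$ requires both sides to be simultaneously amenable to a symmetric union bound. Both requirements rest on the symmetry of $f$ together with the tight calibration of $T$ that barely preserves the density-based feasibility invariants of Lemma~\ref{lem:y_properties}; pushing $T$ any larger would break either $|y^1(T)| \le k$ or $y^1(T) \le y^2(T)$, and so the resulting approximation ratio is exactly what the recursion for $g$ can sustain over the admissible interval.
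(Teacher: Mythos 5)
Your proposal follows essentially the same route as the paper: the double measured continuous greedy with two synchronized flows $y^1$ (growing from $\characteristic_\varnothing$) and $y^2$ (shrinking from $\characteristic_\cN$) constrained by $I^1 + I^2 = \characteristic_\cN$, the invariants of Lemma~\ref{lem:y_properties}, the concavity-along-the-segment argument to lower-bound $F(y)$ by $\min\{F(y^1(T)), F(y^2(T))\}$, the two-sided bound $\Delta(t) \geq f(OPT)$ via Lemma~\ref{lem:union_bound_symmetric} applied once to $F$ and once to $\bar{F}$, and the $g/h$ recursion. The only nit is a misattribution: the convex combination and its feasibility come from the algorithm's final line together with Corollary~\ref{cor:feasibility}, not from Lemma~\ref{lem:feasibility}, which only supplies the polytope-membership invariants feeding into Lemma~\ref{lem:y_properties}.
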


Notice that Theorem~\ref{thm:uniform_base_symmetric_approximation} achieves for the problem $\max \{f(S) : |S| = k\}$ the same approximation ratio achieved by Theorem~\ref{thm:polytope_symmetric} for the problem $\max \{f(S) : |S| \leq k\}$ (as long as $k \leq n/2$). Using the same technique we get a result also for the more well-studied case of general (non-symmetric) submodular functions.

\begin{theorem} \label{thm:uniform_base_general_approximation}
There exists an efficient algorithm that given a non-negative submodular function $f\colon 2^\cN \to \bR^+$ and an integer cardinality parameter $1 \leq k \leq n$, achieves an approximation of $e^{-1} - o(1)$ for the problem: $\max \{f(S) : |S| = k\}$.
\end{theorem}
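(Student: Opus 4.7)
The plan is to run the same two-sided measured continuous greedy algorithm used to prove Theorem~\ref{thm:uniform_base_symmetric_approximation}, but carry out an analysis that no longer appeals to the identity $f = \bar f$. Recall the algorithm maintains two fractional vectors: $y^1(t)$ starts at $\characteristic_\varnothing$ and grows inside the polytope $\{I\in[0,1]^\cN : |I|\leq k\}$, while $y^2(t)$ starts at $\characteristic_\cN$ and shrinks inside the polytope $\{I\in[0,1]^\cN : |I|\leq n-k\}$, with the coordinate-wise invariant $y^1(t)\leq y^2(t)$ preserved throughout. Stopping at $T=1$, the combination of Lemma~\ref{lem:y_properties} with the feasibility and concavity lemmas used in the symmetric case produces a convex combination $y$ of $y^1(1)$ and $y^2(1)$ satisfying $|y|=k$ and $F(y)\geq \min\{F(y^1(1)),F(y^2(1))\}$. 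A lossless pipage (or swap) rounding on the uniform matroid then yields an integer set $S$ with $|S|=k$ and $\bE[f(S)]\geq F(y)$. Hence it suffices to prove $F(y^1(1)),F(y^2(1)) \geq (1/e-o(1))\cdot f(OPT)$.

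For the $y^1$ side, the evolution $\frac{dy^1_u}{dt}=I^1_u(t)\cdot(1-y^1_u(t))$ forces $y^1_u(t)\leq 1-e^{-t}$ for every $u\in\cN$. The standard non-monotone ``no-blocking'' inequality $F(x\vee\characteristic_{OPT})\geq (1-\max_u x_u)\,f(OPT)$ of~\cite{FNS11} then yields $F(y^1(t)\vee\characteristic_{OPT})\geq e^{-t}\,f(OPT)$. Plugging this bound into the one-sided form of Lemma~\ref{lem:step_approximation} (which does not invoke symmetry) together with the standard quadratic-error step inequality gives
\[
    F(y^1(t+\delta)) - F(y^1(t)) \geq \delta\bigl[e^{-t}\cdot f(OPT) - F(y^1(t))\bigr] - O(n^3\delta^2)\cdot f(OPT).
\]
Proceeding as in Lemmata~\ref{lem:g_bound2} and~\ref{lem:h_bound2} but with the reference function $h(t)=t e^{-t}\cdot f(OPT)$, which satisfies $h'(t)=e^{-t}f(OPT)-h(t)$ and $h(0)=0$, one obtains $F(y^1(1))\geq e^{-1}\cdot f(OPT) - o(f(OPT))$.

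For the $y^2$ side the same analysis applies to the dual function $\bar f$. Setting $z(t)=\characteristic_\cN-y^2(t)$, the first part of the general-properties lemma gives $F(y^2(t))=\bar F(z(t))$, and the shrinking evolution of $y^2$ corresponds to the growth $\frac{dz_u}{dt}=I^2_u(t)(1-z_u(t))$. One checks that the weight vector $w^2(t)$ maximized by the algorithm equals the $\bar F$-marginal weight vector at $z(t)$, so that $z(t)$ evolves exactly as measured continuous greedy on the non-negative submodular function $\bar f$ over the polytope $|I|\leq n-k$. Since $\bar f(\cOPT)=f(OPT)$ and $|\cOPT|=n-k$, the very same ODE calculation, now with $\cOPT$ in place of $OPT$, delivers $F(y^2(1))=\bar F(z(1))\geq e^{-1}\cdot f(OPT) - o(f(OPT))$.

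The main technical obstacle is the bookkeeping in the $y^2$-side reduction to a one-sided measured continuous greedy on $\bar f$: one must verify that $I^2(t)$ is chosen as the maximizer of the $\bar F$-marginal weight vector at $z(t)$ over $|I|\leq n-k$, and that the feasibility bounds (together with the $y^1\leq y^2$ invariant) comfortably admit $T=1$. The latter holds because $-\ln(1-d+n^{-4})/d\geq 1$ for every $d\in(0,1)$ and because the invariant bound in Lemma~\ref{lem:y_properties} is $\approx 2\ln 2>1$. Once these checks are in place, the two independent one-sided analyses glue via the concavity of $F$ on the segment from $y^1(1)$ to $y^2(1)$ to yield the desired $1/e-o(1)$ approximation.
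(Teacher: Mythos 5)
The broad skeleton of your argument does coincide with the paper's: two coupled measured-continuous-greedy trajectories $y^1$ and $y^2$, the coordinate bound $\max\{y^1_u(t),1-y^2_u(t)\}\leq 1-e^{-t}+O(\delta)$ combined with the FNS11 no-blocking inequality $F(x\vee\characteristic_S)\geq(1-\max_u x_u)f(S)$, the ODE with solution $h(t)=te^{-t}f(OPT)$, and the concavity glue $F(y)\geq\min\{F(y^1(1)),F(y^2(1))\}$ are all exactly what the paper uses (Appendix~\ref{sec:uniform_base_approximation_general}). But the step at the heart of the argument is wrong as you state it, and it is not a matter of bookkeeping.

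You invoke ``the one-sided form of Lemma~\ref{lem:step_approximation}'' for each trajectory separately, and for $y^2$ you say one must ``verify that $I^2(t)$ is chosen as the maximizer of the $\bar F$-marginal weight vector at $z(t)$ over $|I|\leq n-k$.'' This is false, and cannot be made true: the algorithm picks $(I^1(t),I^2(t))$ \emph{jointly}, subject to $I^1+I^2=\characteristic_\cN$, so as to maximize a $\min$ of two affine forms. Neither $I^1(t)$ nor $I^2(t)$ is individually the LP maximizer of its own $I^i\cdot w^i$ term, and the proof of Lemma~\ref{lem:step_approximation} breaks exactly at the line $I(t)\cdot w(t)\geq\characteristic_{OPT}\cdot w(t)$. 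The paper's fix is Lemma~\ref{lem:step_approximation_two_sided_general}: compare the maximizing pair against the \emph{feasible pair} $(\characteristic_{OPT},\characteristic_{\cOPT})$, conclude that both coordinates of the $\min$ are at least $\Delta(t)=\min\{F(y^1(t)\vee\characteristic_{OPT}),F(y^2(t)\wedge\characteristic_{OPT})\}$, and only then deduce both one-sided step inequalities simultaneously.

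Compounding this, you propose to run the symmetric Algorithm~\ref{alg:DoubleContinuousGreedy} unchanged (merely stopping at $T=1$). But that algorithm's Line~\ref{line:calc_I} maximizes $\min\{I^1w^1+2F(y^1),\,I^2w^2+2F(y^2)\}$. With the constant $2F(y^i)$ the argument above yields only
\[
I^1(t)\cdot w^1(t)\ \geq\ e^{-t}f(OPT)+\min\{F(y^1(t)),F(y^2(t))\}-2F(y^1(t))\enspace,
\]
which degenerates to $e^{-t}f(OPT)-2F(y^1(t))$ in the worst case; the resulting ODE integrates to $e^{-1}-e^{-2}\approx0.23$, short of $e^{-1}$. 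The factor-$2$ offset is tuned to Lemma~\ref{lem:union_bound_symmetric}, which supplies the extra ``$-F(x)$'' needed in the symmetric case. For a general objective the paper uses Algorithm~\ref{alg:DoubleContinuousGreedy2}, whose LP objective is $\min\{I^1w^1+F(y^1),\,I^2w^2+F(y^2)\}$, and also drops the coordinate-resetting loop (which was only needed to enforce the hypothesis of Lemma~\ref{lem:union_bound_symmetric}). So the proposal needs both a modified algorithm and the joint-pair comparison argument before it proves the claimed $e^{-1}-o(1)$ bound. The feasibility and concavity parts of your sketch are fine.
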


Theorems~\ref{thm:uniform_base_symmetric_approximation} and~\ref{thm:uniform_base_general_approximation} improve over results achieved by~\cite{BFNS14} when $k/n \leq 0.204$ and $k/n \leq 0.093$, respectively. Most practical applications of maximizing a submodular function subject to a cardinality constraint use instances having relatively small $k/n$ ratios, and thus, can benefit from our improvements (see~\cite{BFNS14} for a list of such applications). We complement Theorem~\ref{thm:uniform_base_symmetric_approximation} by showing that one cannot get an approximation ratio better than $\nicefrac[]{1}{2}$ for any ratio $k/n$.

\begin{theorem} \label{thm:hardness_uniform_symmetric}
Consider the problems $\max\{f(S) : |S| = p/q \cdot n\}$ and $\max\{f(S) : |S| \leq p/q \cdot n\}$ where $p < q$ are positive constant integers and $f$ is a non-negative symmetric submodular function $f\colon 2^\cN \to \bR^+$ obeying $n / q \in \bZ$. Then, every algorithm with an approximation ratio of $\nicefrac[]{1}{2} + \eps$ for one of the above problems (for any constant $\eps > 0$) uses an exponential number of value oracle queries.\footnote{See Section~\ref{sec:preliminaries} for the definition of value oracles.}
\end{theorem}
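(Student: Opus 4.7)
The plan is to invoke the symmetry gap framework of~\cite{V13} on an instance where the ``asymmetric'' function and its symmetrization across $S_n$ have a value ratio of exactly $2c(1-c)\le 1/2$, where $c=p/q$. The cardinality constraints $|S|=k$ and $|S|\le k$ are both invariant under the full symmetric group on $\cN$, so the natural symmetrization is over all of $S_n$, which will make the symmetrized function depend only on $|S|$.

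First, because any symmetric $f$ satisfies $f(S)=f(\cN\setminus S)$, the maxima of $f$ over $\{|S|=k\}$ and over $\{|S|=n-k\}$ coincide; hence it suffices to treat $p/q\le 1/2$, the other case reducing to this by replacing $k$ with $n-k$. With $c=p/q\le 1/2$ and $k=pn/q$, I partition $\cN$ into sets $A$, $B$, $D$ with $|A|=|B|=k$ and $|D|=n-2k\ge 0$, and let $f(S)=|S\cap A|\cdot|B\setminus S|+|A\setminus S|\cdot|S\cap B|$, i.e., the cut function of the complete bipartite graph on $(A,B)$ extended by zeros to $\cN$. This $f$ is non-negative, submodular (cut functions are), and satisfies $f(\cN\setminus S)=f(S)$ by direct expansion. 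Taking $S=A$ gives $f(A)=k^{2}$, so $\max\{f(S):|S|\le k\}\ge\max\{f(S):|S|=k\}\ge k^{2}$.

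Next I define $\tilde f(S)=\mathbb{E}_{\pi\in S_n}[f(\pi(S))]$, a non-negative symmetric submodular function (as an average of such) that depends only on $|S|$. Writing $a=|\pi(S)\cap A|$ and $b=|\pi(S)\cap B|$, a direct multivariate-hypergeometric moment computation gives
\[
\tilde f(s)=2k\,\mathbb{E}[a]-2\,\mathbb{E}[ab]=2c^{2}\,s(n-s)+O(n)
\]
for every $0\le s\le n$. Since $s\mapsto s(n-s)$ is increasing on $[0,n/2]$ and $k\le n/2$, the value of $\tilde f$ on either constraint equals $\tilde f(k)=2c(1-c)\,k^{2}\,(1+o(1))\le \tfrac{1}{2}\cdot k^{2}\cdot(1+o(1))$.

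Finally, I invoke the symmetry gap theorem of~\cite{V13}: since the constraint is $S_n$-invariant, for a uniformly random $\pi\in S_n$ no algorithm making subexponentially many value-oracle queries to $f\circ\pi$ can output a solution of value exceeding $(\max\tilde f)\cdot(1+o(1))$, even when restricted to $\{|S|=k\}$ or $\{|S|\le k\}$. Combined with $\max f\ge k^{2}$, this rules out $(\tfrac{1}{2}+\eps)$-approximations for either problem, and the complement reduction handles $p/q>1/2$. The main obstacle will be the standard but delicate indistinguishability argument underlying the last step: for a random $\pi$ and any polynomially-bounded adaptive query sequence, one must show that each queried set is ``generic'' with overwhelming probability, so that $f(\pi(S))$ concentrates at $\tilde f(|S|)$ up to a $1+o(1)$ factor and no adaptive strategy can route queries toward the non-generic exceptions; this is what forces the query lower bound to be exponential in $n$.
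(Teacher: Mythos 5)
Your approach has a fundamental flaw, and there is a quick sanity check that exposes it. You compute a ``symmetry gap'' of $2c(1-c)$ with $c=p/q$, which for $c=1/4$ equals $3/8$. If the hardness argument went through, no algorithm could exceed $3/8+\eps$ for $|S|=n/4$; but Theorem~\ref{thm:uniform_base_symmetric_approximation} of this very paper gives an algorithm achieving roughly $0.449$ at $c=1/4$. The source of this contradiction is that you symmetrize over all of $S_n$, whereas the symmetry gap framework of~\cite{V13} requires symmetrization over a group $\cG$ under which the \emph{objective} $f$ is invariant. Your $f$ is only invariant under $(S_A\times S_B\times S_D)\rtimes\bZ_2$ (swap $A$ and $B$), not $S_n$. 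Symmetrizing over the correct group, one finds the maximum of $F$ on symmetrized points is $k^2/2$ (achieved at $\bar{x}$ equal to $1/2$ on $A\cup B$ and $0$ on $D$), so the gap is exactly $1/2$, matching the theorem.

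Even setting the group issue aside, the ``main obstacle'' you identify at the end is not a delicate technicality to be deferred — for your unsmoothed cut function it is simply false. An algorithm can query $f(\{u\})$ for each $u$ to find $A\cup B$ (value $k$ vs.\ $0$), then query $f(\{u,v\})$ for pairs in $A\cup B$: same-side pairs return $2k$ while opposite-side pairs return $2k-2$, so $O(n^2)$ exact value queries recover $A$ and $B$ completely, after which outputting $A$ gives a $1$-approximation. This is precisely the leak that the smoothing construction of Lemma~\ref{lem:to_fractional} (Lemma~3.2 of~\cite{V13}) is designed to plug: $\hat{F}$ and $\hat{G}$ are engineered to agree in an $\ell_2$-ball around the symmetrization, so typical (blown-up) queries reveal nothing. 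The paper's proof invokes this correctly and much more economically: it exhibits a fixed base instance on $2q$ elements — the single-edge cut function $f_{p,q}(S)=\mathbf{1}[|\{1,2q\}\cap S|=1]$ with the involution $\sigma(i)=2q-i+1$ — verifies the gap is $1/2$ there, and cites Theorem~1.8 of~\cite{V13} (whose statement already packages the smoothing, blow-up, and indistinguishability steps) after checking the problem class is closed under refinement. To repair your proof you would need to replace the $S_n$-symmetrization with symmetrization over $\mathrm{Stab}(f)$, identify a bounded-size base instance whose blow-up yields your large instance, and route the argument through the smoothed functions rather than the raw cut function.
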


The result of Theorem~\ref{thm:hardness_uniform_symmetric} follows quite easily from the symmetry gap framework of~\cite{V13} and is known for the case of general submodular functions as well as for some pairs of $p$ and $q$ (\eg, the case $\nicefrac[]{p}{q} = \nicefrac[]{1}{2}$ follows immediately from the work of~\cite{V13}). We give the theorem here mainly for completeness, and defer its proof to Appendix~\ref{sec:hardness}.

We also consider the unconstrained submodular maximization problem (\ie,\inJournal{ the problem} $\max \{f(S) : S \subseteq \cN\}$). For symmetric submodular functions, \citeWithName{Feige et al.}{FMV11} give for this problem a simple linear-time randomized algorithm and a slower deterministic local search, both achieving an optimal approximation ratio of $\nicefrac[]{1}{2}$ (up to a low order error term in the case of the local search). We show that for such functions there exists also a \emph{deterministic linear-time} $\nicefrac[]{1}{2}$-approximation algorithm.

\begin{theorem} \label{thm:deterministic_symmetric_unconstrained}
There exists a deterministic linear-time $\nicefrac[]{1}{2}$-approximation algorithm for the problem $\max \{f(S) : S \subseteq \cN\}$, where $f\colon 2^\cN \to \bR^+$ is a non-negative symmetric submodular function.
\end{theorem}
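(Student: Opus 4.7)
The plan is to design a deterministic one-pass algorithm with $O(n)$ oracle queries to $f$ and to verify its $\nicefrac{1}{2}$-approximation ratio via a potential-function argument tuned to symmetric submodular functions. I propose using the deterministic double-greedy framework: maintain nested sets $X \subseteq Y$ (initially $\emptyset$ and $\cN$), and for each element $u_i$ compute the two marginals $\alpha_i = f(X + u_i) - f(X)$ and $\beta_i = f(Y - u_i) - f(Y)$, performing whichever move (insert $u_i$ into $X$ or delete $u_i$ from $Y$) has the larger marginal; the output is the common final set $X_n = Y_n$.

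The analysis employs the potential $\Psi_i = 2 f(OPT_i) + f(X_i) + f(Y_i)$ on the interpolated optimum $OPT_i = (X_i \cup OPT) \cap Y_i$. Symmetry enters globally through $f(\emptyset) = f(\cN)$: combined with $X_n = Y_n$, this gives $\Psi_0 = 2 f(OPT) + 2 f(\emptyset)$ and $\Psi_n = 4 f(X_n)$, so $\Psi_n \geq \Psi_0$ would directly imply $f(X_n) \geq f(OPT)/2 + f(\emptyset)/2 \geq f(OPT)/2$. It therefore suffices to show that $\Psi_i$ is non-decreasing, equivalently, that the sharpened per-step inequality
\[ 2 \, [f(OPT_{i-1}) - f(OPT_i)] \;\leq\; \max(\alpha_i, \beta_i) \]
holds at each step $i$.

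Symmetry enters the per-step analysis through the marginal complementation identity: for any $u \in S$, $f(S) - f(S-u) = -[f(\cN \setminus S + u) - f(\cN \setminus S)]$, so the marginal of $u_i$ at $OPT_{i-1}$ (or $OPT_{i-1} - u_i$) equals, up to sign, its marginal at the complementary set $\cN \setminus OPT_{i-1}$. Combined with submodularity, this rewriting lets the per-step loss $f(OPT_{i-1}) - f(OPT_i)$---which in the general submodular case is only known to be bounded by $\max(\alpha_i,\beta_i)$ (yielding the standard $\nicefrac{1}{3}$-approximation)---be charged against at most half of $\max(\alpha_i,\beta_i)$, thereby doubling the effective per-step credit.

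The main obstacle will be verifying the sharpened per-step inequality across the four cases indexed by whether $u_i \in OPT$ and by the sign of $\alpha_i - \beta_i$. In each case the complementation identity pushes the marginal under analysis onto the set $\cN \setminus OPT_{i-1}$, which is not nested with either $X_{i-1}$ or $Y_{i-1}$; submodularity must therefore be chained through an intermediate set (for instance $X_{i-1} \cup (\cN \setminus OPT_{i-1})$ or $Y_{i-1} - u_i$) to finally bound the loss in terms of $\alpha_i$ and $\beta_i$ alone. This chained comparison, powered by the complementation identity available only when $f$ is symmetric, is the principal technical content of the proof.
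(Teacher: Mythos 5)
Your algorithm is the same double-greedy as in the paper, but the per-step inequality your potential $\Psi_i = 2f(OPT_i) + f(X_i) + f(Y_i)$ requires --- namely $2\,[f(OPT_{i-1}) - f(OPT_i)] \leq \max(\alpha_i,\beta_i)$ --- is false, even for symmetric $f$. Take $\cN=\{u,v,w\}$ and let $f$ be the cut function of the path $u{-}v{-}w$, so that $f(\{v\})=f(\{u,w\})=2$, $f(\{u\})=f(\{w\})=f(\{u,v\})=f(\{v,w\})=1$, and $f(\emptyset)=f(\cN)=0$. Set $OPT=\{v\}$ and process elements in the order $u,v,w$. At step $1$ we get $\alpha_1=\beta_1=1$, the algorithm inserts $u$ into $X$, and $OPT_1=(\{v\}\cup\{u\})\cap\cN=\{u,v\}$, so $f(OPT_0)-f(OPT_1)=2-1=1$. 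Your inequality would demand $2 \le \max(\alpha_1,\beta_1) = 1$. The complementation identity $f(S)-f(S-u) = -[f(\bar{S}+u)-f(\bar{S})]$ re-expresses the same loss from the complementary side but cannot manufacture the missing factor of $\nicefrac{1}{2}$; chaining submodularity through intermediate sets re-derives only the general-submodular bound $f(OPT_{i-1})-f(OPT_i)\le\max(\alpha_i,\beta_i)$. The per-step loss really can exhaust the full gain.

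The paper's proof keeps the same algorithm but uses a structurally different accounting: it tracks \emph{two} interpolants, $OPT_i=(OPT\cup X_i)\cap Y_i$ and $\cOPT_i=(\cOPT\cup X_i)\cap Y_i$ where $\cOPT=\cN\setminus OPT$, and proves (Lemma~\ref{lem:loss_gain_bound})
\[
[f(OPT_{i-1})-f(OPT_i)]+[f(\cOPT_{i-1})-f(\cOPT_i)] \;\le\; [f(X_i)-f(X_{i-1})]+[f(Y_i)-f(Y_{i-1})]\enspace,
\]
with no factor of $2$. This is provable because at each step exactly one of the two losses on the left vanishes identically ($u_i$ is already in whichever of $OPT_{i-1},\cOPT_{i-1}$ corresponds to the side of $OPT$ it lies on), and the remaining loss is bounded by the same submodularity comparison as in the $\nicefrac{1}{3}$-analysis. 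Symmetry is invoked just once, after telescoping: $[f(OPT)-f(X_n)]+[f(\cOPT)-f(X_n)] \le 2f(X_n)$ together with $f(\cOPT)=f(OPT)$ yields $f(X_n)\ge f(OPT)/2$. Equivalently, the correct potential is $\Phi_i = f(OPT_i)+f(\cOPT_i)+f(X_i)+f(Y_i)$ rather than $2f(OPT_i)+f(X_i)+f(Y_i)$: when $f$ is symmetric these agree at $i=0$ and $i=n$, but only $\Phi_i$ is non-decreasing along the run.
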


Theorem~\ref{thm:deterministic_symmetric_unconstrained} improves over the time complexity of the local search algorithm of~\cite{FMV11} and also avoids the low order error term. It is interesting to note that a deterministic algorithm with the same approximation ratio (but a worse time complexity) for the case of general submodular functions was only very recently presented by \citeWithName{Buchbinder and Feldman}{BF16}.

Our final result considers a variant of the {\SWfull} problem (\SW). An instance of {\SW} consists of $m$ players $p_1, p_2, \dotsc, p_m$ and $n$ items $\cN$. Each player $p_i$ is associated with a non-negative submodular utility function $u_p\colon 2^\cN \to \bR^+$. The objective is to find a partition $S_1, S_2, \dotsc S_m$ of the items maximizing $\sum_{i = 1}^m u_p(S_i)$. We consider the case of identical utility functions, \ie, the utility function $u_p$ is identical for all players. This problem is interesting for two reasons. First, it generalizes $\max \{f(S) : S \subseteq \cN\}$ for symmetric submodular functions.\footnote{When $f$ is symmetric, the problem $\max \{f(S) : S \subseteq \cN\}$ is equivalent to {\SW} with two players having $f$ as their common utility function.} Second, it is related to the Submodular Multiway Partition problem considered by~\cite{CE11b,CE11,EVW13,ZNI05}.

\begin{theorem} \label{thm:identical_submodular_welfare}
There exists a linear-time $[1 - (1 - 1/k)^{k - 1}]$-approximation algorithm for {\SW} with $k$ players having identical non-negative submodular utility functions. Moreover, any algorithm for this problem whose approximation ratio is $[1 - (1 - 1/k)^{k - 1}] + \eps$ (for some constant $\eps > 0$) must use an exponential number of value oracle queries.
\end{theorem}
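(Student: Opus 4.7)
The plan is to analyze the trivial randomized algorithm that assigns each item of $\cN$ independently and uniformly at random to one of the $k$ players. This clearly runs in linear time. By linearity of expectation, each player's set is a random subset in which every element appears with probability $1/k$, so the expected welfare equals $k \cdot U((1/k)\mathbf{1}_\cN)$, where $U$ denotes the multilinear extension of $u$. The algorithmic claim therefore reduces to establishing
\[
  k \cdot U((1/k)\mathbf{1}_\cN) \;\geq\; [1 - (1-1/k)^{k-1}] \cdot \opt.
\]

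To prove this key inequality, I would fix an optimal partition $S^*_1, \dots, S^*_k$ and write $\lambda := 1 - (1-1/k)^{k-1}$. Two facts are combined. First, iterated submodularity (using $u \geq 0$) gives $u\bigl(\bigcup_j R^{(j)}\bigr) \leq \sum_j u(R^{(j)})$ for $k-1$ independent $(1/k)$-random samples $R^{(1)}, \dots, R^{(k-1)}$; since the union then contains each element with probability exactly $\lambda$, taking expectations yields $U(\lambda \mathbf{1}_\cN) \leq (k-1)\,U((1/k)\mathbf{1}_\cN)$. Second, submodularity implies that for every $S \subseteq \cN$ the univariate map $t \mapsto U(t \mathbf{1}_S)$ is concave on $[0,1]$, so $U(\lambda \mathbf{1}_{S^*_i}) \geq (1-\lambda) u(\emptyset) + \lambda u(S^*_i) \geq \lambda u(S^*_i)$, and summing over $i$ gives $\sum_i U(\lambda \mathbf{1}_{S^*_i}) \geq \lambda \cdot \opt$. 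A careful coupling of the uniform random partition with the independent per-part $\lambda$-samples then bridges the two quantities $k U((1/k)\mathbf{1})$ and $\sum_i U(\lambda \mathbf{1}_{S^*_i})$, closing the chain. The main obstacle is that $u$ may be non-monotone, ruling out pointwise comparisons such as $u(R) \geq u(R \cap S^*_i)$; the argument must therefore operate entirely through expectations and global submodularity.

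For the hardness I would invoke the symmetry gap framework of Vondrak. It suffices to exhibit a single identical-utility \SWfull{} instance whose best symmetric (under the player permutation group) multilinear-relaxation value equals $\lambda$ times its integer optimum, since Vondrak's theorem then automatically blows this instance up into a family for which any algorithm with approximation ratio $\lambda + \eps$ must make exponentially many value oracle queries. A concrete witness lives on ground set $\cN = [k]$ with $u(S) = g(|S|)$ for a non-negative concave $g$ satisfying $g(0) = g(k) = 0$, $g(1) = 1$, and intermediate values tuned so that $U((1/k)\mathbf{1}) = \lambda$ exactly while the singleton partition (integer optimum) attains welfare $k$. For $k = 3$ the choice $g(2) = 1/2$ works and is easily verified to be submodular, and an analogous concave $g$ exists for every $k$; plugging such an instance into the symmetry gap machinery yields the matching $[\lambda + \eps]$ exponential query lower bound.
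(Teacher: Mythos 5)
Your algorithm (random assignment) and your hardness strategy (a symmetry-gap argument around a concave ``cut-like'' gap instance on $[k]$) match the paper exactly, but the analysis of the approximation ratio contains a genuine gap. The two facts you establish --- (a) $U(\lambda\characteristic_\cN)\le(k-1)\,U((1/k)\characteristic_\cN)$ by iterated subadditivity of $k-1$ independent $(1/k)$-samples, and (b) $\sum_i U(\lambda\characteristic_{S^*_i})\ge\lambda\cdot\opt$ by concavity along $\characteristic_{S^*_i}$ --- are both correct, but they do not chain: the bridge you would need, namely $k\,U((1/k)\characteristic_\cN)\ge\sum_i U(\lambda\characteristic_{S^*_i})$, is \emph{false} in general for non-monotone $u$. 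Take $k=2$, $\cN=\{a,b\}$, and the non-negative submodular function $u$ with $u(\emptyset)=u(\{a\})=u(\{b\})=1$ and $u(\cN)=0$. Then $k\,U((1/k)\characteristic_\cN)=2\cdot\tfrac34=\tfrac32$, while with $S^*_1=\{a\}$, $S^*_2=\{b\}$ and $\lambda=\tfrac12$ each $U(\lambda\characteristic_{S^*_i})=1$, so the right-hand side equals $2>\tfrac32$. (The theorem itself still holds here, since $\lambda\cdot\opt=1\le\tfrac32$.) Routing instead through fact (a), that is, showing $\tfrac{k}{k-1}U(\lambda\characteristic_\cN)\ge\lambda\cdot\opt$, already fails on the paper's own tight instance for $k=3$, where the left-hand side is $\tfrac{65}{81}\approx 0.80$ and the right-hand side is $\tfrac{5}{3}$. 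So no coupling can close your chain; the quantities $U(\lambda\characteristic_{S^*_i})$ simply do not control the algorithm's expected welfare.

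What is missing is an argument that exploits the \emph{disjointness} of the optimal parts, which neither of your two facts uses. Both of the paper's analyses hinge on it. The proof in Section~\ref{ssc:positive_welfare} takes a uniformly random ordering $\pi$ of the $k$ optimal parts, sets $T_i=\bigcup_{j\le i}OPT_{\pi(j)}$, and proves a recursive lower bound on $\bE[f(T_i(k^{-1}))]$ using Lemma~\ref{lem:equal_prob_bound} and Lemma~\ref{lem:max_probability_max_damage}; the crucial point is that because the $OPT_j$ are disjoint, elements of $OPT_{\pi(i)}$ never appear in $T_{i-1}$ and all other elements appear with probability at most $(i-1)/(k-1)$. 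The alternative proof in Appendix~\ref{app:SWOriginal} expands $\bE[f(\bigcup_i OPT_i(1/k))]$ over all index subsets via Lemma~\ref{lem:equal_prob_repeated_bound} and then invokes Lemma~\ref{lem:disjoint_unions}, which again relies on disjointness to bound $\bE[f(\RSet(\cO,h))]$ in terms of $\sum_i f(OPT_i)$. Your hardness outline is essentially the paper's: the concave $g$ you describe is the paper's $f(S)=1-(|S|-1)/(k-1)$ (for $S\ne\emptyset$), and the blow-up is the symmetry-gap construction carried out explicitly in Section~\ref{ssc:negative_welfare} (Lemmas~\ref{lem:to_fractional}, \ref{lem:implies_submodular}, \ref{lem:f_g_opt_bounds}, \ref{lem:no_distinguish}) rather than invoked as a black box, but that is a matter of verification, not substance. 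The real missing piece is the positive part.
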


Theorem~\ref{thm:identical_submodular_welfare} improves over a result of \citeWithName{Iwata et al.}{ITY16}, who give \inArXiv{a $\nicefrac{1}{2}$-approximation}\inJournal{an approximation of $\nicefrac{1}{2}$} for {\SW} with identical non-negative \emph{symmetric} utility functions. Interestingly, Theorem~\ref{thm:identical_submodular_welfare} also shows that {\SW} with identical utility functions is a rare example of a submodular maximization problem with a non-monotone\footnote{A submodular function $f\colon 2^\cN \to \bR^+$ is \emph{monotone} if $f(A) \leq f(B)$ for every two sets $A \subseteq B \subseteq \cN$.} objective having an approximation ratio strictly better than $\nicefrac[]{1}{2}$ (for $k > 2$). On the other hand, the hardness result of Theorem~\ref{thm:identical_submodular_welfare} complements a result of \citeWithName{Khot et al.}{KLMM08} who showed that, even when the utility functions have a succinct representation (and thus, can be evaluated directly instead of being accessed by a value oracle), no polynomial time algorithm can obtain a better than $(1 - 1/e)$-approximation for {\SW} with identical \emph{monotone} utility functions unless $P = NP$.

\subsection{Our Techniques}

Some of our results are based on variants of the measured continuous greedy algorithm of~\cite{FNS11}. We modify the measured continuous greedy in two main ways.

\inArXiv{\begin{itemize}}\inJournal{\begin{longitem}}
\item The analysis of~\cite{FNS11} relies on the observation that $F(\characteristic_{OPT} \vee x) \geq [1 - \max_{u \in \cN} x_u] \cdot f(OPT)$ for an arbitrary vector $x \in [0, 1]^\cN$.\footnote{For every set $S \subseteq \cN$, we use $\characteristic_{S}$ to denote the characteristic vector of $S$. Given two vectors $x, y \in [0, 1]^\cN$, we use $x \vee y$ to denote the coordinate-wise maximum of $x$ and $y$. In other words, for every $u \in \cN$, $(x \vee y)_u = \max\{x_u, y_u\}$. Similarly, $x \wedge y$ denotes the coordinate-wise minimum of $x$ and $y$.} To get better results for symmetric functions we use an alternative lower bound on $F(\characteristic_{OPT} \vee x)$ given by Lemma~\ref{lem:union_bound_symmetric}.

\begin{lemma} \label{lem:union_bound_symmetric}
Given a non-negative symmetric submodular function $f\colon 2^\cN \to \bR^+$, a set $S \subseteq \cN$ and a vector $x \in [0, 1]$ obeying $F(y) \leq F(x)$ for every $\{y \in [0, 1]^\cN : y \leq x\}$, then $F(\characteristic_S \vee x) \geq f(S) - F(x)$.
\end{lemma}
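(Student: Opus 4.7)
The plan is to turn the desired lower bound on $F(\characteristic_S \vee x)$ into a statement about a vector supported on $\bar S := \cN \setminus S$, where submodularity applied to a clean disjoint partition of $\bar S$ yields the bound, and then to invoke the local optimality hypothesis on $x$ to finish.

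Concretely, I would sample $R = \RSet(x)$ and apply submodularity of $f$ to the disjoint pair $\bar S \cap R$ and $\bar S \cap \bar R$, whose union is $\bar S$ and whose intersection is $\varnothing$. This gives $f(\bar S \cap R) + f(\bar S \cap \bar R) \geq f(\bar S) + f(\varnothing) \geq f(S)$, where the first inequality is submodularity and the last step uses non-negativity of $f$ together with the symmetry identity $f(\bar S) = f(S)$. Taking expectations over $R$ and recognizing each side as a multilinear value yields
\[
    F(x \wedge \characteristic_{\bar S}) + F((\characteristic_\cN - x) \wedge \characteristic_{\bar S}) \geq f(S) \enspace.
\]

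Next, I would invoke the symmetry property of the multilinear extension (the first part of the general properties lemma, which gives $F(y) = F(\characteristic_\cN - y)$ when $f$ is symmetric) together with the routine coordinate-wise check $\characteristic_\cN - [(\characteristic_\cN - x) \wedge \characteristic_{\bar S}] = \characteristic_S \vee x$, to rewrite the previous display as $F(x \wedge \characteristic_{\bar S}) + F(\characteristic_S \vee x) \geq f(S)$. Since $x \wedge \characteristic_{\bar S} \leq x$ coordinate-wise, the hypothesis on $x$ gives $F(x \wedge \characteristic_{\bar S}) \leq F(x)$, and the lemma follows by rearranging.

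The only non-mechanical move is choosing to partition $\bar S$ (rather than $S$) when applying submodularity; this is precisely what allows symmetry to convert the resulting expectation into $F(\characteristic_S \vee x)$ and what places the remaining summand in a position where the local optimality hypothesis is directly applicable. Everything else reduces to direct computation from the definitions of $F$ and $\RSet$.
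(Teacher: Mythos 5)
Your proof is correct, and it takes a modestly different route from the paper's, differing in order and in what is invoked. The paper starts by rewriting $f(S) - F(x \vee \characteristic_S)$ as $f(\bar{S}) - F((\characteristic_\cN - x) \wedge \characteristic_{\bar{S}})$ via symmetry, then bounds the latter by $F(x \wedge \characteristic_{\bar{S}}) - f(\varnothing)$ using the third bullet of Lemma~\ref{lem:general_properties} (the packaged ``difference'' inequality $F(a) - F(b) \leq F(a-z) - F(b-z)$), finishing with non-negativity and the hypothesis on $x$. You instead establish $F(x \wedge \characteristic_{\bar{S}}) + F((\characteristic_\cN - x) \wedge \characteristic_{\bar{S}}) \geq f(S)$ directly, by applying the set-level submodularity inequality to the partition $\bar{S} = (\bar{S} \cap R) \cup (\bar{S} \cap \bar{R})$ and taking expectations over $R = \RSet(x)$ (correctly: linearity of expectation handles the correlation between $R$ and $\bar{R}$), and only afterward invoke the symmetry of $F$ to identify $F((\characteristic_\cN - x) \wedge \characteristic_{\bar{S}})$ with $F(\characteristic_S \vee x)$. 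After rearranging, the paper's chain contains exactly this inequality, so the two arguments agree on the crux; you simply bypass the third bullet of Lemma~\ref{lem:general_properties} (whose own proof is a coupled-sampling argument) and work with the discrete submodularity condition directly, which makes your version more self-contained at the cost of being slightly longer than the paper's one-line chain. One small slip: the symmetry identity $F(y) = F(\characteristic_\cN - y)$ is the second, not the first, item of Lemma~\ref{lem:general_properties}.
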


Using the bound given by Lemma~\ref{lem:union_bound_symmetric} in the analysis requires a slight modification of the measured continuous greedy algorithm to guarantee that its solution always obeys the requirements of the lemma. We defer the proof of Lemma~\ref{lem:union_bound_symmetric} to Section~\ref{sec:preliminaries}.

\item The measured continuous greedy algorithm can handle only constraints specified by a down-monotone polytope. Thus, it cannot handle problems of the form $\max \{f(S) : |S| = k\}$. To bypass this difficulty, we use two instances of the measured continuous greedy algorithm applied to the problems $\max \{f(S) : |S| \leq k\}$ and $\max \{{f}(\cN \setminus S) : |S| \leq n - k\}$. Note that the optimal solutions of both problems are at least as good as the optimal solution of $\max \{f(S) : |S| = k\}$. A careful correlation of the two instances preserves their approximation ratios, and allows us to combine their outputs into a solution for $\max \{f(S) : |S| = k\}$ achieving the same approximation ratio.
\inArXiv{\end{itemize}}\inJournal{\end{longitem}}

Our result for the problem $\max \{f(S) : S \subseteq \cN \}$ is based on a linear-time deterministic algorithm suggested by~\cite{BFNS15} for this problem. \citeWithName{Buchbinder et al.}{BFNS15} showed that this algorithm has an approximation ratio of $\nicefrac[]{1}{3}$ for general non-negative submodular functions. The algorithm maintains two solutions $X$ and $Y$ that become identical when the algorithm terminates. The analysis of the algorithm is based on a set $OPT(X, Y)$ that starts as $OPT$ and converts gradually to the final value of $X$ (and $Y$). The key observation of the analysis is showing that in each iteration (of the algorithm) the value of $OPT(X, Y)$ deteriorates by at most the increase in $f(X) + f(Y)$. In this work we show that the exact same algorithm provides $\nicefrac[]{1}{2}$-approximation for non-negative symmetric submodular functions. To that aim, we consider two sets $OPT(X, Y)$ and $\cOPT(X, Y)$. These sets start as $OPT$ and $\cOPT = \cN \setminus OPT$ respectively, and convert gradually into the final value of $X$ (and $Y$). We prove that the deterioration of $f(OPT(X, Y)) + f(\cOPT(X, Y))$ lower bounds the increase in $f(X) + f(Y)$.

\subsection{Related Work}
The literature on submodular maximization problems is very large, and therefore, we mention below only a few of the most relevant works.
\citeWithName{Feige et al.}{FMV11} provided the first constant factor approximation algorithms for $\max \{f(S) : S \subseteq \cN\}$. Their best approximation algorithm achieved an approximation ratio of $\nicefrac[]{2}{5} - o(1)$. \citeWithName{Oveis Gharan and Vondr\'{a}k}{GV11} used simulated annealing techniques to provide an improved approximation of roughly $0.41$. \citeWithName{Feldman et al.}{FNS11b} combined the algorithm of~\cite{GV11} with a new algorithm, yielding an approximation ratio of roughly $0.42$. Finally, \citeWithName{Buchbinder et al.}{BFNS15} gave a $\nicefrac[]{1}{2}$-approximation for this problem, matching a lower bound proved by~\cite{FMV11}.

The problem of maximizing a (not necessary monotone) submodular function subject to a general matroid constraint was given a $0.309$-approximation by~\cite{V13}.
Using simulated annealing techniques this was improved to $0.325$~\cite{GV11}, and shortly later was further pushed to $\nicefrac[]{1}{e}-o(1)$ by~\cite{FNS11} via the measured continuous greedy algorithm. Recently, \citeWithName{Buchbinder et al.}{BFNS14} showed that for the problem $\max \{f(S) : |S| \leq k\}$ (which is a special case of a matroid constraint) it is possible to get an approximation ratio in the range $[\nicefrac[]{1}{e} + \eps, \nicefrac[]{1}{2} - o(1)]$ for some small constant $\eps > 0$ (the exact approximation ratio in this range depends on the ratio $k / n$). A hardness result of $0.491$ was given by~\cite{GV11} for the case $k \ll n$.

The problem of maximizing a (not necessary monotone) submodular function subject to a matroid base constraint was shown to have no constant approximation ratio by~\cite{V13}. \citeWithName{Buchbinder et al.}{BFNS14} showed that the special case of $\max \{f(S) : |S| = k\}$ admits an approximation ratio in the range $[0.356, \nicefrac{1}{2} - o(1)]$ (again, the exact approximation ratio within this range depends on the ratio $k / n$). On the other hand, the hardness of $0.491$ by~\cite{GV11} applies also to this problem when $k \ll n$.

The {\SWfull} problem was studied in the case of monotone utility functions. The greedy algorithm achieves $\nicefrac[]{1}{2}$-approximation for this problem~\cite{FNW78}. This was improved to $1/(2 - \nicefrac[]{1}{k})$ by~\cite{DS06}, and then to $1 - \nicefrac[]{1}{e}$ by~\cite{CCPV11} using the celebrated continuous greedy algorithm. Finally, \citeWithName{Feldman et al.}{FNS11} gave a $\left(1 - (1 - \nicefrac[]{1}{k})^k\right)$-approximation algorithm, matching the hardness result of~\cite{MSV08}.
\section{Preliminaries} \label{sec:preliminaries}

For every set $S \subseteq \cN$ and an element $u \in \cN$, we denote the union $S \cup \{u\}$ by $S + u$, the expression $S \setminus \{u\}$ by $S - u$ and the set $\cN \setminus S$ by $\bar{S}$. Additionally, we use $\characteristic_S$ and $\characteristic_u$ to denote the characteristic vectors of $S$ and $\{u\}$, respectively. Given a submodular function $f\colon 2^\cN \to \bR$ and its corresponding multilinear extension $F\colon [0, 1]^\cN \to \bR$, we denote the partial derivative of $F$ at a point $x\in [0,1]^\cN$ with respect to an element $u$ by $\partial_u F(x)$. Since $F$ is multilinear, $\partial_u F(x) = F(x \vee \characteristic_u) - F(x \wedge \characteristic_{\cN - u})$. Additionally, we use $\bar{f}$ and $\bar{F}$ to denote the functions $\bar{f}(S) = f(\cN \setminus S)$ and $\bar{F}(x) = F(\characteristic_\cN - x)$. Finally, given a vector $x \in [0, 1]^\cN$, we denote $|x| = \sum_{u \in \cN} x_u$. 

We look for algorithms of  polynomial in $n$ (the size of $\cN$) time complexity. However, an explicit representation of a submodular function might be exponential in the size of its ground set. The standard way to bypass this difficulty is to assume access to the function via a \emph{value oracle}. For a submodular function $f\colon 2^\cN \to \bR$, given a set $S \subseteq \cN$, the value oracle returns the value of $f(S)$. Some of our algorithms assume a more powerful oracle that given a vector $x \in [0, 1]^\cN$, returns the value of $F(x)$. If such an oracle is not available, one can approximate it arbitrarily well using a value oracle to $f$ by averaging enough samples, which results in an $o(1)$ loss in the approximation ratio of the relevant algorithms (which has already been taken into account in the results presented in Section~\ref{sec:introduction}). This is a standard practice (see, \eg, \cite{CCPV11}), and we omit details.

The following lemma gives a few useful properties of submodular functions used throughout the paper.

\begin{lemma} \label{lem:general_properties}
If $f\colon 2^\cN \to \bR$ is a submodular function and $F\colon [0, 1]^\cN \to \bR$ is its multilinear extension, then:
\begin{\itemizeEnv}
	\item For every vector $x \in [0, 1]^\cN$, $\bar{F}(x) = F(\characteristic_\cN - x)$ is the multilinear extension of $\bar{f}$.
	\item If $f$ is symmetric, then for every vector $x \in [0, 1]^\cN$, $F(x) = \bar{F}(x)$.
	\item For every three vectors $z \leq y \leq x \in [0, 1]^\cN$, $F(x) - F(y) \leq F(x - z) - F(y - z)$.
\end{\itemizeEnv}
\end{lemma}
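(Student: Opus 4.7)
My plan is to handle the three parts separately, with parts 1 and 2 being essentially unfolding the definitions and part 3 requiring a coupling argument.

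For part 1, I would unfold the definition of the multilinear extension: by definition $F(x) = \bE[f(\RSet(x))]$, where $\RSet(x)$ includes each $u$ with probability $x_u$ independently. Substituting $\characteristic_\cN - x$ for $x$, the random set $\RSet(\characteristic_\cN - x)$ includes $u$ with probability $1 - x_u$, which is precisely the distribution of $\cN \setminus \RSet(x)$. Hence $\bar{F}(x) = F(\characteristic_\cN - x) = \bE[f(\cN \setminus \RSet(x))] = \bE[\bar{f}(\RSet(x))]$, so $\bar{F}$ is the multilinear extension of $\bar{f}$. Part 2 is then an immediate corollary: if $f$ is symmetric then $f = \bar{f}$ as set functions, so their multilinear extensions $F$ and $\bar{F}$ coincide.

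For part 3, I would use the standard threshold coupling. For each $u \in \cN$, draw an independent uniform threshold $t_u \in [0,1]$, and define $\RSet(x), \RSet(y), \RSet(z)$ simultaneously by including $u$ whenever $t_u$ is at most the corresponding coordinate. Since $z \leq y \leq x$ coordinate-wise, this gives $\RSet(z) \subseteq \RSet(y) \subseteq \RSet(x)$ almost surely, while each set has the correct marginal distribution. Moreover, $\RSet(x) \setminus \RSet(z)$ contains $u$ exactly when $z_u < t_u \leq x_u$, which happens with probability $x_u - z_u$, so this set is distributed as $\RSet(x - z)$, and similarly $\RSet(y) \setminus \RSet(z) \sim \RSet(y - z)$.

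The main step is then the inequality
\[
f(\RSet(x)) - f(\RSet(y)) \leq f(\RSet(x) \setminus \RSet(z)) - f(\RSet(y) \setminus \RSet(z)),
\]
which I would derive from submodularity applied to the pair $A' = \RSet(x) \setminus \RSet(z) \supseteq \RSet(y) \setminus \RSet(z) = B'$ and the disjoint set $C = \RSet(z)$; the diminishing-returns form says $f(A' \cup C) - f(A') \leq f(B' \cup C) - f(B')$, which rearranges to exactly what we need. Taking expectations on both sides and using the marginal equalities from the previous paragraph yields $F(x) - F(y) \leq F(x - z) - F(y - z)$, completing the argument. I do not expect any real obstacle here — the only subtlety is making explicit that the coupling is legitimate (each $\RSet(\cdot)$ on its own has the right distribution) before invoking submodularity pointwise.
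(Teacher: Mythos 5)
Your proposal is correct and follows essentially the same route as the paper: unfolding definitions for parts 1 and 2 (the paper writes $F(x)=\bE[f(\RSet(x))]=\bE[\bar{f}(\RSet(x))]=\bar{F}(x)$, which is the same as your $f=\bar{f}$ observation), and the identical threshold coupling for part 3. Your only addition is spelling out the diminishing-returns application with $A'=\RSet(x)\setminus\RSet(z)$, $B'=\RSet(y)\setminus\RSet(z)$, $C=\RSet(z)$, which the paper leaves implicit; this is a presentation difference, not a different argument.
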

\ProofGeneralProperties

We are now ready to give the promised proof of Lemma~\ref{lem:union_bound_symmetric}.
\begin{replemma}{lem:union_bound_symmetric}
Given a non-negative symmetric submodular function $f\colon 2^\cN \to \bR^+$, a set $S \subseteq \cN$ and a vector $x \in [0, 1]$ obeying $F(y) \leq F(x)$ for every $\{y \in [0, 1]^\cN : y \leq x\}$, then $F(\characteristic_S \vee x) \geq f(S) - F(x)$.
\end{replemma}
\begin{proof}
Since $f$ is symmetric,
\[
	f(S) - F(x \vee \characteristic_S)
	=
	f(\bar{S}) - F((\characteristic_{\cN} - x) \wedge \characteristic_{\bar{S}})
	\leq
	F(x \wedge \characteristic_{\bar{S}}) - f(\varnothing)
	\leq
	F(x \wedge \characteristic_{\bar{S}})
	\leq
	F(x)
	\enspace,
\]
where the equality and first inequality hold by Lemma~\ref{lem:general_properties}, the second inequality holds by the non-negativity of $f$ and the last inequality holds since $x \wedge \characteristic_{\bar{S}} \leq x$.
\end{proof}

The following lemma shows that the multilinear extension behaves like a linear function within small neighborhoods. Similar lemmata appear in many works. A proof of this specific lemma can be found in~\cite{F13} (as Lemma~2.3.7).

\begin{lemma} \label{lem:greedy_step_increase_bound_general}
Consider two vectors $x, x' \in [0, 1]^\cN$ such that for every $u \in \cN$, $|x_u - x'_u| \leq \delta$. Then, $F(x') - F(x) \geq \sum_{u \in \cN} (x'_u - x_u) \cdot \partial_u F(x) - O(n^3\delta^2) \cdot \max_{u \in \cN} f(\{u\})$.
\end{lemma}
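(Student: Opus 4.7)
My plan is to exploit the multilinearity of $F$ via a telescoping argument that peels off the first-order term exactly and leaves a second-order remainder controlled by submodularity. The intuition is that $F$ is literally linear in each coordinate, so the only source of error in replacing $F(x') - F(x)$ by the first-order Taylor term $\sum_u (x'_u - x_u)\partial_u F(x)$ is the mixed partial derivatives, and these are globally bounded via submodularity and non-negativity.

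First I would fix an arbitrary ordering $u_1,\ldots,u_n$ of $\cN$ and define intermediate vectors $x = x^{(0)}, x^{(1)}, \ldots, x^{(n)} = x'$ where $x^{(i)}$ agrees with $x^{(i-1)}$ except that coordinate $u_i$ has been switched from $x_{u_i}$ to $x'_{u_i}$. Since $F$ is linear in each coordinate, the single-coordinate move is exact:
\[
F(x^{(i)}) - F(x^{(i-1)}) = (x'_{u_i} - x_{u_i}) \cdot \partial_{u_i} F(x^{(i-1)})\enspace,
\]
so summing over $i$ gives $F(x') - F(x) = \sum_{i} (x'_{u_i} - x_{u_i}) \partial_{u_i} F(x^{(i-1)})$. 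Next, I apply the same telescoping idea to the multilinear function $\partial_{u_i} F$ itself, switching the coordinates $u_1,\ldots,u_{i-1}$ one at a time from $x$ to $x^{(i-1)}$:
\[
\partial_{u_i} F(x^{(i-1)}) - \partial_{u_i} F(x) = \sum_{j < i} (x'_{u_j} - x_{u_j}) \cdot \partial_{u_j}\partial_{u_i} F(z^{(i,j)})\enspace,
\]
for appropriate intermediate points $z^{(i,j)} \in [0,1]^\cN$. Substituting back expresses the error $F(x') - F(x) - \sum_u (x'_u - x_u)\partial_u F(x)$ as a double sum of at most $n^2$ products, each of the form (coordinate difference) $\times$ (coordinate difference) $\times$ (second partial derivative).

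The main obstacle is bounding the mixed second partial $|\partial_{u_j}\partial_{u_i} F(z)|$. By multilinearity, this quantity equals the expectation, over a random set $R \subseteq \cN \setminus \{u_i,u_j\}$ drawn according to the remaining coordinates of $z$, of the discrete second derivative $f(R+u_i+u_j) - f(R+u_i) - f(R+u_j) + f(R)$. By the triangle inequality each such term is bounded by $4 \max_{T \subseteq \cN} f(T)$. Using submodularity and $f \geq 0$, a standard telescoping bound gives $f(T) \leq f(\varnothing) + \sum_{w \in T} f(\{w\}) \leq (n+2)\max_{w \in \cN} f(\{w\})$ (where the bound $f(\varnothing) \leq 2 \max_w f(\{w\})$ follows from applying submodularity to any pair $\{u\}, \{v\}$ and using non-negativity). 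Hence $|\partial_{u_j}\partial_{u_i} F(z)| = O(n)\cdot \max_{w \in \cN} f(\{w\})$.

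Finally, combining these bounds with $|x'_{u_i} - x_{u_i}| \leq \delta$ yields
\[
\left| F(x') - F(x) - \sum_{u \in \cN} (x'_u - x_u)\cdot \partial_u F(x) \right| \leq \binom{n}{2}\cdot \delta^2 \cdot O(n)\cdot \max_{u \in \cN} f(\{u\}) = O(n^3 \delta^2)\cdot \max_{u \in \cN} f(\{u\})\enspace,
\]
which implies the desired one-sided bound. I do not expect real difficulty beyond the second-partial estimate — the telescoping is purely algebraic once one commits to the coordinate-by-coordinate path, and the factor of $n^3$ arises naturally from the three sources of $n$ above (the outer sum, the inner sum, and the crude bound on $f(T)$ in terms of $\max_u f(\{u\})$).
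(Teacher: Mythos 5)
Your proof is correct. The paper itself does not include an argument for this lemma---it defers to Lemma~2.3.7 of~\cite{F13}---so there is no in-text proof to compare against, but the route you take is the standard one for such Lipschitz-type estimates on the multilinear extension, and every step checks out.

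A few small remarks for completeness. (1) The identity $F(x^{(i)}) - F(x^{(i-1)}) = (x'_{u_i}-x_{u_i})\,\partial_{u_i} F(x^{(i-1)})$ is exact precisely because $F$ is affine in each coordinate, and the same fact makes the inner telescope exact with $z^{(i,j)} = x^{(j-1)}$; it is worth saying this explicitly rather than leaving the intermediate points unnamed. (2) For the mixed partial, one can get a factor $2$ in place of your $4$ by noting that submodularity makes the discrete second difference non-positive, so its absolute value is $f(R+u_i)+f(R+u_j)-f(R)-f(R+u_i+u_j) \leq f(R+u_i)+f(R+u_j) \leq 2\max_{T\subseteq\cN} f(T)$; this is cosmetic since both are $O(1)$. (3) The step $f(\varnothing) \leq 2\max_{u\in\cN} f(\{u\})$ needs $n \geq 2$, but for $n \leq 1$ the second-order error vanishes identically, so the lemma is trivial there. (4) The bound $\max_T f(T) \leq (n+2)\max_u f(\{u\})$ is indeed what introduces the third factor of $n$ and reconciles your $O(n^2)$ pairs with the $O(n^3)$ in the statement; this matches the form the paper asserts, where the error is expressed against $\max_{u\in\cN} f(\{u\})$ rather than $\max_{T\subseteq\cN} f(T)$. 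In short, the argument is complete and even gives the stronger two-sided bound.
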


We also use the following lemma, which comes handy in proving the feasibility of the solutions produced by some of our algorithms. This lemma is implicitly proved by~\cite{FNS11} (some parts of the proof, which are omitted in~\cite{FNS11}, can be found in~\cite{F13}).

\begin{lemma} \label{lem:feasibility}
Fix some $\delta \leq n^{-5}$, and let $\{I(i)\}_{i = 1}^\ell$ be a set of $\ell$ points in a down-monotone polytope $\cP \subseteq [0, 1]^\cN$. Let $\{y(i)\}_{i = 0}^\ell$ be a a set of $\ell + 1$ vectors in $[0, 1]^\cN$ obeying the following constraints. For every element $u \in \cN$,
\[
	y_u(i) \leq
	\begin{cases}
		0 & \text{if $i = 0$} \enspace, \\
		y_u(i - 1) + \delta I_u(i) \cdot (1 - y_u(i - 1)) & \text{otherwise} \enspace.
	\end{cases}
\]
Then,
\begin{\itemizeEnv}
	\item $y_u(i) / (\delta i) \in \cP$.
	\item Let $T_{\cP} = -\ln(1 - d(\cP) + n^{-4})/d(\cP)$. Then, $\delta i \leq T_{\cP}$ implies $y(i) \in \cP$.
\end{\itemizeEnv}
\end{lemma}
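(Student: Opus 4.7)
The plan is to reduce both parts to bounds on $\sum_{u \in \cN} a_{j,u} y_u(i)$ for each constraint $\sum_u a_{j,u} x_u \leq b_j$ in a representation of $\cP$ achieving the density $d(\cP)$ (all $a_{j,u} \geq 0$, by the assumption on the representation). The starting point is a single recursive inequality: iterating the given update yields $1 - y_u(i) \geq \prod_{k=1}^{i}(1 - \delta I_u(k))$. Writing $s_u = \sum_{k=1}^i I_u(k)$, I will extract two complementary upper bounds on $y_u(i)$: a linear bound $y_u(i) \leq \delta s_u$ (from $\prod(1-x_k) \geq 1 - \sum x_k$), and an exponential-type bound $y_u(i) \leq 1 - e^{-\delta s_u} + O(\delta^2 s_u)$, using $\ln(1-x) \geq -x - x^2$ on $[0,\tfrac12]$, which applies since $\delta I_u(k) \leq \delta \leq n^{-5}$.

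\textbf{First bullet.} Apply the linear bound. For any constraint $j$,
\[
\sum_{u \in \cN} a_{j,u} y_u(i) \;\leq\; \delta \sum_u a_{j,u} s_u \;=\; \delta \sum_{k=1}^i \sum_u a_{j,u} I_u(k) \;\leq\; \delta i \cdot b_j,
\]
because each $I(k)$ lies in $\cP$. Dividing by $\delta i$ gives the constraint for $y(i)/(\delta i)$. Since $y_u(i) \leq \delta s_u \leq \delta i$ also forces $y(i)/(\delta i) \in [0,1]^\cN$, we conclude $y(i)/(\delta i) \in \cP$.

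\textbf{Second bullet.} Use the exponential bound together with Jensen's inequality. Put $A_j = \sum_u a_{j,u}$, $\alpha_u = a_{j,u}/A_j$, $\gamma_j = b_j/A_j \geq d(\cP)$, and $t = \delta i$. Convexity of $e^{-x}$ gives
\[
\sum_u \alpha_u e^{-\delta s_u} \;\geq\; \exp\!\Bigl(-\delta \sum_u \alpha_u s_u\Bigr) \;\geq\; e^{-t\gamma_j},
\]
where the second inequality uses $\sum_u \alpha_u s_u = \sum_k \sum_u \alpha_u I_u(k) \leq i\gamma_j$ (each $I(k)$ is feasible). Averaging the exponential bound on $y_u(i)$ against the $\alpha_u$,
\[
\sum_u \alpha_u y_u(i) \;\leq\; 1 - e^{-t\gamma_j} + O(\delta^2 i) \;\leq\; 1 - e^{-t\gamma_j} + n^{-4},
\]
since $\delta^2 i \leq \delta t \leq \delta T_\cP = O(n^{-5}\log n) \ll n^{-4}$.

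\textbf{Closing the argument.} It remains to verify that $t \leq T_\cP$ implies $1 - e^{-t\gamma_j} + n^{-4} \leq \gamma_j$, equivalently $t \leq -\ln(1-\gamma_j+n^{-4})/\gamma_j$. I show that the map $g(\gamma) := -\ln(1-\gamma+n^{-4})/\gamma$ is monotonically increasing on $(n^{-4},1)$: a short calculation gives $g'(\gamma) = N(\gamma)/\gamma^2$ with $N(\gamma) = \gamma/(1-\gamma+n^{-4}) + \ln(1-\gamma+n^{-4})$, and $N(0) = \ln(1+n^{-4}) > 0$, $N'(\gamma) = \gamma/(1-\gamma+n^{-4})^2 \geq 0$, so $N > 0$ throughout. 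Hence $g(\gamma_j) \geq g(d(\cP)) = T_\cP$, so $t \leq T_\cP$ yields the desired inequality, and multiplying by $A_j$ gives $\sum_u a_{j,u} y_u(i) \leq b_j$ for every $j$, whence $y(i) \in \cP$.

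The step I expect to be the main obstacle is precisely the accounting in the second bullet: propagating the discrete-step error $O(\delta^2 s_u)$ through the Jensen averaging so that it is absorbed into the $n^{-4}$ slack baked into the definition of $T_\cP$. This is where the hypothesis $\delta \leq n^{-5}$ pays its rent, and also where one must carefully verify that the per-constraint threshold $g(\gamma_j)$ really does dominate the global threshold $T_\cP = g(d(\cP))$; the monotonicity computation above is exactly what bridges the gap.
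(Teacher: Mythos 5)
Your proof is correct, and it reconstructs essentially the same argument that the paper attributes to [FNS11]/[F13] without reproducing it: iterate the update to get $1 - y_u(i) \geq \prod_k (1 - \delta I_u(k))$, use the linear lower bound on the product for the scaled-feasibility claim, and use the exponential lower bound together with Jensen's inequality and the monotonicity of $\gamma \mapsto -\ln(1-\gamma+n^{-4})/\gamma$ for the unscaled claim. Two cosmetic remarks: the monotonicity of $g$ holds (and is needed) on all of $(0, 1+n^{-4})$, not just $(n^{-4},1)$, and when $\gamma_j \geq 1$ one should note the constraint is vacuous so $g$ need not be evaluated there; and the final absorption $\delta^2 i \leq \delta T_\cP \leq n^{-4}$ is, as you flag, an ``$n$ sufficiently large'' statement, which is standard for this lemma and consistent with the $n^{-4}$ slack built into $T_\cP$.
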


\section{Measured Continuous Greedy for Symmetric Functions} \label{sec:symmetric_continuous_greedy}

In this section we prove Theorem~\ref{thm:polytope_symmetric}.

\begin{reptheorem}{thm:polytope_symmetric}
Given a non-negative symmetric submodular function $f\colon 2^\cN \to \bR^+$, a down-monotone solvable polytope $\cP \subseteq [0,1]^\cN$ and a constant $T \geq 0$, there exists an efficient algorithm that finds a point $x \in [0,1]^\cN$ such that $F(x) \geq \nicefrac[]{1}{2} \cdot [1 - e^{-2T} - o(1)] \cdot \max \{F(x) : x \in \cP \cap \{0,1\}^\cN\}$.
Additionally,
\begin{\enumerateEnv}[(a)]
    \item $x / T \in \cP$.
    \item Let $T_{\cP} = -\ln(1 - d(\cP) + n^{-4})/d(\cP)$. Then, $T \leq T_{\cP}$ implies $x \in \cP$.
\end{\enumerateEnv}
\end{reptheorem}

To simplify the proof of the theorem, we assume the following reduction was applied.

\begin{reduction} \label{re:singleton_ok}
We may assume in the proof of Theorem~\ref{thm:polytope_symmetric} that $\characteristic_u \in \cP$ for every $u \in \cN$.
\end{reduction}
\ProofSingeltonOK

The algorithm we use to prove Theorem~\ref{thm:polytope_symmetric} is Algorithm~\ref{alg:MeasuredContinuousGreedyKeepDown}, which is a variant of the Measured Continuous Greedy algorithm presented by~\cite{FNS11}. Notice that the definition of $\delta$ in the algorithm guarantees two properties: $\delta \leq n^{-5}$ and $t = T$ after $\lceil n^5 T \rceil$ iterations. These properties imply, by Lemma~\ref{lem:feasibility}, that the output of Algorithm~\ref{alg:MeasuredContinuousGreedyKeepDown} obeys properties~\eqref{cond:feasibility_1} and~\eqref{cond:feasibility_2} guaranteed by Theorem~\ref{thm:polytope_symmetric}. Thus, to complete the proof of Theorem~\ref{thm:polytope_symmetric}, it is only necessary to show that the approximation ratio of Algorithm~\ref{alg:MeasuredContinuousGreedyKeepDown} matches the approximation ratio guaranteed by the theorem.

\begin{algorithm}[!ht]
\caption{\textsf{Measured Continuous Greedy for Symmetric Functions}$(f, \cP, T)$} \label{alg:MeasuredContinuousGreedyKeepDown}
\DontPrintSemicolon
\tcp{Initialization}
Set: $\delta \gets T (\lceil n^5 T \rceil)^{-1}$.\\
Initialize: $t \gets 0$, $y(0) \gets \characteristic_\varnothing$.\\

\BlankLine

\tcp{Main loop}
\While{$t < T$}
{
    \lForEach{$u \in \cN$}
    {
        Let $w_u(t) \gets F(y(t) \vee \characteristic_u) - F(y(t))$.
    }
    Let $I(t)$ be a vector in $\cP$ maximizing $I(t) \cdot w(t)$.\\
    \lForEach{$u \in \cN$}
    {
        Let $y_u(t + \delta) \gets y_u(t) + \delta I_u(t) \cdot (1 - y_u(t))$.
    }
		\ForEach{$u \in \cN$}
		{ \label{line:second_scan}
			\lIf{$\partial_u F(y(t + \delta)) < 0$}{$y_u(t + \delta) \gets 0$.} \label{line:check_negative}
		}
}

\BlankLine

Return $y(T)$.
\end{algorithm}


First, we need a lower bound on the improvement achieved in each iteration of the algorithm. The following lemma is a counterpart of Lemma~III.2 of~\cite{FNS11}.

\LemmaStepApproximation

\begin{corollary} \label{cor:step_improvment}
For every time $0 \leq t < T$, $F(y(t + \delta)) - F(y(t)) \geq \delta \cdot [F(y(t) \vee \characteristic_{OPT}) - F(y(t))] - O(n^3\delta^2) \cdot f(OPT)$.
\end{corollary}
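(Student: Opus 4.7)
The plan is to combine the two lemmata already established, with a small bookkeeping step to replace $\max_{u} f(\{u\})$ by $f(OPT)$ in the error term. Let $\tilde{y}(t+\delta)$ denote the value of the vector just after the update on the previous line but \emph{before} the loop starting on Line~\ref{line:second_scan} is executed, so that $\tilde{y}_u(t+\delta) = y_u(t) + \delta I_u(t)(1 - y_u(t))$ for every $u \in \cN$. Because $I_u(t) \in [0,1]$ and $1 - y_u(t) \in [0,1]$, the coordinate-wise change satisfies $|\tilde{y}_u(t+\delta) - y_u(t)| \leq \delta$.

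First, I would apply Lemma~\ref{lem:greedy_step_increase_bound_general} to the pair $y(t), \tilde{y}(t+\delta)$ to get
\[
    F(\tilde{y}(t+\delta)) - F(y(t)) \geq \sum_{u \in \cN} (\tilde{y}_u(t+\delta) - y_u(t)) \cdot \partial_u F(y(t)) - O(n^3 \delta^2) \cdot \max_{u \in \cN} f(\{u\}).
\]
Substituting the explicit increment $\tilde{y}_u(t+\delta) - y_u(t) = \delta I_u(t)(1 - y_u(t))$ and then invoking Lemma~\ref{lem:step_approximation} lower-bounds the sum by $\delta \cdot [F(y(t) \vee \characteristic_{OPT}) - F(y(t))]$.

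Next, I would argue the error term can be written in terms of $f(OPT)$ rather than $\max_u f(\{u\})$. By Reduction~\ref{re:singleton_ok} we have $\characteristic_u \in \cP$ for every $u \in \cN$, hence $f(\{u\}) = F(\characteristic_u) \leq \max\{F(x) : x \in \cP \cap \{0,1\}^\cN\} = f(OPT)$, turning the error term into $O(n^3\delta^2) \cdot f(OPT)$.

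Finally, I would note that passing from $\tilde{y}(t+\delta)$ to $y(t+\delta)$ (\ie, executing the loop on Line~\ref{line:second_scan}) can only increase $F$: when the algorithm sets some coordinate $y_u(t+\delta)$ to $0$, it does so precisely because $\partial_u F$ at the current point is negative, and by multilinearity of $F$ this replacement changes $F$ by the non-negative quantity $-y_u \cdot \partial_u F \geq 0$. Therefore $F(y(t+\delta)) \geq F(\tilde{y}(t+\delta))$, and combining this with the previous inequality gives the claim. The only step requiring a little care is the last one, since the coordinates are zeroed sequentially and the partials change as we proceed; but submodularity guarantees that decreasing a coordinate can only increase the partial derivatives with respect to other coordinates, so at each individual zeroing step the partial in question is still negative and the pointwise argument applies.
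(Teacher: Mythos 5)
Your proof follows essentially the same route as the paper's: combine Lemma~\ref{lem:greedy_step_increase_bound_general} with Lemma~\ref{lem:step_approximation}, use Reduction~\ref{re:singleton_ok} to bound $\max_{u} f(\{u\})$ by $f(OPT)$, and observe that the cleanup loop on Line~\ref{line:second_scan} can only increase $F$; you simply spell out the intermediate vector $\tilde y(t+\delta)$ explicitly. One small correction to your closing aside: submodularity is not what makes the sequential zeroing safe --- in fact, it points the other way, since zeroing one coordinate can only \emph{increase} the partial derivatives with respect to the remaining coordinates, so a partial that was negative at the start of the loop could turn non-negative before its turn. What actually justifies the step is that the algorithm re-evaluates $\partial_u F$ at the \emph{current} vector in each iteration of the loop, so a coordinate is zeroed only when its partial is negative at that moment, and each such zeroing is weakly $F$-increasing regardless of what earlier iterations did.
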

\ProofStepImprovement

The last corollary gives a lower bound on the improvement achieved in every step of the algorithm in terms of $F(y(t) \vee \characteristic_{OPT})$. To make this lower bound useful, we need to lower bound the term $F(y(t) \vee \characteristic_{OPT})$ using Lemma~\ref{lem:union_bound_symmetric}. The following lemma shows that the conditions of Lemma~\ref{lem:union_bound_symmetric} hold. 

\LemmaRemovingOnlyDecrease

\begin{corollary} \label{cor:good_solution}
For every time $0 \leq t < T$, $F(y(T + \delta)) - F(y(T)) \geq \delta \cdot [f(OPT) - 2 \cdot F(y(t))] - O(n^3\delta^2) \cdot f(OPT)$.
\end{corollary}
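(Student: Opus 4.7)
The plan is to chain together the two preceding lemmata with Corollary~\ref{cor:step_improvment}. Corollary~\ref{cor:step_improvment} already provides a lower bound of the form $\delta \cdot [F(y(t) \vee \characteristic_{OPT}) - F(y(t))] - O(n^3\delta^2) \cdot f(OPT)$ on the per-step improvement, so the only remaining task is to replace the term $F(y(t) \vee \characteristic_{OPT})$ by something that involves only $f(OPT)$ and $F(y(t))$. This is precisely what Lemma~\ref{lem:union_bound_symmetric} does, provided that $y(t)$ satisfies its hypothesis that $F(z) \leq F(y(t))$ for every $z \in [0,1]^\cN$ with $z \leq y(t)$.

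First, I would verify that the hypothesis of Lemma~\ref{lem:union_bound_symmetric} holds at time $t$. This is exactly the content of Lemma~\ref{lem:removing_only_decrease}, which was designed for this purpose (the extra loop on Line~\ref{line:second_scan} of Algorithm~\ref{alg:MeasuredContinuousGreedyKeepDown} zeroing coordinates with negative partial derivatives is what enables this monotonicity-under-restriction property). With that hypothesis verified, I apply Lemma~\ref{lem:union_bound_symmetric} with $x := y(t)$ and $S := OPT$ to obtain
\[
    F(y(t) \vee \characteristic_{OPT}) \geq f(OPT) - F(y(t)) \enspace.
\]

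Plugging this inequality into Corollary~\ref{cor:step_improvment} yields
\[
    F(y(t + \delta)) - F(y(t)) \geq \delta \cdot [f(OPT) - 2 \cdot F(y(t))] - O(n^3\delta^2) \cdot f(OPT) \enspace,
\]
which is the claimed bound. There is no real obstacle here; the work was already done in the two preceding lemmata, and this corollary is just a clean packaging of their combination in a form convenient for the subsequent differential-equation-style analysis (via the functions $g$ and $h$ in Lemmata~\ref{lem:g_bound2} and~\ref{lem:h_bound2}). The only thing to be careful about is the symmetry hypothesis on $f$: Lemma~\ref{lem:union_bound_symmetric} is the step that actually consumes the symmetry of $f$, and is what allows the factor $2$ in front of $F(y(t))$ (as opposed to the factor that would arise from the weaker generic bound used in~\cite{FNS11}).
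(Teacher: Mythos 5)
Your proof matches the paper's own argument exactly: both use Lemma~\ref{lem:removing_only_decrease} to verify the hypothesis of Lemma~\ref{lem:union_bound_symmetric}, apply it with $x = y(t)$ and $S = OPT$ to get $F(y(t) \vee \characteristic_{OPT}) \geq f(OPT) - F(y(t))$, and then substitute into Corollary~\ref{cor:step_improvment}. (You also quietly correct the typo $T$ for $t$ in the statement, as one must.)
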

\ProofGoodSolution

At this point we have a lower bound on the improvement achieved in each iteration in terms of $f(OPT)$ and $F(y(t))$. In order to complete the analysis of the algorithm, we need to derive from it a bound on the value of $F(y(t))$ for every time $t$. \LemmataGH

We are now ready to prove the approximation ratio of Theorem~\ref{thm:polytope_symmetric} using the last lemmata.
\begin{proof}[\inArXiv{Proof }of the Approximation Ratio of Theorem~\ref{thm:polytope_symmetric}]
By Lemmata~\ref{lem:g_bound2} and \ref{lem:h_bound2},
\begin{align*}
    F(y(T))
    \geq{} &
    g(T) - O(n^3 \delta) \cdot T \cdot f(OPT)\\
    \geq{} &
    h(T) - O(n^3 \delta T) \cdot f(OPT)
    =
    (1/2) \cdot [1 - 2e^{-T} - O(n^3 \delta T)] \cdot f(OPT)
		\enspace.
\end{align*}
The proof is now complete since $T$ is a constant and $\delta \leq n^{-5}$.
\end{proof}
\section{Equality Cardinality Constraints} \label{sec:cardinality}

In this section we prove Theorem~\ref{thm:uniform_base_symmetric_approximation}.

\begin{reptheorem}{thm:uniform_base_symmetric_approximation}
There exists an efficient algorithm that given a non-negative symmetric submodular function $f\colon 2^\cN \to \bR^+$ and an integer cardinality parameter $1 \leq k \leq n/2$, achieves an approximation of $\nicefrac[]{1}{2}[1 - (1 - k/n)^{2n/k}] - o(1)$ for the problem: $\max \{f(S) : |S| = k\}$. If $k > n / 2$, then the same result holds with the cardinality parameter replaced by $n - k$.
\end{reptheorem}

The proof of Theorem~\ref{thm:uniform_base_general_approximation} is based on similar ideas, and is deferred to Appendix~\ref{sec:uniform_base_approximation_general}. To simplify the proof of Theorem~\ref{thm:uniform_base_symmetric_approximation}, we assume the following reduction was applied.

\begin{reduction} \label{re:base}
We may assume in the proof of Theorem~\ref{thm:uniform_base_symmetric_approximation} that $2k \leq n$.
\end{reduction}
\ProofReductionBase

The algorithm we use to prove Theorem~\ref{thm:uniform_base_symmetric_approximation} is Algorithm~\ref{alg:DoubleContinuousGreedy}. One can think of this algorithm as two synchronized instances of Algorithm~\ref{alg:MeasuredContinuousGreedyKeepDown}. One instance starts with the solution $\characteristic_\varnothing$ and looks for a solution obeying the constraint $\sum_{u \in \cN} x_u \leq k$. The other instance starts with the solution $\characteristic_\cN$ and looks for a solution obeying the constraint $\sum_{u \in \cN} x_u \geq k$ (alternatively, we can think of the second instance as having the objective $\bar{f}$ and the constraint $\sum_{u \in \cN} x_u \leq n - k$). The two instances are synchronized in two senses:
\begin{\itemizeEnv}
	\item In each iteration, the two instances choose direction vectors $I^1$ and $I^2$ obeying $I^1 + I^2 = \characteristic_\cN$ (\ie, the direction vector of one instance implies the direction vector of the other instance).
	\item The direction vectors are selected in a way that improves the solutions of both instances.
\end{\itemizeEnv}

The output of Algorithm~\ref{alg:DoubleContinuousGreedy} is a fractional solution. This solution can be rounded into an integral solution using a standard rounding procedure such as pipage rounding~\cite{CCPV11}.

\begin{algorithm}[!ht]
\caption{\textsf{Double Measured Continuous Greedy}$(f, \cN, k)$} \label{alg:DoubleContinuousGreedy}
\DontPrintSemicolon
\tcp{Initialization}
Set: $T \gets -\nicefrac[]{n}{k} \cdot \ln(1 - k/n + n^{-4})$ and $\delta \gets T (\lceil n^5 T \rceil)^{-1}$.\\
Initialize: $t \gets 0$, $y^1(0) \gets \characteristic_\varnothing$ and $y^2(0) \gets \characteristic_\cN$.\\

\BlankLine

\tcp{Main loop}
\While{$t < T$}
{
    \ForEach{$u \in \cN$}
    {
        Let $w^1_u(t) \gets F(y^1(t) \vee \characteristic_u) - F(y^1(t))$ and $w^2_u(t) \gets F(y^2(t) \wedge \characteristic_{\cN - u}) - F(y^2(t))$.
    }
    Let $I^1(t) \in [0, 1]^\cN$ and $I^2(t) \in [0, 1]^\cN$ be two vectors maximizing
		\[
			\min\{I^1(t) \cdot w^1(t) + 2 \cdot F(y^1(t)), I^2(t) \cdot w^2(t) + 2 \cdot F(y^2(t)) \}
		\]
		among the vectors obeying $|I^1(t)| = k$, $|I^2(t)| = n - k$ and $I^1(t) + I^2(t) = \characteristic_\cN$. \label{line:calc_I}\\
    \ForEach{$u \in \cN$}
    {
        Let $y^1_u(t + \delta) \gets y^1_u(t) + \delta I^1_u(t) \cdot (1 - y^1_u(t))$ and $y^2_u(t + \delta) \gets y^2_u(t) - \delta I^2_u(t) \cdot y^2_u(t)$.
    }
		\ForEach{$u \in \cN$}
		{ \label{line:second_scan_two_sided}
			\lIf{$\partial_u F(y^1(t + \delta)) < 0$}{$y^1_u(t + \delta) \gets 0$.}
			\lIf{$\partial_u F(y^2(t + \delta)) > 0$}{$y^2_u(t + \delta) \gets 1$.}
		}
    $t \leftarrow t + \delta$.
}

\BlankLine

\lIf{$|y^1(T)| = |y^2(T)|$}
{
	\Return{$y^1(T)$}.
}
\lElse
{
	\Return{$y^1(T) \cdot \frac{|y^2(T)| - k}{|y^2(T)| - |y^1(T)|} + y^2(T) \cdot \frac{k - |y^1(T)|}{|y^2(T)| - |y^1(T)|}$}.
}
\end{algorithm}

We begin the analysis of Algorithm~\ref{alg:DoubleContinuousGreedy} by showing it can be implemented efficiently using an LP solver.

\begin{observation}
There exists an efficient algorithm for calculating the vectors $I^1(t)$ and $I^2(t)$ defined on Line~\ref{line:calc_I} of Algorithm~\ref{alg:DoubleContinuousGreedy}.
\end{observation}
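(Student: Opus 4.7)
The plan is to reformulate the optimization problem as a linear program. First I would eliminate $I^2(t)$ from the problem: the constraint $I^1(t) + I^2(t) = \characteristic_\cN$ combined with $I^1(t), I^2(t) \in [0,1]^\cN$ forces $I^2(t) = \characteristic_\cN - I^1(t)$, and under this identification the cardinality constraint $|I^1(t)| = k$ automatically implies $|I^2(t)| = n - k$. Hence it suffices to search over vectors $I^1(t) \in [0,1]^\cN$ with $\sum_{u \in \cN} I^1_u(t) = k$ maximizing
\[
    \min\{I^1(t) \cdot w^1(t) + 2 \cdot F(y^1(t)),\ (\characteristic_\cN - I^1(t)) \cdot w^2(t) + 2 \cdot F(y^2(t))\}\enspace.
\]

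Next I would linearize the min by introducing an auxiliary scalar variable $z$, converting the problem to: maximize $z$ subject to the four sets of linear constraints $z \leq I^1(t) \cdot w^1(t) + 2F(y^1(t))$, $z \leq (\characteristic_\cN - I^1(t)) \cdot w^2(t) + 2F(y^2(t))$, $\sum_{u \in \cN} I^1_u(t) = k$, and $0 \leq I^1_u(t) \leq 1$ for each $u \in \cN$. This is a linear program with $n + 1$ variables and $O(n)$ inequality constraints. The coefficient data (the entries of $w^1(t)$ and $w^2(t)$ and the constants $F(y^1(t))$, $F(y^2(t))$) are all polynomially computable from the value oracle for $F$ discussed in Section~\ref{sec:preliminaries}. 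Any polynomial-time LP solver therefore produces the desired maximizing pair $(I^1(t), I^2(t))$.

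No substantive obstacle arises, since the only conceptual step is realizing that the max-min objective is equivalent to a standard LP once $I^2(t)$ is eliminated using the identity $I^2(t) = \characteristic_\cN - I^1(t)$. The observation is short and should be stated briefly, mainly to certify that each iteration of Algorithm~\ref{alg:DoubleContinuousGreedy} can be implemented in polynomial time and therefore the algorithm is efficient.
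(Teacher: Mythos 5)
Your proposal is correct and takes essentially the same approach as the paper: both reformulate Line~\ref{line:calc_I} as a linear program (introducing an auxiliary scalar to linearize the min) and invoke a polynomial-time LP solver. The paper keeps both $I^1(t)$ and $I^2(t)$ as variables with the coupling constraint $I^1_u(t)+I^2_u(t)=1$ rather than eliminating $I^2(t)$, but this is a cosmetic difference.
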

\begin{proof} \belowdisplayskip=-12pt
The calculation of $I^1(t)$ and $I^2(t)$ can be done by solving the following linear program.
\[
	\begin{array}{llll}
		\max & m \\
		\mbox{s.t.} & \sum_{u \in \cN} w^i_u(t) \cdot I^i_u(t) + 2 \cdot F(y^i(t)) & \geq m & \forall i \in \{1, 2\} \\
		& \sum_{u \in \cN} I^1_u(t) & = k &\\
		& \sum_{u \in \cN} I^2_u(t) & = n - k &\\
		& I^1_u(t) + I^2_u(t) & = 1 & \forall \, u \in \cN\\
		& I^i_u(t) & \geq 0 & \forall \, u \in \cN, i \in \{1, 2\}
	\end{array}
\]
\end{proof}

The following lemma follows from Lemma~\ref{lem:feasibility}.

\begin{lemma} \label{lem:y_properties}
For every time $0 \leq t \leq T$, the vectors $y^1(t)$ and $y^2(t)$ obey:
\begin{\itemizeEnv}
	\item $y^1(t),y^2(t) \in [0, 1]^\cN$.
	\item $y^1(t) \leq y^2(t)$ (element-wise).
	\item $|y^1(t)| \leq k \leq |y^2(t)|$.
\end{\itemizeEnv}
\end{lemma}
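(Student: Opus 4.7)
The plan is to verify the three claims separately, handling the $[0,1]$ membership by a direct induction and reducing the other two to invocations of Lemma~\ref{lem:feasibility}. For the membership claim, I would induct on $t$. The base case holds since $y^1(0) = \characteristic_\varnothing$ and $y^2(0) = \characteristic_\cN$. For the inductive step, each coordinate of $y^1$ is either zeroed out by the second scan (trivially in $[0,1]$) or updated to $y^1_u(t) + \delta I^1_u(t)(1 - y^1_u(t))$, which is a convex combination of $y^1_u(t) \in [0,1]$ and $1$ (since $\delta I^1_u(t) \in [0,1]$) and therefore lies in $[0,1]$. The symmetric argument, using the multiplicative shrinkage $y^2_u(t+\delta) = y^2_u(t)(1 - \delta I^2_u(t))$ on one hand and the assignment $y^2_u(t+\delta) = 1$ on the other, handles $y^2$.

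For the ordering $y^1(t) \leq y^2(t)$, the natural move is to replace $y^2$ by its complement. Set $y^3(t) := \characteristic_\cN - y^2(t)$; a short calculation shows $y^3(0) = \characteristic_\varnothing$ and $y^3_u(t+\delta) \leq y^3_u(t) + \delta I^2_u(t)(1 - y^3_u(t))$ (with equality outside the second scan, and the inequality holding in the case $y^2_u(t+\delta) = 1$, since then $y^3_u(t+\delta) = 0$). Stack $y^1$ and $y^3$ into a single vector indexed by $\cN \times \{1, 3\}$, and observe that its joint update direction $(I^1(t), I^2(t))$ lies in the down-monotone polytope $\{(a, b) \in [0,1]^{\cN \times \cN} : a_u + b_u \leq 1 \text{ for every } u \in \cN\}$, because $I^1_u(t) + I^2_u(t) = 1$ by the construction on Line~\ref{line:calc_I}. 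Applying Lemma~\ref{lem:feasibility} to this polytope, whose density is $\nicefrac{1}{2}$, yields $y^1_u(t) + y^3_u(t) \leq 1$ for every $t \leq -2\ln(\nicefrac{1}{2} + n^{-4})$. Reduction~\ref{re:base} gives $2k \leq n$, which together with the algorithm's choice $T = -\nicefrac{n}{k}\ln(1 - k/n + n^{-4})$ implies the required bound on $t$ throughout the run of the algorithm.

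For the cardinality bounds, apply Lemma~\ref{lem:feasibility} to the polytope $\{x \in [0,1]^\cN : \sum_{u \in \cN} x_u \leq k\}$, which has density $k/n$. Since $|I^1(t)| = k$, the direction $I^1$ lies in this polytope, and the feasibility horizon is precisely $-\nicefrac{n}{k}\ln(1 - k/n + n^{-4}) = T$, yielding $|y^1(t)| \leq k$. Analogously, the polytope $\{x \in [0,1]^\cN : \sum_{u \in \cN} x_u \leq n - k\}$ has density at least $k/n$ (since $n - k \geq k$), and $|I^2(t)| = n - k$, so the same lemma gives $|y^3(t)| \leq n - k$ for all $t \leq T$, whence $|y^2(t)| = n - |y^3(t)| \geq k$. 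The main obstacle is the second claim: one has to recognize that the coupled two-sided dynamics can be repackaged as a single feasibility statement for the joint vector $(y^1, y^3)$ inside a down-monotone polytope of density $\nicefrac{1}{2}$, and then verify that this density, combined with the inequality $2k \leq n$, is compatible with the value of $T$ prescribed by the algorithm.
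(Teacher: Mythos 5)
Your proof matches the paper's own argument essentially line for line: the same induction for membership in $[0,1]^\cN$, the same substitution $y^3 = \characteristic_\cN - y^2$ to rewrite the second process in the standard form required by Lemma~\ref{lem:feasibility}, the same density-$\nicefrac{1}{2}$ polytope argument for the ordering (the paper phrases it per coordinate while you stack into a $2n$-dimensional polytope, but this is the same observation), and the same two polytopes of density $\geq k/n$ for the cardinality bounds. The proposal is correct and takes the same route as the paper.
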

\ProofYProperties

As a corollary of Lemma~\ref{lem:y_properties}, we can guarantee feasibility. Let $y$ be the vector produced by Algorithm~\ref{alg:DoubleContinuousGreedy}.

\begin{corollary} \label{cor:feasibility}
$y$ is a feasible solution.
\end{corollary}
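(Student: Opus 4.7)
The plan is to derive the corollary directly from Lemma~\ref{lem:y_properties}, which gives the three key facts $y^1(T), y^2(T) \in [0,1]^\cN$, $y^1(T) \leq y^2(T)$ coordinate-wise, and $|y^1(T)| \leq k \leq |y^2(T)|$. Feasibility of the output vector $y$ means exactly $y \in [0,1]^\cN$ and $|y| = k$, so I would split on the two branches used by the algorithm to define $y$ and verify both properties in each case.

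First I would handle the branch $|y^1(T)| = |y^2(T)|$, where the algorithm returns $y = y^1(T)$. The third bullet of Lemma~\ref{lem:y_properties} forces $|y^1(T)| \leq k \leq |y^2(T)| = |y^1(T)|$, so the common value must be exactly $k$. Combined with $y^1(T) \in [0,1]^\cN$ from the first bullet, this gives feasibility immediately.

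Next I would handle the branch $|y^1(T)| \neq |y^2(T)|$. Here $y = \alpha \cdot y^1(T) + \beta \cdot y^2(T)$ with
\[
\alpha = \frac{|y^2(T)| - k}{|y^2(T)| - |y^1(T)|}, \qquad \beta = \frac{k - |y^1(T)|}{|y^2(T)| - |y^1(T)|}.
\]
The sandwich $|y^1(T)| \leq k \leq |y^2(T)|$ shows both numerators and the common denominator are non-negative (the denominator being strictly positive in this branch), so $\alpha, \beta \geq 0$; a direct check shows $\alpha + \beta = 1$. Hence $y$ is a genuine convex combination of two points of $[0,1]^\cN$ and therefore lies in $[0,1]^\cN$. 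Finally, by linearity of $|\cdot|$,
\[
|y| = \alpha \cdot |y^1(T)| + \beta \cdot |y^2(T)| = \frac{(|y^2(T)| - k)|y^1(T)| + (k - |y^1(T)|)|y^2(T)|}{|y^2(T)| - |y^1(T)|} = k,
\]
which finishes the proof.

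There is no real obstacle here; the corollary is essentially a two-line unpacking of the definition of $y$ and the inequalities already collected in Lemma~\ref{lem:y_properties}. The only point one must be careful about is observing that the second branch is only taken when $|y^2(T)| - |y^1(T)| > 0$, so the convex combination is well defined, and that the first branch forces $|y^1(T)| = k$ rather than some other common value—both handled by the sandwich bound on $|y^1(T)|$ and $|y^2(T)|$.
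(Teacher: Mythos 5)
Your proposal is correct and follows essentially the same route as the paper: split on the two branches of the algorithm, use the sandwich $|y^1(T)| \leq k \leq |y^2(T)|$ from Lemma~\ref{lem:y_properties} to force $|y^1(T)| = k$ in the equality case, and recognize the other branch as a convex combination of two vectors in $[0,1]^\cN$ with $|y|=k$ by linearity. You merely spell out explicitly that the coefficients are non-negative and sum to one, which the paper leaves as an observation.
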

\ProofFeasiblity

Our next objective is lower bounding $F(y)$ in terms of $F(y^1(T))$ and $F(y^2(T))$. Let $r\colon [0, 1] \to \bR^+$ be the function:
\[
	r(x) = F(y^1(T) + x(y^2(T) - y^1(T)))
	\enspace.
\]
Intuitively, $r(x)$ evaluates $F$ on a vector that changes from $y^1(T)$ to $y^2(T)$ as $x$ increases.

\begin{observation} \label{obs:concave}
$r$ is a non-negative concave function.
\end{observation}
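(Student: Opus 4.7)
The plan is to handle the two claims separately. Non-negativity is immediate: since $y^1(T)$ and $y^2(T)$ both lie in $[0,1]^\cN$ (by Lemma~\ref{lem:y_properties}), any convex combination of them does too, and $f$ (hence $F$) is non-negative by assumption. So the real content is concavity, which I would prove by showing that $r'$ is non-increasing on $[0,1]$.

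To compute $r'$, I would first write $r(x) = F(\hat r(x))$ with the affine parameterization $\hat r(x) = y^1(T) + x\cdot(y^2(T) - y^1(T))$. Since $F$ is multilinear (in particular smooth) and $\hat r$ is affine in $x$, the chain rule yields
\[
	r'(x) = \sum_{u \in \cN} (y^2_u(T) - y^1_u(T)) \cdot \partial_u F(\hat r(x)).
\]
Thus it suffices to argue that each term on the right is non-increasing in $x$, which I would split into two observations about the sign of the coefficient and the monotonicity of the partial derivative.

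For the coefficient: Lemma~\ref{lem:y_properties} tells us that $y^1(T) \leq y^2(T)$ coordinate-wise, so each scalar $y^2_u(T) - y^1_u(T)$ is a non-negative constant (independent of $x$). For the derivative factor: as $x$ increases, $\hat r(x)$ is non-decreasing coordinate-wise (because its derivative with respect to $x$ is exactly the non-negative vector $y^2(T) - y^1(T)$), and submodularity of $f$ implies that each partial $\partial_u F$ is a non-increasing function on $[0,1]^\cN$ in the coordinate-wise order. Combining these, each summand in the expression for $r'(x)$ is a product of a non-negative constant and a non-increasing function, hence non-increasing; summing preserves this, so $r'$ is non-increasing on $[0,1]$ and $r$ is concave.

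I do not expect any real obstacle here. The only minor subtlety is to check that $\hat r(x) \in [0,1]^\cN$ on the whole interval $[0,1]$ (so that the partials are well-defined and the submodularity-based monotonicity applies), but this is immediate because $\hat r(x)$ is a convex combination of two vectors in $[0,1]^\cN$. The argument is essentially a packaging of the standard fact that the multilinear extension of a submodular function is concave along any non-negative direction, specialized to the direction $y^2(T) - y^1(T)$.
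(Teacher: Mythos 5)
Your proof is correct and follows essentially the same route as the paper: parameterize the segment as $\hat r(x)=y^1(T)+x(y^2(T)-y^1(T))$, compute $r'(x)$ by the chain rule, and observe that each summand is a non-negative constant coefficient (from Lemma~\ref{lem:y_properties}) times a partial derivative of $F$, which is non-increasing along a coordinate-wise non-decreasing path by submodularity. The additional sanity check that $\hat r(x)\in[0,1]^\cN$ is fine but implicit in the paper's argument.
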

\ProofConcave

\begin{corollary} \label{cor:value_as_worse_option}
$F(y) \geq \min\{F(y^1(T)), F(y^2(T))\}$.
\end{corollary}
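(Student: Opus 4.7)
The plan is to split into the two cases given in Algorithm~\ref{alg:DoubleContinuousGreedy}'s definition of $y$. The case $|y^1(T)| = |y^2(T)|$ is immediate because in that case $y = y^1(T)$ exactly, so $F(y) = F(y^1(T)) \geq \min\{F(y^1(T)), F(y^2(T))\}$ trivially.

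In the remaining case, $y$ is the convex combination of $y^1(T)$ and $y^2(T)$ with weights $\tfrac{|y^2(T)| - k}{|y^2(T)| - |y^1(T)|}$ and $\tfrac{k - |y^1(T)|}{|y^2(T)| - |y^1(T)|}$; both weights lie in $[0,1]$ by Lemma~\ref{lem:y_properties}, which guarantees $|y^1(T)| \leq k \leq |y^2(T)|$. Writing $\alpha = \tfrac{k - |y^1(T)|}{|y^2(T)| - |y^1(T)|} \in [0,1]$, we have $y = y^1(T) + \alpha(y^2(T) - y^1(T))$, so by the definition of $r$, $F(y) = r(\alpha)$.

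The key step is then to invoke Observation~\ref{obs:concave}: since $r\colon [0,1] \to \bR^+$ is concave, for every $\alpha \in [0,1]$
\[
    r(\alpha)
    \geq
    (1 - \alpha) \cdot r(0) + \alpha \cdot r(1)
    \geq
    \min\{r(0), r(1)\}
    =
    \min\{F(y^1(T)), F(y^2(T))\}
    \enspace,
\]
where the equality uses $r(0) = F(y^1(T))$ and $r(1) = F(y^2(T))$. This completes the proof. There is no real obstacle here since all the work has already been done in Observation~\ref{obs:concave} (concavity of $r$, which rested on submodularity of $F$ and the coordinate-wise monotonicity $y^1(T) \leq y^2(T)$ from Lemma~\ref{lem:y_properties}) and in Corollary~\ref{cor:feasibility} (which implicitly verified that the convex-combination case is well-defined); the present corollary is essentially the one-line consequence that a concave function on $[0,1]$ is bounded below by the minimum of its endpoint values.
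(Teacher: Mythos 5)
Your proof is correct and follows essentially the same two-case structure as the paper: the degenerate case $|y^1(T)| = |y^2(T)|$ is handled trivially, and in the nondegenerate case you write $F(y) = r(\alpha)$ for the appropriate $\alpha \in [0,1]$ and invoke the concavity of $r$ from Observation~\ref{obs:concave} to conclude $r(\alpha) \geq \min\{r(0), r(1)\}$. Nothing to add.
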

\begin{proof}
If $|y^1(T)| = |y^2(T)|$ then $y = y^1(T)$, which makes the corollary trivial. Thus, we may assume from now on: $|y^1(T)| \neq |y^2(T)|$. Observe that in this case:
\begin{align*}
	F(y)
	={} &
	F\left(y^1(T) \cdot \frac{|y^2(T)| - k}{|y^2(T)| - |y^1(T)|} + y^2(T) \cdot \frac{k - |y^1(T)|}{|y^2(T)| - |y^1(T)|}\right)\\
	={} &
	F\left(y^1(T) + (y^2(T) - y^1(T)) \cdot \frac{k - |y^1(T)|}{|y^2(T)| - |y^1(T)|}\right)
	=
	r\left(\frac{k - |y^1(T)|}{|y^2(T)| - |y^1(T)|}\right)
	\enspace.
\end{align*}
Notice that $(k - |y^1(T)|) / (|y^2(T)| - |y^1(T)|) \in [0, 1]$. Thus, the concavity of $r$ implies:
\[
	F(y)
	=
	r\left(\frac{k - |y^1(T)|}{|y^2(T)| - |y^1(T)|}\right)
	\geq
	\min\{r(0), r(1)\}
	=
	\min\{F(y^1(T)), F(y^2(T))\}
	\enspace.
	\qedhere
\]
\end{proof}

The proof of Theorem~\ref{thm:uniform_base_symmetric_approximation} now boils down to lower bounding $\min\{F(y^1(T)), F(y^2(T))\}$. The following lemma is a counter-part of Lemma~\ref{lem:step_approximation}. Let $\Delta(t) = \min\{F(y^1(t) \vee \characteristic_{OPT}) + F(y^1(t)), F(y^2(t) \wedge \characteristic_{OPT}) + F(y^2(t))\}$.

\begin{lemma} \label{lem:step_approximation_two_sided}
For every time $0 \leq t < T$:
\[
	\sum_{u \in \cN} (1 - y^1_u(t)) \cdot I^1_u(t) \cdot \partial_u F(y^1(t)) + 2 \cdot F(y^1(t)) \geq \Delta(t)
	\enspace,
\]
\[
	-\sum_{u \in \cN} y^2_u(t) \cdot I^2_u(t) \cdot \partial_u F(y^2(t)) + 2 \cdot F(y^2(t)) \geq \Delta(t)
	\enspace.
\]
\end{lemma}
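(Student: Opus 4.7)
The plan is to use the optimality of the chosen pair $(I^1(t), I^2(t))$ in the definition on Line~\ref{line:calc_I}, by exhibiting $(\characteristic_{OPT}, \characteristic_{\cOPT})$ as a feasible competitor. The triple of constraints $|I^1| = k$, $|I^2| = n-k$ and $I^1 + I^2 = \characteristic_\cN$ is satisfied by this competitor because $|OPT| = k$, $|\cOPT| = n-k$, and $\characteristic_{OPT} + \characteristic_{\cOPT} = \characteristic_\cN$. Hence the $\min$ achieved by $(I^1(t), I^2(t))$ is at least the $\min$ evaluated at $(\characteristic_{OPT}, \characteristic_{\cOPT})$, which in turn will be shown to be at least $\Delta(t)$.

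The two quantities that need lower bounds are $\characteristic_{OPT} \cdot w^1(t)$ and $\characteristic_{\cOPT} \cdot w^2(t)$. For the first I would expand $\characteristic_{OPT} \cdot w^1(t) = \sum_{u \in OPT}[F(y^1(t) \vee \characteristic_u) - F(y^1(t))]$ and apply the standard telescoping argument from submodularity: adding the elements of $OPT$ one at a time to $y^1(t)$, each step contributes at least as much as the corresponding single-element marginal computed at $y^1(t)$, so the sum telescopes to at least $F(y^1(t) \vee \characteristic_{OPT}) - F(y^1(t))$. For the second, I would run the analogous telescoping argument in the ``downward'' direction: starting from $y^2(t)$ and zeroing out the $\cOPT$-coordinates one by one to reach $y^2(t) \wedge \characteristic_{OPT}$, each step is exactly $-y^2_u(t)\cdot \partial_u F$ evaluated at the current vector, and by submodularity this derivative is dominated in magnitude by $-y^2_u(t)\cdot \partial_u F(y^2(t))$ (since the current vector is coordinate-wise smaller than $y^2(t)$ at the relevant coordinate); summing therefore gives $\sum_{u \in \cOPT} [F(y^2(t) \wedge \characteristic_{\cN - u}) - F(y^2(t))] \geq F(y^2(t) \wedge \characteristic_{OPT}) - F(y^2(t))$.

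Combining these two bounds with the optimality of $(I^1(t), I^2(t))$ yields
\[
\min\{I^1(t)\cdot w^1(t) + 2F(y^1(t)),\, I^2(t)\cdot w^2(t) + 2F(y^2(t))\} \geq \Delta(t),
\]
so \emph{each} of the two summands individually is at least $\Delta(t)$. It then remains to translate $I^1(t)\cdot w^1(t)$ and $I^2(t)\cdot w^2(t)$ into the forms appearing in the lemma statement. This is pure multilinearity: since $F$ is multilinear we have the exact identities $F(y \vee \characteristic_u) - F(y) = (1-y_u)\,\partial_u F(y)$ and $F(y \wedge \characteristic_{\cN-u}) - F(y) = -y_u\,\partial_u F(y)$, which converts the two inequalities into the stated ones.

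I expect the one slightly non-routine step to be the ``downward'' submodularity inequality for $w^2$, since it is less standard than the ``upward'' one used for $w^1$. The cleanest way to see it is to observe that $F$, being multilinear with $\partial^2 F/(\partial x_u\partial x_v) \leq 0$ for $u \neq v$, is concave along any direction whose coordinates all share a sign; applying this to the non-positive direction $-(y^2(t) \wedge \characteristic_{\cOPT})$ shows that the linear first-order expansion at $y^2(t)$ underestimates the value at the endpoint $y^2(t) \wedge \characteristic_{OPT}$, which is exactly the needed inequality. Everything else is direct calculation using the definitions of $w^1$, $w^2$ and $\Delta(t)$.
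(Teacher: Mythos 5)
Your proposal is correct and follows essentially the same route as the paper: exhibit $(\characteristic_{OPT}, \characteristic_{\cOPT})$ as a feasible competitor on Line~\ref{line:calc_I}, bound $\characteristic_{OPT}\cdot w^1(t)$ and $\characteristic_{\cOPT}\cdot w^2(t)$ via the telescoping/submodularity inequalities, and convert to the stated form using the multilinear identities $w^1_u(t) = (1-y^1_u(t))\,\partial_u F(y^1(t))$ and $w^2_u(t) = -y^2_u(t)\,\partial_u F(y^2(t))$. The paper simply invokes "submodularity" for both telescoping bounds, whereas you spell out the downward one explicitly, but the argument is the same.
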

\ProofStepApproximationTwoSided

\begin{lemma} \label{le:OPT_not_too_bad}
$f(OPT) \geq \max_{u \in \cN} f(\{u\}) / 2$.
\end{lemma}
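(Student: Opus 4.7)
My plan is to use the assumption $2k \le n$ guaranteed by Reduction~\ref{re:base}, and to average the values of two feasible solutions that both contain a fixed ``heavy'' singleton.

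First, I would let $u^\star \in \cN$ be an element achieving $f(\{u^\star\}) = \max_{u \in \cN} f(\{u\})$. Since $|\cN - u^\star| = n - 1$, and Reduction~\ref{re:base} gives $2k \leq n$, we have $n - 1 \geq 2k - 1 \geq 2(k-1)$, so we can pick two \emph{disjoint} subsets $A, B \subseteq \cN - u^\star$ with $|A| = |B| = k - 1$. Then $A + u^\star$ and $B + u^\star$ are both feasible sets of size exactly $k$, hence each is bounded above by $f(OPT)$, and therefore
\[
    f(OPT) \;\geq\; \frac{f(A + u^\star) + f(B + u^\star)}{2}.
\]

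Next, I would bound the right-hand side from below using submodularity. Since $(A + u^\star) \cap (B + u^\star) = \{u^\star\}$ and $(A + u^\star) \cup (B + u^\star) = A \cup B + u^\star$, submodularity gives
\[
    f(A + u^\star) + f(B + u^\star) \;\geq\; f(A \cup B + u^\star) + f(\{u^\star\}) \;\geq\; f(\{u^\star\}),
\]
where the last step uses the non-negativity of $f$. Combining the two displays yields $f(OPT) \geq f(\{u^\star\})/2 = \tfrac{1}{2}\max_{u \in \cN} f(\{u\})$, which is the claim.

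There is no real obstacle here: the only subtlety is noting that $2k \le n$ suffices to produce the two disjoint $(k-1)$-subsets avoiding $u^\star$, and that symmetry of $f$ is not even needed for this lemma (only non-negativity and submodularity). The argument is essentially two lines, driven by the standard submodularity inequality applied to a pair of sets with a singleton intersection.
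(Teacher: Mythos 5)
Your proof is correct and follows exactly the same route as the paper: pick the heaviest singleton $u^\star$, use $2k \le n$ to find two disjoint $(k-1)$-subsets of $\cN - u^\star$, average the two feasible sets $A + u^\star$ and $B + u^\star$, and finish by submodularity and non-negativity. You simply spell out the submodularity step (intersection $= \{u^\star\}$, drop the union term) that the paper leaves implicit, and your remark that symmetry is not needed is also accurate.
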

\ProofOptNotBad

\begin{corollary} \label{cor:step_improvment_two_sided}
For every time $0 \leq t < T$,
\[
	F(y^1(t + \delta)) - F(y^1(t)) \geq \delta \cdot [\Delta(t) - 2 \cdot F(y^1(t))] - O(n^3\delta^2) \cdot f(OPT)
	\enspace,
\]
and
\[
	F(y^2(t + \delta)) - F(y^2(t)) \geq \delta \cdot [\Delta(t) - 2 \cdot F(y^2(t))] - O(n^3\delta^2) \cdot f(OPT)
	\enspace.
\]
\end{corollary}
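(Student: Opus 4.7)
The plan is to mirror the proof of Corollary~\ref{cor:step_improvment}, handling the two instances in parallel. I focus on the $y^1$ inequality; the $y^2$ inequality is entirely analogous up to sign conventions.

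First I would observe that the update before the second scan changes each coordinate by at most $\delta$: for $y^1$ we have $y^1_u(t+\delta) - y^1_u(t) = \delta I^1_u(t)(1-y^1_u(t)) \in [0,\delta]$, and for $y^2$ we have $y^2_u(t+\delta) - y^2_u(t) = -\delta I^2_u(t) y^2_u(t) \in [-\delta,0]$. This meets the hypothesis of Lemma~\ref{lem:greedy_step_increase_bound_general}, giving
\[
	F(y^1(t+\delta)) - F(y^1(t))
	\geq
	\delta \sum_{u \in \cN} (1 - y^1_u(t)) \cdot I^1_u(t) \cdot \partial_u F(y^1(t)) - O(n^3 \delta^2) \cdot \max_{u \in \cN} f(\{u\})
\]
and the analogous inequality for $y^2$ with the partial-derivative sum replaced by $-\sum_u y^2_u(t) I^2_u(t) \partial_u F(y^2(t))$ (both computed at $y^i(t+\delta)$ \emph{before} the second scan).

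Next I would substitute the bounds of Lemma~\ref{lem:step_approximation_two_sided} directly into the partial-derivative sums, which yields
\[
	F(y^1(t+\delta)) - F(y^1(t)) \geq \delta \cdot [\Delta(t) - 2 F(y^1(t))] - O(n^3 \delta^2) \cdot \max_{u \in \cN} f(\{u\})
\]
and symmetrically for $y^2$. By Lemma~\ref{le:OPT_not_too_bad}, $\max_{u \in \cN} f(\{u\}) \leq 2 f(OPT)$, so the error term becomes $O(n^3\delta^2) \cdot f(OPT)$ as required.

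Finally I need to argue that the second scan on Line~\ref{line:second_scan_two_sided} preserves these inequalities, i.e., that it can only increase $F(y^1(t+\delta))$ and $F(y^2(t+\delta))$. This is the step requiring the most care, and I expect it to be the main (if modest) obstacle. For $y^1$: whenever the algorithm sets $y^1_u(t+\delta) \gets 0$, the current partial derivative $\partial_u F$ is negative, and by multilinearity lowering $y^1_u$ from its current value to $0$ changes $F$ by $-y^1_u \cdot \partial_u F \geq 0$; moreover, by submodularity, zeroing out other coordinates earlier in the loop can only \emph{increase} $\partial_u F$, so if the criterion triggers for $u$, the derivative was already nonpositive earlier in the loop and the zeroing step is safe. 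The same argument, with inequalities reversed, shows that setting $y^2_u(t+\delta) \gets 1$ when $\partial_u F(y^2(t+\delta)) > 0$ increases $F$. Combining these monotonicity observations with the inequalities derived above yields the corollary. \qed
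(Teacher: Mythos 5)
Your proof is correct and takes essentially the same approach as the paper: apply Lemma~\ref{lem:greedy_step_increase_bound_general} to the pre-scan update, plug in Lemma~\ref{lem:step_approximation_two_sided}, replace $\max_u f(\{u\})$ by $2f(OPT)$ via Lemma~\ref{le:OPT_not_too_bad}, and then observe that the second scan can only increase $F$. One small remark: in your justification of the last step, the ``moreover, by submodularity, zeroing out other coordinates earlier in the loop can only increase $\partial_u F$'' sentence is superfluous (and as stated it points in a direction that neither helps nor is needed) --- the algorithm evaluates $\partial_u F$ at the \emph{current} state of $y^1(t+\delta)$ when it processes $u$, so multilinearity alone already guarantees that each individual zeroing (or, for $y^2$, raising to $1$) is nondecreasing in $F$, which is all that is required.
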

\begin{proof}
Lemmata~\ref{lem:greedy_step_increase_bound_general}, \ref{lem:step_approximation_two_sided} and~\ref{le:OPT_not_too_bad} imply:
\[
	F(y^1(t + \delta)) - F(y^1(t))
	\geq
	\delta \cdot [\Delta(t) - 2 \cdot F(y^1(t))] - O(n^3\delta^2) \cdot f(OPT)
	\enspace,
\]
where $y^1(t + \delta)$ represents its value at the beginning of the loop starting on Line~\ref{line:second_scan_two_sided}. The first part of the corollary now follows by noticing that the last loop can only increase the value of $F(y^1(t + \delta))$. The second part of the corollary is analogous.
\end{proof}

To make the lower bounds given by the above lemma useful, we need a lower bound on $\Delta(t)$. This lower bound is obtained using Lemma~\ref{lem:union_bound_symmetric}. Proving that the conditions of Lemma~\ref{lem:union_bound_symmetric} hold can be done using ideas from the proof of Lemma~\ref{lem:removing_only_decrease}.

\begin{lemma} \label{lem:delta_bound}
For every time $0 \leq t < T$, $\Delta(t) \geq f(OPT)$.
\end{lemma}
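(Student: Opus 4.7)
\medskip

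\noindent\textbf{Proof plan for Lemma~\ref{lem:delta_bound}.}
The plan is to bound the two terms in the minimum defining $\Delta(t)$ separately, each via an application of Lemma~\ref{lem:union_bound_symmetric}. Since Lemma~\ref{lem:union_bound_symmetric} requires that the base vector $x$ be a ``local maximum from below'' (meaning $F(y)\leq F(x)$ for all $y\leq x$), the main verification in each case is to check that this monotonicity property holds at $y^1(t)$ (from below) and, in an appropriate sense, at $y^2(t)$ (from above).

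\medskip

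\noindent\emph{First term.}
First I would argue that Lemma~\ref{lem:removing_only_decrease} still applies verbatim to the sequence $y^1(t)$ produced by Algorithm~\ref{alg:DoubleContinuousGreedy}: the only property used in that proof is that, after the cleanup loop on Line~\ref{line:second_scan_two_sided}, every coordinate $u$ of $y^1(t)$ either has $\partial_u F(y^1(t))\geq 0$ or equals $0$, which is exactly what the algorithm now enforces for $y^1$. Hence $F(x)\leq F(y^1(t))$ for every $x\leq y^1(t)$, which is the hypothesis of Lemma~\ref{lem:union_bound_symmetric} applied with $x=y^1(t)$ and $S=OPT$. That lemma then yields $F(y^1(t)\vee\characteristic_{OPT})\geq f(OPT)-F(y^1(t))$, i.e.\ the first term inside the minimum of $\Delta(t)$ is at least $f(OPT)$.

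\medskip

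\noindent\emph{Second term.}
For $y^2(t)$ I would prove the mirror statement: $F(x)\leq F(y^2(t))$ for every $x\geq y^2(t)$. This is the analogue of Lemma~\ref{lem:removing_only_decrease}, obtained by running essentially the same chain-rule/submodularity argument but integrating from above; the relevant algorithmic fact now is that the cleanup loop ensures that every coordinate of $y^2(t)$ has $\partial_u F(y^2(t))\leq 0$ or equals $1$. Translating this via the complementation identities $\bar F(z)=F(\characteristic_\cN-z)$ of Lemma~\ref{lem:general_properties} gives $\bar F(z)\leq \bar F(\characteristic_\cN-y^2(t))$ for every $z\leq\characteristic_\cN-y^2(t)$. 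Since $f$ is symmetric, $\bar f=f$ and $\bar F=F$, so $\bar f$ and $\bar F$ also satisfy the hypothesis of Lemma~\ref{lem:union_bound_symmetric}; applying that lemma to $\bar f$ with $S=\cOPT$ and $x=\characteristic_\cN-y^2(t)$ yields
\[
\bar F\bigl((\characteristic_\cN-y^2(t))\vee\characteristic_{\cOPT}\bigr)\;\geq\;\bar f(\cOPT)-\bar F(\characteristic_\cN-y^2(t)).
\]
Using $(\characteristic_\cN-y^2(t))\vee\characteristic_{\cOPT}=\characteristic_\cN-(y^2(t)\wedge\characteristic_{OPT})$, together with $\bar f(\cOPT)=f(OPT)$ and $\bar F(\characteristic_\cN-y^2(t))=F(y^2(t))$, rewrites this as $F(y^2(t)\wedge\characteristic_{OPT})\geq f(OPT)-F(y^2(t))$, bounding the second term inside the minimum by $f(OPT)$ as well.

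\medskip

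\noindent\emph{Combining and main obstacle.}
Taking the minimum of the two bounds gives $\Delta(t)\geq f(OPT)$, as desired. The only step that requires real care — and the one I regard as the main obstacle — is justifying the ``upward'' analogue of Lemma~\ref{lem:removing_only_decrease} for $y^2(t)$: writing $F(x)-F(y^2(t))$ as a chain-rule integral along the segment from $y^2(t)$ to $x\geq y^2(t)$, using submodularity to replace the partial derivative $\partial_{u_i}F$ on the segment by $\partial_{u_i}F$ at the larger endpoint (where the derivative is larger in absolute value but has the opposite sign from Lemma~\ref{lem:removing_only_decrease}), and finally invoking the algorithmic invariant $\partial_u F(y^2(t))\leq 0$ or $y^2_u(t)=1$ to conclude that each summand is non-positive. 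Once this is in place, the passage to $\bar F$ via Lemma~\ref{lem:general_properties} and the application of Lemma~\ref{lem:union_bound_symmetric} are formal.
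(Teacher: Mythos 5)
Your proposal is correct and follows essentially the same route as the paper: establish the ``local maximum from below'' hypothesis at $y^1(t)$ via Lemma~\ref{lem:removing_only_decrease}, prove the mirror statement at $y^2(t)$ by the analogous chain-rule argument and pass to $\bar f$/$\bar F$ via the complementation identities, and then apply Lemma~\ref{lem:union_bound_symmetric} to each side of the minimum. Your characterization of the cleanup invariant in terms of the final state (rather than the intermediate states $y^i(t)$ used in the paper's proof of Lemma~\ref{lem:removing_only_decrease}) is a valid simplification, since submodularity propagates the sign of $\partial_u F$ from $y^i(t)$ to the smaller vector $y(t)=y^n(t)$.
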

\ProofDeltaBound

Corollary~\ref{cor:step_improvment_two_sided} and Lemma~\ref{lem:delta_bound} imply together the following counterpart of Corollary~\ref{cor:good_solution}.
\begin{corollary}
For every time $0 \leq t < T$,
\[
	F(y^1(T + \delta)) - F(y^1(T)) \geq \delta \cdot [f(OPT) - 2 \cdot F(y^1(t))] - O(n^3\delta^2) \cdot f(OPT)
	\enspace,
\]
and
\[
	F(y^2(T + \delta)) - F(y^2(T)) \geq \delta \cdot [f(OPT) - 2 \cdot F(y^2(t))] - O(n^3\delta^2) \cdot f(OPT)
	\enspace.
\]
\end{corollary}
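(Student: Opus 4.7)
The plan is essentially a direct substitution, so I expect the proof to be very short. The only ingredients needed are the two inequalities from Corollary~\ref{cor:step_improvment_two_sided} and the bound $\Delta(t) \geq f(OPT)$ from Lemma~\ref{lem:delta_bound}. Both pieces have already been established, so no new ideas about the algorithm or the multilinear extension are required.

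Concretely, I would first recall that Corollary~\ref{cor:step_improvment_two_sided} provides
\[
    F(y^1(t+\delta)) - F(y^1(t)) \geq \delta \cdot [\Delta(t) - 2 \cdot F(y^1(t))] - O(n^3 \delta^2) \cdot f(OPT),
\]
and analogously with $y^2$ in place of $y^1$. Then I would apply Lemma~\ref{lem:delta_bound}, which says $\Delta(t) \geq f(OPT)$, to replace $\Delta(t)$ inside the bracket. Since the coefficient of $\Delta(t)$ is a nonnegative $\delta$, this substitution preserves the direction of the inequality and yields the two desired bounds
\[
    F(y^1(t+\delta)) - F(y^1(t)) \geq \delta \cdot [f(OPT) - 2 \cdot F(y^1(t))] - O(n^3\delta^2) \cdot f(OPT),
\]
and the symmetric statement for $y^2$.

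I do not foresee any real obstacle: the hard work has been done in Lemma~\ref{lem:step_approximation_two_sided} (which translates the LP objective chosen in Line~\ref{line:calc_I} of the algorithm into a bound on the gradient contribution) and in Lemma~\ref{lem:delta_bound} (which uses Lemma~\ref{lem:union_bound_symmetric} together with the ``removing only decreases'' phenomenon on both sides to get the $f(OPT)$ lower bound on $\Delta(t)$). The present statement is just the packaging of those two results into a shape that can be fed into the recursion arguments of Section~\ref{sec:symmetric_continuous_greedy} (analogous to how Corollary~\ref{cor:good_solution} fed into Lemmata~\ref{lem:g_bound2} and~\ref{lem:h_bound2}). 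The only minor care is to note that the two inequalities must be derived separately from their respective halves of Corollary~\ref{cor:step_improvment_two_sided}, but this is immediate because Lemma~\ref{lem:delta_bound} bounds the common quantity $\Delta(t)$ used in both.
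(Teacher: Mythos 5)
Your proposal is correct and matches what the paper intends: the corollary is stated with no proof of its own, and the sentence preceding it in the paper explicitly says it follows by combining Corollary~\ref{cor:step_improvment_two_sided} with Lemma~\ref{lem:delta_bound}, which is exactly the substitution $\Delta(t) \geq f(OPT)$ that you carry out. (You also implicitly correct the paper's typo of writing $T$ for $t$ on the left-hand sides, which is consistent with the analogous Corollary~\ref{cor:good_solution}.)
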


Repeating the same line of arguments used in Section~\ref{sec:symmetric_continuous_greedy}, the previous corollary implies:
\begin{lemma} \label{lem:values_symmetric}
$F(y^1(T)) \geq \nicefrac[]{1}{2} \cdot [1 - e^{-2T} - o(1)] \cdot f(OPT)$ and $F(y^2(T)) \geq \nicefrac[]{1}{2} \cdot [1 - e^{-2T} - o(1)] \cdot f(OPT)$.
\end{lemma}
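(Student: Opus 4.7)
The plan is to mimic, almost verbatim, the analysis used in Section~\ref{sec:symmetric_continuous_greedy} to derive the approximation ratio of Theorem~\ref{thm:polytope_symmetric}, applied separately to the two trajectories $y^1(t)$ and $y^2(t)$. The key point is that the corollary appearing immediately before Lemma~\ref{lem:values_symmetric} provides, for $i \in \{1,2\}$, the recurrence
\[
    F(y^i(t + \delta)) - F(y^i(t)) \geq \delta \cdot [f(OPT) - 2 F(y^i(t))] - O(n^3 \delta^2) \cdot f(OPT),
\]
which is exactly the recurrence provided by Corollary~\ref{cor:good_solution} in Section~\ref{sec:symmetric_continuous_greedy}. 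Consequently, the entire machinery of Lemmata~\ref{lem:g_bound2} and~\ref{lem:h_bound2} carries over with $y(t)$ replaced by $y^i(t)$: defining the same auxiliary sequence $g(0) = 0$, $g(t+\delta) = g(t) + \delta[f(OPT) - 2 g(t)]$ and the same continuous envelope $h(t) = \nicefrac[]{1}{2}[1 - e^{-2t}]\cdot f(OPT)$, one shows by induction that $g(t) \leq F(y^i(t)) + O(n^3 \delta) \cdot t \cdot f(OPT)$ and $g(t) \geq h(t)$. Chaining the two inequalities yields
\[
    F(y^i(T)) \geq h(T) - O(n^3 \delta T) \cdot f(OPT) = \tfrac{1}{2}\bigl[1 - e^{-2T} - O(n^3 \delta T)\bigr] \cdot f(OPT).
\]

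The step that requires genuine attention, and which I expect to be the main obstacle, is verifying that the error term $O(n^3 \delta T)$ is indeed $o(1)$. In Section~\ref{sec:symmetric_continuous_greedy} this was immediate because $T$ was a fixed constant given by the user, but here $T = -(n/k)\cdot \ln(1 - k/n + n^{-4})$ is a quantity depending on $n$ and $k$. Using the elementary inequality $-\ln(1 - x) \leq x/(1-x)$ with $x = k/n - n^{-4}$ and invoking Reduction~\ref{re:base} (so that $k/n \leq 1/2$ and hence $1 - x \geq 1/2 - o(1)$), one obtains $T \leq 2 + o(1)$. Since the algorithm sets $\delta = T/\lceil n^5 T \rceil \leq n^{-5}$, the error satisfies $n^3 \delta T \leq T/n^2 = o(1)$, which is exactly what is needed to absorb it into the $o(1)$ appearing in the statement.

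The argument for $y^2(t)$ is entirely symmetric: the second half of the preceding corollary gives the identical recurrence for $F(y^2(t))$, and neither the induction proving $g(t) \leq F(y^2(t)) + O(n^3 \delta t) f(OPT)$ nor the bound $g(t) \geq h(t)$ depend on any particular structural property of $y^1$ versus $y^2$. Hence the same two-line chain $F(y^2(T)) \geq g(T) - O(n^3 \delta T) f(OPT) \geq h(T) - o(1) \cdot f(OPT)$ delivers the claimed bound, completing the proof.
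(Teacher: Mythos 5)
Your proposal is correct and follows exactly the route the paper intends: the paper's text immediately before Lemma~\ref{lem:values_symmetric} simply says ``Repeating the same line of arguments used in Section~\ref{sec:symmetric_continuous_greedy}, the previous corollary implies'' the lemma, and you indeed reproduce the $g$/$h$ induction from Lemmata~\ref{lem:g_bound2} and~\ref{lem:h_bound2} verbatim with $y$ replaced by $y^i$. You go one useful step beyond the paper by explicitly verifying that $T = -\nicefrac{n}{k}\cdot\ln(1 - k/n + n^{-4})$ is $O(1)$ (you get $T \leq 2 + o(1)$; in fact, since $-\nicefrac{1}{\alpha}\ln(1-\alpha)$ is increasing on $(0,\nicefrac{1}{2}]$, one even has $T < 2\ln 2$), which is needed for the error $O(n^3\delta T)$ to be $o(1)$ and which the paper leaves implicit since Section~\ref{sec:symmetric_continuous_greedy} takes $T$ to be a literal constant.
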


We are now ready to prove the approximation ratio guaranteed by Theorem~\ref{thm:uniform_base_symmetric_approximation}.

\begin{proof}[\inArXiv{Proof }of the Approximation Ratio of Theorem~\ref{thm:uniform_base_symmetric_approximation}]
By Corollary~\ref{cor:value_as_worse_option} and Lemma~\ref{lem:values_symmetric}, the approximation ratio of Algorithm~\ref{alg:DoubleContinuousGreedy}, up to an error term of $o(1)$, is at least:
\begin{align*}
	\frac{1 - e^{-2[-(n/k) \cdot \ln(1 - k/n + n^{-4})]}}{2}
	={} &
	\frac{1 - (1 - k/n)^{2n/k} \cdot \left(1 + \frac{n^{-4}}{1 - k/n}\right)^{2n/k}}{2}\inJournal{\\}
	\geq\inJournal{{} &}
	\frac{1 - (1 - k/n)^{2n/k} \cdot e^{4n^{-3}k^{-1}}}{2}\\
	\geq{} &
	\frac{1 - (1 - k/n)^{2n/k} - [e^{4n^{-3}k^{-1}} - 1]}{2}\inJournal{\\}
	=\inJournal{{} &}
	\frac{1 - (1 - k/n)^{2n/k}}{2} - o(1)
	\enspace.
	\qedhere
\end{align*}
\end{proof}
\section{Deterministic $1/2$-Approximation for Unconstrained Symmetric Submodular Maximization} 

In this section we prove Theorem~\ref{thm:deterministic_symmetric_unconstrained}.

\begin{reptheorem}{thm:deterministic_symmetric_unconstrained}
There exists a deterministic linear-time $\nicefrac[]{1}{2}$-approximation algorithm for the problem $\max \{f(S) : S \subseteq \cN\}$, where $f\colon 2^\cN \to \bR^+$ is a non-negative symmetric submodular function.
\end{reptheorem}

Algorithm~\ref{alg:UnconstrainedDeterminisitic} is a restatement of Algorithm~1 of~\cite{BFNS15}. \citeWithName{Buchbinder et al.}{BFNS15} proved that this algorithm provides a $1/3$-approximation for the problem $\max \{f(S) : S \subseteq \cN\}$ when $f$ is a non-negative submodular function. Moreover, they showed a tight example for which the algorithm achieves only $1/3 + \eps$ approximation. We prove that when $f$ is also symmetric, the approximation ratio of this algorithm improves to $1/2$, and thus, prove Theorem~\ref{thm:deterministic_symmetric_unconstrained}.


\begin{algorithm}[!ht]
\caption{\textsf{Two-Sided Greedy}$(f,\cN)$} \label{alg:UnconstrainedDeterminisitic}
\DontPrintSemicolon
$X_0 \leftarrow \emptyset$, $Y_0 \leftarrow \cN$.\\
\For{$i$ = $1$ \KwTo $n$}{
  $a_i \gets f(X_{i - 1} + u_i) - f(X_{i - 1})$.\\
  $b_i \gets f(Y_{i - 1} - u_i) - f(Y_{i - 1})$.\\
  \lIf{$a_i \geq b_i $}{$X_i \gets X_{i - 1} + u_i$, $Y_i \gets Y_{i - 1}$.}
  \lElse{$X_i \gets X_{i - 1}$, $Y_i \gets Y_{i - 1} - u_i$.}
}
\Return{$X_n$} (or equivalently $Y_n$).
\end{algorithm}

Following~\cite{BFNS15}, we define $OPT_i \triangleq (OPT \cup X_i) \cap Y_i$, \ie, $OPT_i$ agrees with $X_i$ (and $Y_i$) on the first $i$ elements and with $OPT$ on the last $n-i$ elements. Similarly, we also define $\cOPT_i \triangleq (\cOPT \cup X_i) \cap Y_i$.

\begin{observation} \label{obs:basic_equalities}
$OPT_0 = OPT$, $\cOPT_0 = \cOPT$ and $OPT_n = \cOPT_n = X_n = Y_n$.
\end{observation}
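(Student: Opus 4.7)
The plan is straightforward, as the statement just unpacks the definitions given the dynamics of Algorithm~\ref{alg:UnconstrainedDeterminisitic}. First, the equalities for $OPT_0$ and $\cOPT_0$ follow by direct substitution: since $X_0 = \varnothing$ and $Y_0 = \cN$, one has $OPT_0 = (OPT \cup \varnothing) \cap \cN = OPT$, and analogously $\cOPT_0 = (\cOPT \cup \varnothing) \cap \cN = \cOPT$.

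The substantive content of the observation is the identity $X_n = Y_n$. I would establish this by a short induction on $i$ showing that $X_i \subseteq Y_i$ and $Y_i \setminus X_i = \{u_{i+1}, u_{i+2}, \dotsc, u_n\}$ for every $0 \leq i \leq n$. The base case $i = 0$ is immediate from $X_0 = \varnothing$ and $Y_0 = \cN$. For the inductive step, the algorithm processes $u_i$ in exactly one of two ways: either it sets $X_i = X_{i-1} + u_i$ and $Y_i = Y_{i-1}$, or it sets $X_i = X_{i-1}$ and $Y_i = Y_{i-1} - u_i$. In both branches exactly the element $u_i$ leaves $Y_{i-1} \setminus X_{i-1}$ while no other element enters or leaves, and the inclusion $X_i \subseteq Y_i$ is preserved. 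Plugging in $i = n$ gives $Y_n \setminus X_n = \varnothing$, and combined with $X_n \subseteq Y_n$ this yields $X_n = Y_n$.

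Once $X_n = Y_n$ is in hand, the remaining equalities follow from the absorption identity $(A \cup B) \cap B = B$: we have $OPT_n = (OPT \cup X_n) \cap Y_n = (OPT \cup X_n) \cap X_n = X_n$, and likewise $\cOPT_n = (\cOPT \cup X_n) \cap X_n = X_n$, which chains together into $OPT_n = \cOPT_n = X_n = Y_n$. There is no real obstacle here; the whole observation is a bookkeeping fact about how $X_i$ grows and $Y_i$ shrinks toward a common set, and how the definitions of $OPT_i$ and $\cOPT_i$ force them to agree with $X_i$ and $Y_i$ on the indices already processed.
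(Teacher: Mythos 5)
Your proof is correct and follows exactly the routine bookkeeping the paper has in mind (the paper states this as an observation without proof, since it is immediate from the definitions $X_0 = \varnothing$, $Y_0 = \cN$, and the fact that each iteration either adds $u_i$ to $X$ or removes it from $Y$, so $Y_i \setminus X_i = \{u_{i+1}, \dotsc, u_n\}$ throughout). Nothing is missing.
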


Consider the change in the value of $f(OPT_i) + f(\cOPT_i)$ as $i$ increases. The value of this expression starts as $2f(OPT)$ (for $i = 0$) and deteriorates to $2f(X_n)$ (for $i = n$). The main idea of the proof is to bound the total loss of value. This goal is achieved by the following lemma which upper bounds the loss in value occurring whenever $i$ increases by $1$. More formally, the lemma shows that the decrease in $f(OPT_i) + f(\cOPT_i)$ is no more than the \emph{total} increase in value of both solutions maintained by the algorithm, \ie, $f(X_i) + f(Y_i)$.

\begin{lemma} \label{lem:loss_gain_bound}
For every $1 \leq i \leq n$,
\[
	[f(OPT_{i - 1}) - f(OPT_i)] + [f(\cOPT_{i - 1}) - f(\cOPT_i)]
	\leq
	[f(X_i) - f(X_{i - 1})] + [f(Y_i) - f(Y_{i - 1})]
	\enspace.
\]
\end{lemma}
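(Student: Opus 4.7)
The plan is a case analysis on the branch taken by the algorithm at step $i$ together with the parity of $u_i$ with respect to $OPT$. First I would record the structural facts $X_{i-1} \subseteq Y_{i-1}$ and $u_i \in Y_{i-1} \setminus X_{i-1}$, which imply $X_{i-1} \subseteq OPT_{i-1}, \cOPT_{i-1} \subseteq Y_{i-1}$, and moreover that exactly one of $OPT_{i-1}, \cOPT_{i-1}$ contains $u_i$ (namely the one corresponding to whichever of $OPT$ or $\cOPT$ contains $u_i$). It follows that in either branch of the algorithm precisely one of $OPT_i, \cOPT_i$ agrees with its predecessor, while the other differs from it by the addition (in Case~1, when $a_i \geq b_i$) or removal (in Case~2, when $a_i < b_i$) of the single element $u_i$. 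Consequently the left-hand side of the lemma collapses to one marginal of $f$.

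In Case~1, let $S_{i-1}$ denote whichever of $OPT_{i-1}, \cOPT_{i-1}$ does not contain $u_i$; then the left-hand side equals $f(S_{i-1}) - f(S_{i-1} + u_i)$ and the right-hand side equals $a_i$. Using the symmetry of $f$, I would rewrite the left-hand side as $f(\bar{S}_{i-1}) - f(\bar{S}_{i-1} - u_i)$. The inclusions $S_{i-1} \subseteq Y_{i-1}$ and $u_i \notin S_{i-1}$ give $\bar{Y}_{i-1} + u_i \subseteq \bar{S}_{i-1}$ with $u_i$ in the smaller set, so submodularity yields $f(\bar{S}_{i-1}) - f(\bar{S}_{i-1} - u_i) \leq f(\bar{Y}_{i-1} + u_i) - f(\bar{Y}_{i-1})$; one more application of symmetry identifies the right side as $f(Y_{i-1} - u_i) - f(Y_{i-1}) = b_i$. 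Since $a_i \geq b_i$ in this case, Case~1 is complete.

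In Case~2, let $T_{i-1}$ denote whichever of $OPT_{i-1}, \cOPT_{i-1}$ contains $u_i$; then the left-hand side equals $f(T_{i-1}) - f(T_{i-1} - u_i)$ and the right-hand side equals $b_i$. This case needs no further use of symmetry: the inclusion $X_{i-1} + u_i \subseteq T_{i-1}$ (with $u_i$ in the smaller set) combined with submodularity immediately yields $f(T_{i-1}) - f(T_{i-1} - u_i) \leq f(X_{i-1} + u_i) - f(X_{i-1}) = a_i$, and $a_i < b_i$ finishes the bound.

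The only mildly delicate part is the bookkeeping -- correctly identifying which of $OPT_i, \cOPT_i$ is affected in each sub-case based on the membership of $u_i$ in $OPT$, and verifying the two containments $\bar{Y}_{i-1} + u_i \subseteq \bar{S}_{i-1}$ and $X_{i-1} + u_i \subseteq T_{i-1}$. Once these are in hand the two cases are essentially dual -- Case~1 applies submodularity on the complement (invoking symmetry twice), while Case~2 applies submodularity directly using $X_{i-1}$ as the small set -- and I do not foresee a serious obstacle. Informally, the proof shows that in both cases the loss on the $OPT$/$\cOPT$ side is bounded by $\min(a_i, b_i)$ while the gain on the $X$/$Y$ side equals $\max(a_i, b_i)$.
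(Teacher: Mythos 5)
Your proof is correct, and the overall structure matches the paper's: case on the branch the algorithm takes, observe that exactly one of $OPT_{i-1},\cOPT_{i-1}$ contains $u_i$ so that only one of the two differences on the left-hand side is nonzero, and then bound that single marginal by $\min(a_i,b_i)$ via submodularity. Your Case~2 is exactly the paper's (symmetric) argument. In Case~1, however, you take a detour that the paper avoids: you pass to complements using the symmetry of $f$ (twice), apply submodularity in the complement world via the inclusion $\bar{Y}_{i-1}+u_i\subseteq\bar{S}_{i-1}$, and then pass back. The paper instead applies submodularity directly from the inclusion $S_{i-1}\subseteq Y_{i-1}-u_i$ (which is equivalent to yours), getting $f(S_{i-1})-f(S_{i-1}+u_i)\leq f(Y_{i-1}-u_i)-f(Y_{i-1})=b_i$ with no appeal to symmetry whatsoever. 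The distinction matters a little: as stated, Lemma~\ref{lem:loss_gain_bound} is a fact about \emph{general} non-negative submodular functions; symmetry only enters at the very end of the proof of Theorem~\ref{thm:deterministic_symmetric_unconstrained}, where the telescoped bound $4f(X_n)\geq f(OPT)+f(\cOPT)$ is combined with $f(OPT)=f(\cOPT)$. Your version of Case~1 would obscure this by making the lemma appear to depend on symmetry. The fix is immediate (use the direct inclusion rather than its complement), and everything else in your argument -- the bookkeeping about which of $OPT_i,\cOPT_i$ moves, and the two containments -- is handled correctly.
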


Before proving Lemma~\ref{lem:loss_gain_bound}, let us show that Theorem~\ref{thm:deterministic_symmetric_unconstrained} follows from it.

\begin{proof}[\inArXiv{Proof }of Theorem~\ref{thm:deterministic_symmetric_unconstrained}]
Adding up Lemma~\ref{lem:loss_gain_bound} for every $1 \leq i \leq n$ gives:
\begin{align*}
  \sum_{i = 1}^n [f(OPT_{i - 1}) - f(OPT_i)] &+ \sum_{i = 1}^n [f(\cOPT_{i - 1}) - f(\cOPT_i)]\\
  \leq{} &
  \sum_{i = 1}^n [f(X_i) - f(X_{i - 1})] + \sum_{i = 1}^n [f(Y_i) - f(Y_{i - 1})]
	\enspace.
\end{align*}
The above sums are telescopic. Collapsing them and using the non-negativity of $f$ results in:
\begin{align*}
  [f(OPT_0) - f(OPT_n)] + [f(\cOPT_0) - f(\cOPT_n)]
  \leq{} &
  [f(X_n) - f(X_0)] + [f(Y_n) - f(Y_0)]\\
  \leq{} &
  f(X_n) + f(Y_n) \enspace.
\end{align*}
Using the equalities of Observation~\ref{obs:basic_equalities}, we obtain:
\[
	[f(OPT) - f(X_n)] + [f(\cOPT) - f(X_n)] \leq f(X_n) + f(X_n)
	\Rightarrow
	f(X_n) = \frac{f(OPT) + f(\cOPT)}{4}
	\enspace.
\]
The theorem now follows from the symmetry of $f$.
\end{proof}

It all boils down now to proving Lemma~\ref{lem:loss_gain_bound}.

\begin{proof}[\inArXiv{Proof }of Lemma~\ref{lem:loss_gain_bound}]
Assume $a_i \geq b_i$ (the other case is analogous). This assumption implies $X_i = X_{i-1} + u_i$ and $Y_i = Y_{i-1}$, which induce:
\begin{itemize}
 \item $OPT_i = (OPT \cup X_i) \cap Y_i = OPT_{i-1} + u_i$.
 \item $\cOPT_i = (\cOPT \cup X_i) \cap Y_i = \cOPT_{i-1} + u_i$.
\end{itemize}
Thus, the lemma we want to prove can be rewritten as:
\begin{equation} \label{eq:lemma_boils_to}
	[f(OPT_{i - 1}) - f(OPT_{i-1} + u_i)] + [f(\cOPT_{i - 1}) - f(\cOPT_{i - 1} + u_i)]
	\leq
	f(X_i) - f(X_{i - 1})
	=
	a_i
	\enspace.
\end{equation}

We have to consider two cases. If $u_i \in OPT$, then the left side of Equation~\eqref{eq:lemma_boils_to} is equal to:
\[
	f(\cOPT_{i - 1}) - f(\cOPT_{i - 1} + u_i)
	\leq
	f(Y_{i-1} - u_i) - f(Y_{i - 1})
	=
	b_i
	\leq
  a_i
	\enspace,
\]
where the first inequality follows by submodularity: $\overline{OPT}_{i-1} = (\cOPT \cup X_{i-1}) \cap Y_{i-1} \subseteq Y_{i-1} - u_i$ since $u_i \not \in \cOPT \cup X_{i-1}$. If $u_i \not \in OPT$, then the left side of Equation~\eqref{eq:lemma_boils_to} is equal to:
\[
	f(OPT_{i - 1}) - f(OPT_{i - 1} + u_i)
	\leq
	f(Y_{i-1} - u_i) - f(Y_{i - 1})
	=
	b_i
	\leq
  a_i
	\enspace,
\]
where the first inequality follows, again, by submodularity: $OPT_{i-1} = (OPT \cup X_{i-1}) \cap Y_{i-1} \subseteq Y_{i-1} - u_i$ since $u_i \not \in OPT \cup X_{i-1}$.
\end{proof}
\section{Submodular Welfare with Identical Utilities}

In this section we prove Theorem~\ref{thm:identical_submodular_welfare}. The positive and negative parts of the theorem are proved in Sections~\ref{ssc:positive_welfare} and~\ref{ssc:negative_welfare}, respectively.

\begin{reptheorem}{thm:identical_submodular_welfare}
There exists a linear-time $[1 - (1 - 1/k)^{k - 1}]$-approximation algorithm for {\SW} with $k$ players having identical non-negative submodular utility functions. Moreover, any algorithm for this problem whose approximation ratio is $[1 - (1 - 1/k)^{k - 1}] + \eps$ (for some constant $\eps > 0$) must use an exponential number of value oracle queries.
\end{reptheorem}

\subsection{Proof of the Positive Part of Theorem~\ref{thm:identical_submodular_welfare}} \label{ssc:positive_welfare}

Consider the algorithm assigning every element $u \in \cN$ to a random one out of the $k$ players. A formal description of this algorithm is given as Algorithm~\ref{alg:IdenticalSW} (the notation $[k]$ used by Algorithm~\ref{alg:IdenticalSW} denotes the set $\{1, 2, \dotsc, k\}$). We show that Algorithm~\ref{alg:IdenticalSW} has the approximation ratio guaranteed by Theorem~\ref{thm:identical_submodular_welfare}.

\begin{algorithm}[!ht]
\caption{\textsf{Random Assignment}$(f,k,\cN)$} \label{alg:IdenticalSW}
\DontPrintSemicolon
\For{$i$ = $1$ \KwTo $k$}
{
	Initialize $S_i \gets \varnothing$.\\
}
\For{each element $u \in \cN$}
{
  Choose a uniformly random $i \in [k]$.\\
  Update $S_i \gets S_i + u$.\\
}
\For{$i$ = $1$ \KwTo $k$}
{
	Assign the elements of $S_i$ to player $p_i$.
}
\end{algorithm}

Despite the simplicity of Algorithm~\ref{alg:IdenticalSW}, we do not have a simple analysis of its approximation ratio making it intuitively clear why the approximation ratio is what it is. Instead we give two analyses which prove this approximation ratio through, somewhat involved and unintuitive, mathematical manipulations. One analysis of the algorithm can be found in Appendix~\ref{app:SWOriginal}. Below we give a quite different simpler analysis suggested by Uri Feige. Both analyses use the two following known lemmata.

\begin{lemma}[{\rm Lemma~2.2 of \cite{FMV11}}] \label{lem:equal_prob_bound}
Let $f\colon 2^\cN \to \bR$ be submodular. Denote by $A(p)$ a random subset of $A$ where each element appears with probability $p$. Then, $\bE[f(A(p))] \geq (1 - p) f(\varnothing) + p \cdot f(A)$.
\end{lemma}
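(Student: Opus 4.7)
The plan is to reduce the statement to the well-known fact that the multilinear extension $F$ of a submodular function is concave along every non-negative direction. First I would rewrite the left-hand side using $F$: by definition of the multilinear extension, $\bE[f(A(p))] = F(p \cdot \characteristic_A)$, since $A(p)$ is obtained by sampling each element of $A$ independently with probability $p$ (and each element outside $A$ with probability $0$). Similarly, $f(\varnothing) = F(\mathbf{0})$ and $f(A) = F(\characteristic_A)$. The inequality then becomes
\[
    F\bigl((1-p)\mathbf{0} + p \cdot \characteristic_A\bigr) \geq (1-p)F(\mathbf{0}) + p \cdot F(\characteristic_A),
\]
which is exactly the statement that $F$ is concave along the segment from $\mathbf{0}$ to $\characteristic_A$.

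The key step is therefore to establish that the one-variable function $g(q) \triangleq F(q \cdot \characteristic_A)$ is concave on $[0,1]$. I would do this by computing $g''(q)$ via the chain rule: $g''(q) = \sum_{u,v \in A} \partial_u \partial_v F(q \cdot \characteristic_A)$. Because $F$ is multilinear, $\partial_u \partial_u F \equiv 0$, so only cross terms $u \neq v$ contribute. For $u \neq v$, the mixed partial $\partial_u \partial_v F(x)$ equals the expectation (over a random set $\RSet$ drawn from the remaining coordinates of $x$) of the discrete second difference $f(\RSet + u + v) - f(\RSet + u) - f(\RSet + v) + f(\RSet)$, which is non-positive by submodularity. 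Summing over the $|A|^2 - |A|$ cross terms yields $g''(q) \leq 0$, so $g$ is concave. Applying the definition of concavity with the convex combination $p = (1-p) \cdot 0 + p \cdot 1$ then finishes the proof.

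The only real obstacle is the derivation of concavity along non-negative directions. This is a standard and well-documented property of the multilinear extension (the computation above is entirely mechanical), so the proof is essentially one short sentence once that fact is cited. As an alternative, one could avoid invoking the multilinear extension altogether and give a direct inductive proof on $|A|$: the base case $|A| = 0$ is trivial since $A(p) = \varnothing$ almost surely, and the inductive step would condition on whether some fixed element $u \in A$ belongs to $A(p)$, using submodularity to compare the two resulting scenarios. Either approach yields the desired bound.
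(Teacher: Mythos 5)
The paper does not prove this lemma; it is cited verbatim as Lemma~2.2 of Feige, Mirrokni, and Vondr\'{a}k [FMV11], so there is no in-paper argument to compare against. Your proof is correct and is essentially the standard derivation: identify $\bE[f(A(p))]$ with $F(p\cdot\characteristic_A)$, $f(\varnothing)$ with $F(\mathbf{0})$, $f(A)$ with $F(\characteristic_A)$, and then invoke concavity of the multilinear extension along the segment from $\mathbf{0}$ to $\characteristic_A$. One precision worth making explicit: $F$ is concave only along non-negative (equivalently, non-positive) directions, not along arbitrary directions. Since $F$ is multilinear its Hessian has zero diagonal, so for a direction $d$ one has $d^{\mathsf T}\nabla^2 F(x)\,d = \sum_{u\neq v} d_u d_v\,\partial_u\partial_v F(x)$, and this is guaranteed non-positive only when all products $d_u d_v$ are non-negative; with $d$ of mixed sign the sum can be positive. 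That caveat is harmless here because $d = \characteristic_A \geq 0$, so your argument stands. Note also that non-negativity of $f$ plays no role, consistent with the lemma's statement for general real-valued submodular $f$. The alternative inductive sketch you mention (condition on whether a fixed $u\in A$ lands in $A(p)$ and recurse on $A-u$) is a valid, calculus-free route to the same bound; both are standard.
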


Following the notation of Lemma~\ref{lem:equal_prob_bound}, given a set $A$ and a probability $p$, we use $A(p)$ in the rest of the paper to denote a random set containing every element of $A$ with probability $p$, independently.


\begin{lemma}[{\rm Lemma~2.2 of~\cite{BFNS14} (rephrased)}] \label{lem:max_probability_max_damage}
Let $f\colon 2^\cN \to \bR^+$ be a non-negative submodular function, and let $R$ be a random set in which each element appears with probability at most $p$ (not necessarily independently). Then, $\bE[f(R)] \geq (1 - p) f(\varnothing)$.
\end{lemma}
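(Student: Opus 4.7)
The plan is to bound $\bE[f(R)]$ from below by the Lovász extension of $f$. Writing $x_u = \Pr[u \in R]$, the hypothesis gives $x_u \leq p$ for every $u \in \cN$. Define $\hat f(y) = \int_0^1 f(\{u \in \cN : y_u \geq t\})\, dt$, the Lovász extension of $f$, which for submodular $f$ coincides with the convex lower envelope of $f$ on $[0,1]^\cN$. A standard consequence of this coincidence is that, for any joint distribution realizing marginals $x$, one has $\bE[f(R)] \geq \hat f(x)$.

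Once this inequality is in place, the remaining step is to evaluate $\hat f(x)$ under the constraint $x \leq p \cdot \characteristic_\cN$. For every $t > p$ the level set $\{u : x_u \geq t\}$ is empty, so the integrand equals $f(\varnothing)$ on $(p,1]$; on $[0,p]$ the integrand is non-negative by the non-negativity of $f$. Splitting the integral gives
\[
\hat f(x) = \int_0^p f(\{u : x_u \geq t\})\, dt + \int_p^1 f(\varnothing)\, dt \geq (1-p) f(\varnothing),
\]
which, combined with $\bE[f(R)] \geq \hat f(x)$, yields the claim.

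The main obstacle is justifying $\bE[f(R)] \geq \hat f(x)$ for an arbitrary coupling. If one prefers not to invoke the convex-closure identification with the Lovász extension, an elementary alternative is a symmetrization argument: whenever the distribution of $R$ places positive mass on two incomparable sets $S_1, S_2$, transfer equal amounts of mass from $\{S_1, S_2\}$ to $\{S_1 \cap S_2, S_1 \cup S_2\}$. Such a transfer preserves all marginals and, by submodularity, does not increase $\bE[f(R)]$. Iterating (or passing to a limit) reduces to a distribution supported on a chain $\varnothing = T_0 \subset T_1 \subset \cdots$ with weights $\lambda_0, \lambda_1, \ldots$; the marginal of any element of $T_1$ equals $1 - \lambda_0$, so the hypothesis forces $\lambda_0 \geq 1 - p$, and non-negativity of $f$ then gives $\bE[f(R)] \geq \lambda_0 f(\varnothing) \geq (1-p) f(\varnothing)$.
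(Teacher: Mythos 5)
The paper does not prove this lemma; it is stated as a rephrasing of Lemma~2.2 of \cite{BFNS14} and used as a black box, so there is no in-paper argument to compare against. Your Lov\'{a}sz-extension proof is correct: for submodular $f$ the Lov\'{a}sz extension coincides with the convex closure, so $\bE[f(R)] \geq \hat f(x)$ for any coupling realizing the marginal vector $x$, and since every $x_u \leq p$ the level set $\{u : x_u \geq t\}$ is empty for $t > p$, giving $\hat f(x) \geq \int_p^1 f(\varnothing)\,dt = (1-p)f(\varnothing)$ once non-negativity of $f$ is used on $[0,p]$. The uncrossing alternative is also essentially sound but glosses over one case: the limiting chain need not begin at $\varnothing$. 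If the minimal set $T_0$ in the chain is nonempty, then any $u \in T_0$ has $\Pr[u \in R] = 1$, which under the hypothesis forces $p \geq 1$ and makes the claimed bound vacuously true; after disposing of that case you may take $T_0 = \varnothing$, and your marginal count then yields $\lambda_0 \geq 1 - p$ as intended. Either route establishes the lemma.
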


For every $1 \leq i \leq k$, let $OPT_i$ denote the set of elements assigned by the optimal solution to player $p_i$. Additionally, let $\pi\colon [k] \to [k]$ be a uniformly random permutation of $[k]$; and for every $0 \leq i \leq k$, let $T_i$ be the set of the elements assigned by the optimal solution to the first $i$ players according to the order defined by the permutation $\pi$. More formally,
\[
	T_i
	=
	\bigcup_{j = 1}^i OPT_{\pi(j)}
	\qquad
	\forall\; 0 \leq i \leq k
	\enspace.
\]

The following lemma bounds the change in $\bE[f(T_i(k^{-1}))]$ as a function of $i$. Let $\opt$ be the value of the optimal solution (\ie, $\opt = \sum_{i = 1}^k f(OPT_i)$).

\begin{lemma} \label{lem:partial_union_recursive}
For every $1 \leq i \leq k$,
\[
	\bE[f(T_i(k^{-1}))]
	\geq
	\left(1 - \frac{1}{k}\right) \cdot \bE[f(T_{i-1}(k^{-1}))] + \left(1 - \frac{i - 1}{k(k + 1)}\right) \cdot \frac{\opt}{k^2}
	\enspace.
\]
\end{lemma}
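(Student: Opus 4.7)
My plan is to split off the contribution of $OPT_{\pi(i)}$ from $T_i(k^{-1})$ using Lemma~\ref{lem:equal_prob_bound}, and then bound the resulting ``cross term'' by conditioning on $\pi(i)$ and invoking Lemma~\ref{lem:max_probability_max_damage}.

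Concretely, given $\pi$, the set $T_i(k^{-1})$ decomposes as the disjoint, independently sampled union $T_{i-1}(k^{-1}) \cup OPT_{\pi(i)}(k^{-1})$. The function $S \mapsto f(T_{i-1}(k^{-1}) \cup S)$ is non-negative and submodular in $S$, so Lemma~\ref{lem:equal_prob_bound} applied with $A = OPT_{\pi(i)}$ and $p = 1/k$, conditionally on $\pi$ and on $T_{i-1}(k^{-1})$, gives (after taking the full expectation)
\[
    \bE[f(T_i(k^{-1}))] \geq \left(1 - \frac{1}{k}\right) \bE[f(T_{i-1}(k^{-1}))] + \frac{1}{k} \bE\bigl[f(T_{i-1}(k^{-1}) \cup OPT_{\pi(i)})\bigr].
\]
This already delivers the first summand of the claim, and reduces the lemma to producing a lower bound of roughly $\bigl(1 - (i-1)/(k(k+1))\bigr) \opt/k$ on the cross term $B_i := \bE[f(T_{i-1}(k^{-1}) \cup OPT_{\pi(i)})]$.

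To bound $B_i$, I would condition on $\pi(i) = j$. Under this conditioning $T_{i-1}$ is the union of a uniformly random collection of $i - 1$ sets from $\{OPT_{j'} : j' \neq j\}$, so each element $u \notin OPT_j$ belongs to $T_{i-1}(k^{-1})$ with probability at most $(i-1)/(k(k-1))$, while elements of $OPT_j$ never belong to it. The function $R \mapsto f(R \cup OPT_j)$ is non-negative and submodular in $R$ and evaluates to $f(OPT_j)$ at $R = \varnothing$, so Lemma~\ref{lem:max_probability_max_damage} gives
\[
    \bE\bigl[f(T_{i-1}(k^{-1}) \cup OPT_j) \mid \pi(i) = j\bigr] \geq \left(1 - \frac{i-1}{k(k-1)}\right) f(OPT_j).
\]
Averaging over the uniformly random $\pi(i) \in [k]$ and using $\bE_\pi[f(OPT_{\pi(i)})] = \opt/k$ yields a lower bound on $B_i$, and substitution back into the recurrence completes the argument.

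The main obstacle is tightening the element-wise probability inside the application of Lemma~\ref{lem:max_probability_max_damage} so as to match the stated $k(k+1)$ denominator rather than the $k(k-1)$ that this direct argument produces. Achieving the sharper constant presumably requires either averaging over $\pi$ before (rather than after) applying the max-damage bound, or combining that bound with an extra application of Lemma~\ref{lem:equal_prob_bound} that retains and exploits the non-negative auxiliary term $\bE_\pi[f(T_i)]$. The calibration of this constant is where I would expect the technical effort of the proof to concentrate.
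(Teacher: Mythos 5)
Your proof is exactly the paper's own proof, step for step: condition on $\pi$ and on $T_{i-1}(k^{-1})$, decompose $T_i(k^{-1})$ as the independently sampled union $T_{i-1}(k^{-1}) \cup OPT_{\pi(i)}(k^{-1})$, apply Lemma~\ref{lem:equal_prob_bound} to peel off the $(1-1/k)$ term, then bound the cross term by conditioning on $\pi(i)$, observing that every element lands in $T_{i-1}(k^{-1})$ with probability at most $(i-1)/(k(k-1))$, and invoking Lemma~\ref{lem:max_probability_max_damage} on the submodular function $R \mapsto f(R \cup OPT_{\pi(i)})$.

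The ``obstacle'' you flag in your final paragraph is illusory: the $k(k+1)$ in the statement of Lemma~\ref{lem:partial_union_recursive} is a typo for $k(k-1)$. The paper's own proof of this lemma derives exactly the $\bigl(1 - \tfrac{i-1}{k(k-1)}\bigr)\cdot\tfrac{\opt}{k^2}$ bound you obtain, and the inductive proof of Lemma~\ref{lem:partial_union_closed} then cites the recursion with $k(k-1)$ in the denominator, not $k(k+1)$ — notice that the closed form $\bigl[\tfrac{k^2-i}{k(k-1)} - (1-\tfrac1k)^{i-1}\bigr]\cdot\tfrac{\opt}{k}$ also features $k-1$. So no sharpening is needed or possible here; your argument as written completes the proof of the lemma that is actually used. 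You correctly identified the inconsistency but misdiagnosed it as a gap in your proof rather than an error in the statement.
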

\begin{proof}
Let us fix the permutation $\pi$ and the set $T_{i - 1}(k^{-1})$. Observe that after these fixes each element of $OPT_{\pi(i)}$ belongs to $T_{i - 1}(k^{-1}) \cup OPT_{\pi(i)}(k^{-1})$ with probability $k^{-1}$. Moreover, $f(T_{i - 1}(k^{-1}) \cup S)$ is a non-negative submodular function of $S$. Hence, by Lemma~\ref{lem:equal_prob_bound}:
\[
	\bE[f(T_{i - 1}(k^{-1}) \cup OPT_{\pi(i)}(k^{-1}))]
	\geq
	\left(1 - \frac{1}{k}\right) \cdot f(T_{i - 1}(k^{-1})) + \frac{1}{k} \cdot f(T_{i - 1}(k^{-1}) \cup OPT_{\pi(i)})
	\enspace.
\]

We now unfix the set $T_{i - 1}(k^{-1})$ and the permutation $\pi$, except for $\pi(i)$. By the law of total expectation, the previous inequality now becomes:
\begin{equation} \label{eq:mostly_unfixed}
	\bE[f(T_{i - 1}(k^{-1}) \cup OPT_{\pi(i)}(k^{-1}))]
	\geq
	\left(1 - \frac{1}{k}\right) \cdot \bE[f(T_{i - 1}(k^{-1}))] + \frac{1}{k} \cdot \bE[f(T_{i - 1}(k^{-1}) \cup OPT_{\pi(i)})]
	\enspace,
\end{equation}
where the expectations are over the random choice of the entries other than $\pi(i)$ in $\pi$, the subset of the elements of $T_{i - 1}$ that remain in $T_{i - 1}(k^{-1})$ and the subset of the elements of $OPT_{\pi(i)}$ that remain in $OPT_{\pi(i)}(k^{-1}))$.

Observe that an element $u \in \cN$ belongs to $T_{i - 1}$ if and only if it belongs to one of the sets $\{OPT_{\pi(j)}\}_{j = 1}^{i - 1}$, which happens with probability at most $\frac{i - 1}{k - 1}$ (we say ``at most'' since this probability is $0$ for elements of $OPT_{\pi(i)}$). Moreover, notice that $OPT_{\pi(i)}$ is deterministic (as we are still fixing $\pi(i)$), and $f(S \cup OPT_{\pi(i)})$ is a non-negative submodular function of $S$. Hence, by Lemma~\ref{lem:max_probability_max_damage},
\begin{align*}
	\bE[f(T_{i - 1}(k^{-1}) \cup OPT_{\pi(i)})]
	\geq{} &
	\left(1 - \frac{\max_{u \in \cN} \Pr[u \in T_{i - 1}]}{k}\right) \cdot f(OPT_{\pi(i)})\\
	\geq{} &
	\left(1 - \frac{i - 1}{k(k - 1)}\right) \cdot f(OPT_{\pi(i)})
	\enspace.
\end{align*}

Plugging the last inequality into Inequality~\eqref{eq:mostly_unfixed} and unfixing $\pi(i)$, we get:
\begin{align*}
	\bE[f(T_{i - 1}(k^{-1}) \cup \inJournal{{} &}OPT_{\pi(i)}(k^{-1}))]\inJournal{\\}
	\geq{} &
	\left(1 - \frac{1}{k}\right) \cdot \bE[f(T_{i - 1}(k^{-1}))] + \left(1 - \frac{i - 1}{k(k - 1)}\right) \cdot \frac{\bE[f(OPT_{\pi(i)})]}{k}\\
	={} &
	\left(1 - \frac{1}{k}\right) \cdot \bE[f(T_{i - 1}(k^{-1}))] + \left(1 - \frac{i - 1}{k(k - 1)}\right) \cdot \frac{\opt}{k^2}
	\enspace.
\end{align*}

The lemma now follows since $T_i(k^{-1})$ has the same distribution as $T_{i - 1}(k^{-1}) \cup OPT_{\pi(i)}(k^{-1})$.
\end{proof}

Lemma~\ref{lem:partial_union_recursive} gives a recursive formula for a lower bound on $\bE[f(T_i(k^{-1}))]$. The next lemma proves a closed form of this lower bound.

\begin{lemma} \label{lem:partial_union_closed}
For every $0 \leq i \leq k$,
\[
	\bE[f(T_i(k^{-1}))]
	\geq
	\left[\frac{k^2 - i}{k(k - 1)} - \left(1 - \frac{1}{k}\right)^{i - 1}\right] \cdot \frac{\opt}{k}
	\enspace.
\]
\end{lemma}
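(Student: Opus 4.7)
The plan is to prove the closed-form bound by induction on $i$, using Lemma~\ref{lem:partial_union_recursive} as the driving recursion. Let me denote the target lower bound by $B(i) = \bigl[\frac{k^2 - i}{k(k - 1)} - (1 - 1/k)^{i - 1}\bigr] \cdot \opt/k$.

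For the base case $i = 0$, we have $T_0 = \varnothing$, hence $\bE[f(T_0(k^{-1}))] = f(\varnothing) \geq 0$ by non-negativity of $f$. On the other hand, $B(0) = \bigl[\frac{k^2}{k(k-1)} - (1 - 1/k)^{-1}\bigr] \cdot \opt/k = \bigl[\frac{k}{k-1} - \frac{k}{k-1}\bigr] \cdot \opt/k = 0$, so the bound holds. (This tightness of the base case is what pins down the particular form of the exponential term.)

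For the inductive step, assume the bound holds at $i - 1$ and apply Lemma~\ref{lem:partial_union_recursive} (using the $k(k-1)$ expression from its proof):
\[
	\bE[f(T_i(k^{-1}))] \geq \left(1 - \tfrac{1}{k}\right) \cdot \bE[f(T_{i-1}(k^{-1}))] + \left(1 - \tfrac{i - 1}{k(k - 1)}\right) \cdot \frac{\opt}{k^2}.
\]
Plugging in the inductive hypothesis produces two kinds of contributions. The exponential term combines cleanly via $(1 - 1/k) \cdot (1 - 1/k)^{i - 2} = (1 - 1/k)^{i - 1}$, which is exactly the exponential term appearing in $B(i)$. For the polynomial terms, it remains to verify the identity
\[
	\left(1 - \tfrac{1}{k}\right) \cdot \frac{k^2 - (i - 1)}{k(k - 1)} + \frac{1}{k}\left(1 - \tfrac{i - 1}{k(k - 1)}\right) = \frac{k^2 - i}{k(k - 1)}.
\]
This is a direct computation: clearing denominators (multiplying by $k^2(k-1)$), the left side becomes $(k - 1)(k^2 - i + 1) + k(k - 1) - (i - 1)$, which expands and collapses to $k^3 - ki = k(k^2 - i)$, matching the right side once we divide back by $k^2(k-1)$.

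This is a purely mechanical induction, so there is no conceptual obstacle — the only minor care needed is aligning the exponents $(1-1/k)^{i-2}$ and $(1-1/k)^{i-1}$ correctly and choosing the algebraically convenient grouping when expanding $(1 - 1/k) \cdot \frac{k^2 - (i-1)}{k(k-1)} = \frac{k^2 - i + 1}{k^2}$ so that the common denominator $k^2(k-1)$ makes the identity transparent.
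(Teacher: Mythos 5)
Your proof is correct and takes essentially the same route as the paper's: both proceed by induction, the base case $i=0$ collapsing to $0$, the exponential term advancing via $(1-1/k)\cdot(1-1/k)^{i-2}=(1-1/k)^{i-1}$, and the polynomial part resolved by the same algebraic identity (the paper works over denominator $k(k-1)$ whereas you clear to $k^2(k-1)$, but the arithmetic is the same). You were also right to use the $k(k-1)$ version of the recursion — the statement of Lemma~\ref{lem:partial_union_recursive} has a typo ($k(k+1)$), while its proof and the paper's proof of Lemma~\ref{lem:partial_union_closed} both use $k(k-1)$, as you did.
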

\begin{proof}
We prove the lemma by induction on $i$. First, let us prove the base case. Since $f$ is non-negative:
\[
	\bE[f(T_0(k^{-1}))]
	=
	f(\varnothing)
	\geq
	0
	=
	\left[\frac{k}{k - 1} - \frac{k}{k - 1}\right] \cdot \frac{\opt}{k}
	=
	\left[\frac{k^2 - 0}{k(k - 1)} - \left(1 - \frac{1}{k}\right)^{0 - 1}\right] \cdot \frac{\opt}{k}
	\enspace.
\]
Next, assume the lemma holds for $i - 1 \geq 0$, and let us prove it for $i$. By Lemma~\ref{lem:partial_union_recursive} and the induction hypothesis,
\begin{align*}
	\bE[f(T_i(k^{-1}))]
	\geq{} &
	\left(1 - \frac{1}{k}\right) \cdot \bE[f(T_{i - 1}(k^{-1}))] + \left(1 - \frac{i - 1}{k(k - 1)}\right) \cdot \frac{\opt}{k^2}\\
	\geq{} &
	\left(1 - \frac{1}{k}\right) \cdot \left[\frac{k^2 - i + 1}{k(k - 1)} - \left(1 - \frac{1}{k}\right)^{i - 2}\right] \cdot \frac{\opt}{k} + \left(1 - \frac{i - 1}{k(k - 1)}\right) \cdot \frac{\opt}{k^2}\\
	={} &
	\left[\frac{(1 - 1/k)(k^2 - i + 1) + (k - 1) - (i - 1)/k}{k(k - 1)} - \left(1 - \frac{1}{k}\right)^{i - 1}\right] \cdot \frac{\opt}{k}
	\enspace.
\end{align*}

The lemma now follows by plugging the next equality into the previous inequality.
\[
	\left(1 - \frac{1}{k}\right)(k^2 - i + 1) + (k - 1) - \frac{i - 1}{k}
	=
	(k^2 - i + 1) - k + \frac{i - 1}{k} + (k - 1) - \frac{i - 1}{k}
	=
	k^2 - i
	\enspace.
	\qedhere
\]
\end{proof}

We are now ready to prove the positive part of Theorem~\ref{thm:identical_submodular_welfare}.

\begin{proof}[\inArXiv{Proof }of the Positive Part of Theorem~\ref{thm:identical_submodular_welfare}]
Observe that $T_k(k^{-1})$ is a random set containing every element of $\cN$ with probability $k$, independently. Hence, $T_k(k^{-1})$ has the same distribution as every one of the sets $\{S_i\}_{i = 1}^k$. Thus, by Lemma~\ref{lem:partial_union_closed}:
\begin{align*}
	\bE\mathopen{}\left[\sum_{i = 1}^k f(S_i)\right]
	={} &
	k \cdot
	\bE[f(T_k(k^{-1}))]\\
	\geq{} &
	k \cdot \left[\frac{k^2 - k}{k(k - 1)} - \left(1 - \frac{1}{k}\right)^{k - 1}\right] \cdot \frac{\opt}{k}
	=
	\left[1 - \left(1 - \frac{1}{k}\right)^{k - 1}\right] \cdot \opt
	\enspace.
\end{align*}
The theorem now follows since $\sum_{i = 1}^k f(S_i)$ is the value of the solution produced by Algorithm~\ref{alg:IdenticalSW}.
\end{proof}

\subsection{Proof of the Negative Part of Theorem~\ref{thm:identical_submodular_welfare}} \label{ssc:negative_welfare}

Let us begin with a tight example showing that our analysis of Algorithm~\ref{alg:IdenticalSW} is tight. Consider an instance of {\SW} with $k \geq 2$ players and a set $\cN$ of $k$ items. The utility function of all the players is the non-negative submodular function $f(S)\colon \cN \to \bR^+$ defined as follows.
\[
	f(S)
	=
	\begin{cases}
		1 - \frac{|S| - 1}{k - 1} & \text{if $S \neq \varnothing$} \enspace,\\
		0 & \text{otherwise} \enspace.
	\end{cases}
\]

\begin{observation}
$\bE[f(\cN(1/k))] = 1 - (1 - 1/k)^{k - 1}$.
\end{observation}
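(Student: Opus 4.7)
The plan is to evaluate the expectation directly via linearity, exploiting the fact that $f$ is an affine function of $|S|$ on nonempty sets. First I would rewrite $f$ in a more convenient form: for every nonempty $S \subseteq \cN$,
\[
    f(S) = \frac{k - |S|}{k - 1},
\]
while $f(\varnothing) = 0$.

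Next, let $R := \cN(1/k)$, so that $|R|$ is a Binomial$(k, 1/k)$ random variable. I would split the expectation according to whether $R$ is empty:
\[
    \bE[f(R)] = \frac{1}{k-1}\,\bE\!\left[(k - |R|)\cdot \characteristic[R \neq \varnothing]\right]
    = \frac{1}{k-1}\Big(k\cdot \Pr[R \neq \varnothing] - \bE\bigl[|R| \cdot \characteristic[R \neq \varnothing]\bigr]\Big).
\]
The key observation is that $|R| \cdot \characteristic[R \neq \varnothing] = |R|$, since $|R| = 0$ whenever $R = \varnothing$. Thus $\bE[|R| \cdot \characteristic[R \neq \varnothing]] = \bE[|R|] = 1$, and $\Pr[R = \varnothing] = (1-1/k)^k$.

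Plugging these in gives $\bE[f(R)] = \frac{k - k(1-1/k)^k - 1}{k-1} = 1 - \frac{k}{k-1}(1-1/k)^k$, and a one-line simplification via $\frac{k}{k-1}(1-1/k) = 1$ converts the trailing term into $(1-1/k)^{k-1}$, completing the identity. There is no real obstacle here; the only minor subtlety is the need to handle the indicator $\characteristic[R \neq \varnothing]$ correctly, and this is resolved trivially by the observation above.
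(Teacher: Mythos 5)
Your proof is correct and is essentially the same computation as the paper's: both exploit that $f$ is affine in $|S|$ on nonempty sets, apply linearity of expectation using $\bE[|R|] = 1$, correct for the empty-set case via $\Pr[R = \varnothing] = (1-1/k)^k$, and simplify. The paper subtracts the overcount at $\varnothing$ while you multiply by $\characteristic[R \neq \varnothing]$, but this is just a cosmetic reorganization of the same bookkeeping.
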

\begin{proof}
Observe that:
\begin{align*}
	\bE[f(\cN(1/k))]
	={} &
	\bE\left[1 - \frac{|\cN(1/k)| - 1}{k - 1}\right] - \Pr[\cN(1/k) = \varnothing] \cdot \left[1 - \frac{0 - 1}{k - 1}\right]\\
	={} &
	1 - \frac{\bE[|\cN(1/k)|] - 1}{k - 1} - (1 - 1/k)^k \cdot \frac{k}{k - 1}
	=
	1 - (1 - 1/k)^{k - 1}
	\enspace.
	\qedhere
\end{align*}
\end{proof}

\begin{corollary}
There exists an instance of {\SW} with $k$ players having identical non-negative submodular utility functions for which the approximation ratio of Algorithm~\ref{alg:IdenticalSW} is $1 - (1 - 1/k)^{k - 1}$.
\end{corollary}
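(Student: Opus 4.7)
The plan is to verify the tightness example by computing both the optimum and the expected value of Algorithm~\ref{alg:IdenticalSW} on the given instance, and showing that the ratio is exactly $1 - (1 - 1/k)^{k - 1}$.

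First, I would verify that $f$ is a non-negative submodular function. Non-negativity is clear since for every non-empty $S$ we have $|S| \leq k$, giving $f(S) = 1 - (|S|-1)/(k-1) \geq 0$, and $f(\varnothing) = 0$. For submodularity, I would check $f(A) + f(B) \geq f(A \cup B) + f(A \cap B)$ by case analysis on whether $A \cap B$ is empty. When both $A$ and $B$ are non-empty and $A \cap B \neq \varnothing$ (so all four sets are non-empty), both sides equal $2 - (|A| + |B| - 2)/(k - 1)$ by using $|A \cup B| + |A \cap B| = |A| + |B|$. When $A, B$ are non-empty and disjoint, the difference $f(A) + f(B) - f(A \cup B) - f(\varnothing)$ simplifies to $1 - 1/(k - 1) = (k - 2)/(k - 1) \geq 0$. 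The case where one of $A, B$ is empty is trivial.

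Next I would compute the optimum. Assigning item $u_i$ to player $p_i$ gives every player a singleton, and $f(\{u\}) = 1 - 0/(k - 1) = 1$ for every $u \in \cN$, so the total welfare is $k$. This is clearly maximal since no player can earn more than $f(\cN) = 0 \leq 1$ from a larger bundle (in fact $f$ is bounded by $1$ on non-empty sets), so $\opt = k$.

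Finally, I would analyze Algorithm~\ref{alg:IdenticalSW}. By construction, each $S_i$ is the random set obtained by including each element of $\cN$ with probability $1/k$ independently, so $S_i$ has the same distribution as $\cN(1/k)$. By the preceding observation, $\bE[f(S_i)] = 1 - (1 - 1/k)^{k - 1}$, and by linearity of expectation the expected value of the solution is $\sum_{i = 1}^k \bE[f(S_i)] = k \cdot [1 - (1 - 1/k)^{k - 1}]$. Dividing by $\opt = k$ yields the approximation ratio $1 - (1 - 1/k)^{k - 1}$, as claimed. There is no significant obstacle here; the only step requiring any care is the submodularity verification, and the rest is a direct substitution.
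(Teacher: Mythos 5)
Your proof is correct and takes essentially the same approach as the paper: compute $\opt = k$ for the singleton assignment, use the preceding observation that $\bE[f(\cN(1/k))] = 1 - (1 - 1/k)^{k-1}$, and divide. The only addition is your explicit verification of submodularity, which the paper leaves implicit; note a small arithmetic slip there — for disjoint non-empty $A, B$ the quantity $f(A) + f(B) - f(A \cup B) - f(\varnothing)$ equals $1 + \tfrac{1}{k-1} = \tfrac{k}{k-1}$, not $1 - \tfrac{1}{k-1}$, though this is still non-negative so your conclusion stands.
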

\begin{proof}
The above instance of {\SW} has an optimal solution $OPT$ assigning a single (arbitrary but unique) element to every player. The value of this solution is $k$. On the other hand, the solution produced by Algorithm~\ref{alg:IdenticalSW} has an expected value of:
\[
	\bE\left[\sum_{i = 1}^k f(S_i)\right]
	=
	k \cdot \bE[f(\cN(1/k))]
	=
	k[1 - (1 - 1/k)^{k - 1}]
	\enspace.
	\qedhere
\]
\end{proof}

In order to convert the above tight example into an hardness result, we need the following lemma from~\cite{V13}.

\begin{lemma}[{\rm Lemma~3.2 of~\cite{V13}}] \label{lem:to_fractional}
Consider a function $f\colon 2^\cN \to \bR^+$ invariant under a group of permutations $\cG$ on the ground set $\cN$. Let $F$ be the multilinear extension of $f$, $\bar{x} = E_{\sigma \in \cG}[\sigma(x)]$ and fix any $\eps > 0$. Then there is $\delta > 0$ and functions $\hat{F}, \hat{G}\colon [0, 1]^\cN \to \bR^+$ (which are also symmetric with respect to $\cG$), satisfying:
\begin{\itemizeEnv}
	\item For all $x \in [0, 1]^\cN$, $\hat{G}(x) = \hat{F}(\bar{x})$.
	\item For all $x \in [0, 1]^\cN$, $|\hat{F}(x) - F(x)| \leq \eps$.
	\item Whenever $\|x - \bar{x}\|^2_2 \leq \delta$, $\hat{F}(x) = \hat{G}(x)$ and the value depends only on $\bar{x}$.
	\item The first partial derivatives of $\hat{F}, \hat{G}$ are absolutely continuous.
	\item If $f$ is monotone, then $\frac{\partial \hat{F}}{\partial x_i} \geq 0$ and $\frac{\partial \hat{G}}{\partial x_i} \geq 0$ everywhere.
	\item If $f$ is submodular, then $\frac{\partial^2 \hat{F}}{\partial x_i \partial x_j} \leq 0$ and $\frac{\partial^2 \hat{G}}{\partial x_i \partial x_j} \leq 0$ almost everywhere.
\end{\itemizeEnv}
\end{lemma}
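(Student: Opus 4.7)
The plan is to first construct the symmetric function $\hat{G}$, then obtain $\hat{F}$ by smoothly interpolating $F$ with $\hat{G}$ on a neighborhood of the symmetric subspace $\{x : x = \bar{x}\}$, taking care that monotonicity and submodularity survive the interpolation.

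First, I would set $\hat{G}(x) = F(\bar{x})$, which automatically makes $\hat{G}$ depend only on $\bar{x}$, and hence be invariant under $\cG$. Writing the symmetrization as a linear map with doubly-stochastic matrix $A_{ki} = \Pr_{\sigma \in \cG}[\sigma(i) = k]$, the chain rule gives $\partial_i \hat{G}(x) = \sum_k A_{ki} (\partial_k F)(\bar{x})$ and $\partial_{i,j}^2 \hat{G}(x) = \sum_{k,l} A_{ki} A_{lj} (\partial_{k,l}^2 F)(\bar{x})$. Monotonicity of $F$ transfers to $\hat{G}$ directly; for submodularity the diagonal $k=l$ terms vanish because $F$ is multilinear in each coordinate, so the remaining off-diagonal terms are non-positive by the submodularity of $F$.

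Next, I would pick a smooth cutoff $\phi\colon [0,\infty) \to [0,1]$ with $\phi(t) = 1$ for $t \leq \delta$ and $\phi(t) = 0$ for $t \geq 2\delta$, and set
\[
  \hat{F}(x) = \phi(\|x - \bar{x}\|_2^2)\, \hat{G}(x) + (1 - \phi(\|x - \bar{x}\|_2^2))\, F(x).
\]
By construction $\hat{F}$ equals $\hat{G}$ on the $\delta$-neighborhood of the symmetric subspace and equals $F$ outside the $2\delta$-neighborhood, while on the annulus in between it interpolates. Because $\|\sigma(x) - \overline{\sigma(x)}\|_2 = \|x - \bar{x}\|_2$ and $F$ is already $\cG$-symmetric as a function of a symmetrized input, both $\hat{F}$ and $\hat{G}$ are $\cG$-symmetric. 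Uniform continuity of $F$ yields $|\hat{F}(x) - F(x)| \leq |F(\bar{x}) - F(x)| \leq \eps$ for $\delta$ small enough, and absolute continuity of the first partials of $\hat{F}$ follows from those of $F$ (a polynomial) and of $\phi$ (chosen $C^{1,1}$).

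The main obstacle is preserving $\partial_i \hat{F} \geq 0$ and $\partial_{i,j}^2 \hat{F} \leq 0$ almost everywhere, since in the transition annulus derivatives of $\phi$ a priori introduce uncontrolled cross terms of the form $\phi'(\|x-\bar{x}\|_2^2)\,(\hat{G} - F)\,\partial_i \|x-\bar{x}\|_2^2$. My plan is to first replace $F$ by a mollified version $\tilde{F}(x) = \bE_z[F(x + z)]$ with $z$ a small symmetric perturbation supported in a tiny box, which keeps $\tilde{F}$ within $\eps/2$ of $F$ and preserves both monotonicity and submodularity (averaging preserves them), and then to interpolate $\tilde{F}$ with $\hat{G}$. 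Since on the symmetric subspace $\tilde F$ and $\hat G$ agree and their first partials agree as well after $\cG$-averaging, on a thin annulus the difference $\tilde F - \hat G$ and its gradient are $O(\sqrt{\delta})$ small. Therefore, by choosing $\delta$ sufficiently small compared to the mollification scale and the intrinsic bulk derivative bounds of $\tilde F$ and $\hat G$, the spurious $\phi'$ and $\phi''$ terms are dominated, and the desired sign conditions for $\partial_i \hat{F}$ and $\partial_{i,j}^2 \hat{F}$ hold almost everywhere, with the ``almost everywhere'' accommodating the measure-zero set where $\phi$ fails to be $C^2$.
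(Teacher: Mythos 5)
The paper does not prove this lemma itself; it is imported verbatim as Lemma~3.2 of Vondr\'{a}k~\cite{V13}, so there is no ``paper's own proof'' to compare against. Judged on its own merits, your construction of $\hat G$ is sound: the symmetrization map $x \mapsto \bar x$ is a symmetric idempotent doubly-stochastic matrix $P$, the diagonal second partials of $F$ vanish by multilinearity, the off-diagonal ones are non-positive by submodularity, and so $\hat G(x) = F(\bar x)$ inherits both monotonicity and submodularity by the chain rule. That part is fine.

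The gap is in the cutoff interpolation used to build $\hat F$, and it is not repairable by shrinking $\delta$. Write $\rho = \|x - \bar x\|_2^2$. Using $\cG$-invariance of $\tilde F$ one gets $\hat G - \tilde F = O(\rho)$ (the linear Taylor term of $\tilde F(\bar x) - \tilde F(x)$ vanishes since $\nabla\tilde F(\bar x)$ lies in the fixed space of $P$), which together with $|\phi'| = \Theta(1/\delta)$ and $|\partial_i\rho| = O(\sqrt\delta)$ makes the first-order cross term $\phi'(\rho)\,\partial_i\rho\,(\hat G - \tilde F)$ of size $O(\sqrt\delta)$ --- small, but of either sign, and there is nothing to absorb it when $\partial_i \tilde F$ or $\partial_i \hat G$ is zero, which happens for generic monotone $f$. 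The second derivative is worse: the terms $\phi''(\rho)\,\partial_i\rho\,\partial_j\rho\,(\hat G - \tilde F) = O(\delta^{-2})\cdot O(\delta)\cdot O(\delta) = O(1)$, $\phi'(\rho)\,\partial^2_{ij}\rho\,(\hat G - \tilde F) = O(\delta^{-1})\cdot O(1)\cdot O(\delta) = O(1)$, and $\phi'(\rho)\,\partial_i\rho\,(\partial_j\hat G - \partial_j\tilde F) = O(\delta^{-1})\cdot O(\sqrt\delta)\cdot O(\sqrt\delta) = O(1)$ all stay bounded away from zero as $\delta \to 0$, so the asserted domination of the $\phi'$ and $\phi''$ terms simply does not occur and $\partial^2_{ij}\hat F \leq 0$ fails on the annulus. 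Vondr\'{a}k's proof avoids the cutoff entirely: it blends $F$ and $\hat G$ by averaging $F$ over a $\cG$-invariant random perturbation together with a contraction of $x$ towards $\bar x$; because averages of monotone/submodular functions are again monotone/submodular, the sign conditions on first and second derivatives are preserved exactly, and the ``almost everywhere'' caveat reflects only the piecewise-polynomial kinks introduced by convolving with an indicator kernel. The missing idea in your attempt is to replace the sharp cutoff by this kind of smoothing blend rather than trying to tame the cutoff's derivatives.
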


Observe that the function $f$ depends only on the size of its input set, and thus, is invariant under any permutation of $\cN$. In other words, in our context: $\bar{x}$ is a vector having the value $|x|/k$ in all the coordinates. Let $\hat{F}$ and $\hat{G}$ be the function guaranteed by Lemma~\ref{lem:to_fractional} when it is applied to $f$ (with the group of all permutations).

\begin{lemma} \label{lem:F_G_opt}
There exists a set of $k$ integral vectors $o_1, o_2, \ldots, o_k \in \{0,1\}^\cN$ such that $\sum_{i = 1}^k o_i = \characteristic_\cN$ and $\sum_{i = 1}^k \hat{F}(o_i) \geq k(1 - \eps)$. On the other hand, for every set of $k$ vectors $x_1, x_2, \ldots, x_k \in [0,1]^\cN$ obeying $\sum_{i = 1}^k x_i = \characteristic_\cN$, it must hold that $\sum_{i = 1}^k \hat{G}(x_i) \leq k[1 - (1 - 1/k)^{k - 1} + \eps]$ and $\sum_{i = 1}^k \hat{F}(x_i) \leq k(1 + \eps)$.
\end{lemma}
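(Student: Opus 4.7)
The plan is to use the three properties of $\hat{F}$ and $\hat{G}$ provided by Lemma~\ref{lem:to_fractional}, namely that $\hat{F}$ is an $\eps$-approximation of $F$ uniformly, that $\hat{G}(x)=\hat{F}(\bar{x})$, and that $f$ (hence $F$) depends only on $|x|$ here (so $F(\bar{x})$ is a one-variable function of $|x|/k$). Each of the three required inequalities then follows from a different elementary observation: the existence of a permutation matching with all-singleton assignments, the crude bound $f\le 1$, and the concavity of a specific univariate function.

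For the first claim, I would let $u_1,\dots,u_k$ be an arbitrary enumeration of $\cN$ and set $o_i=\characteristic_{\{u_i\}}$. Then $\sum_i o_i=\characteristic_\cN$, and since $f(\{u_i\})=1$ we have $F(o_i)=1$, so by property~(ii) of Lemma~\ref{lem:to_fractional}, $\hat{F}(o_i)\ge F(o_i)-\eps=1-\eps$; summing over $i$ yields $\sum_i\hat{F}(o_i)\ge k(1-\eps)$. For the second claim, I would note that by the definition of $f$ we have $0\le f(S)\le 1$ for every $S\subseteq\cN$, hence $F(x)\le 1$ for all $x\in[0,1]^\cN$ (as $F$ is an expectation of values of $f$). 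Thus $\hat{F}(x_i)\le F(x_i)+\eps\le 1+\eps$, and summing gives $\sum_i\hat{F}(x_i)\le k(1+\eps)$.

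The heart of the proof is the bound on $\sum_i\hat{G}(x_i)$. I would first invoke $\hat{G}(x_i)=\hat{F}(\bar{x}_i)$ and then the approximation property to get $\hat{G}(x_i)\le F(\bar{x}_i)+\eps$. Since $f$ depends only on the size of its argument, the vector $\bar{x}_i$ (all of whose coordinates equal $p_i:=|x_i|/k$) satisfies
\[
    F(\bar{x}_i)
    =\bE\bigl[f(\cN(p_i))\bigr]
    =\sum_{\ell=1}^{k}\binom{k}{\ell}p_i^\ell(1-p_i)^{k-\ell}\cdot\frac{k-\ell}{k-1}
    =\frac{k}{k-1}\bigl[(1-p_i)-(1-p_i)^k\bigr]
    =:\phi(p_i),
\]
where the last equality uses the identity $\binom{k}{\ell}(k-\ell)=k\binom{k-1}{\ell}$ followed by the binomial theorem. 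A direct computation gives $\phi''(p)=-k^2(1-p)^{k-2}\le 0$, so $\phi$ is concave on $[0,1]$. The constraint $\sum_i x_i=\characteristic_\cN$ gives $\sum_i p_i=1$, so by Jensen's inequality
\[
    \sum_{i=1}^k\phi(p_i)\;\le\;k\,\phi(1/k)\;=\;k\bigl[1-(1-1/k)^{k-1}\bigr],
\]
using the observation established just before this lemma to evaluate $\phi(1/k)$. Combining with the $\eps$ slack yields $\sum_i\hat{G}(x_i)\le k[1-(1-1/k)^{k-1}+\eps]$.

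I expect the only mildly delicate step to be the algebraic simplification of $F(\bar{x}_i)$ into the closed form $\phi(p)=\tfrac{k}{k-1}[(1-p)-(1-p)^k]$, and then checking the sign of $\phi''$; both are routine once written down. Everything else is direct application of the properties supplied by Lemma~\ref{lem:to_fractional}.
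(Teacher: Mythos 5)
Your proof is correct and follows essentially the same route as the paper's: singletons for the lower bound, the uniform bound $f\le 1$ for the last inequality, and for the middle inequality a reduction to the univariate quantity $F(\bar{x}_i)$ as a function of $|x_i|/k$ followed by Jensen's inequality under the constraint $\sum_i|x_i|=k$. The only cosmetic difference is that you package the entire expression into a single concave function $\phi$ and apply Jensen once, whereas the paper applies convexity only to the $(1-|x_i|/k)^k$ term and handles the affine part of $F(\bar{x}_i)$ directly using $\sum_i|x_i|=k$; the two are interchangeable.
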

\begin{proof}
Recall that $\cN$ contains exactly $k$ elements, and let us name them $v_1, v_2, \ldots,\inJournal{\allowbreak} v_k$ (in an arbitrary order). Let $o_i = \characteristic_{v_i}$, then, clearly, $\sum_{i = 1}^k o_i = \characteristic_\cN$. On the other hand:
\[
	\sum_{i = 1}^k \hat{F}(o_i)
	\geq
	\sum_{i = 1}^k F(o_i) - k\eps
	=
	\sum_{i = 1}^k f(\{v_i\}) - k\eps
	=
	k(1 - \eps)
	\enspace.
\]

Consider now an arbitrary set of $k$ vectors $x_1, x_2, \ldots, x_k \in [0,1]^\cN$ such that: $\sum_{i = 1}^k x_i = \characteristic_\cN$. First observe that for every vector $x \in [0, 1]^\cN$:
\begin{align*}
	F(\bar{x})
	={} &
	\bE\left[1 - \frac{|\RSet(\bar{x})| - 1}{k - 1}\right] - \Pr[\RSet(\bar{x}) = \varnothing] \cdot \left[1 - \frac{0 - 1}{k - 1}\right]\\
	={} &
	1 - \frac{\bE[|\RSet(\bar{x})|] - 1}{k - 1} - \left(1 - \frac{|x|}{k}\right)^k \cdot \frac{k}{k - 1}
	=
	1 - \frac{|x| - 1}{k - 1} - \left(1 - \frac{|x|}{k}\right)^k \cdot \frac{k}{k - 1}
	\enspace.
\end{align*}
Thus:
{\inJournal{\allowdisplaybreaks}
\begin{align*}
	\sum_{i = 1}^k \hat{G}(x_i)
	={} &
	\sum_{i = 1}^k \hat{F}(\bar{x}_i)
	\leq
	\sum_{i = 1}^k F(\bar{x}_i) + k\eps\inJournal{\\}
	=\inJournal{{} &}
	k - \frac{\sum_{i = 1}^k |x_i| - k}{k - 1} - \frac{k}{k - 1} \cdot \sum_{i = 1}^k \left(1 - \frac{|x_i|}{k}\right)^k  + k\eps\\
	\leq{} &
	k - \frac{\sum_{i = 1}^k |x_i| - k}{k - 1} - \frac{k}{k - 1} \cdot \sum_{i = 1}^k \left(1 - \frac{\sum_{i = 1}^k |x_i|}{k^2}\right)^k  + k\eps\\
	={} &
	k - \frac{k^2}{k - 1} \left(1 - \frac{1}{k}\right)^k  + k\eps
	=
	k[1 - (1 - 1/k)^{k - 1} + \eps]
	\enspace.
\end{align*}
}

Finally,
\[
	\sum_{i = 1}^k \hat{F}(x_i)
	\leq
	\sum_{i = 1}^k F(x_i) + k\eps
	\leq
	k(1 + \eps)
	\enspace.
	\qedhere
\]
\end{proof}

Given an arbitrary $n \geq 1$, we construct two instances of {\SW} as follows. Both instance share a single ground set $\cN_n = \cN \times [n]$, and have $k$ players. The utility function (of all the players) in the first and second instances are:
\[
    \hat{f}(S) = \hat{F}\left(\frac{1}{n} \cdot \psi(S)\right)
		\qquad
		\text{and}
		\qquad
		\hat{g}(S) = \hat{G}\left(\frac{1}{n} \cdot \psi(S)\right)
		\enspace,
\]
respectively, where $\psi(S) \in [0, 1]^\cN$ is a vector whose $v$ coordinate counts the number of pairs in $S$ containing $v$. More formally, for every $v \in \cN$,
\[
	\psi_v(S)
	=
	|S \cap (\{v\} \times [n])|
	\enspace.
\]

The following lemma of~\cite{V13} shows (together with the gurantees of Lemma~\ref{lem:to_fractional}) that both $\hat{f}$ and $\hat{g}$ are submodular.
\begin{lemma}[{\rm Lemma~3.1 of \cite{V13}}] \label{lem:implies_submodular}
Let $F\colon [0, 1]^\cN \to \bR$, $n \geq 1$, and define $f\colon 2^{\cN \times [n]} \to \bR$ so that $f(S) = F(x)$ where $x_i = \frac{1}{n} |S \cap (\{i\} \times [n])|$. Then:
\begin{\itemizeEnv}
\item If $\frac{\partial F}{\partial x_i} \geq 0$ everywhere for each $i$, then $f$ is monotone.
\item If the first partial derivatives of $F$ are absolutely continuous and $\frac{\partial^2 F}{\partial x_i \partial x_j} \leq 0$ almost everywhere for all $i, j$, then $f$ is submodular.
\end{\itemizeEnv}
\end{lemma}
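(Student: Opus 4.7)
The plan is to reduce both parts of the lemma to statements about $F$ along straight-line segments in $[0,1]^\cN$. The key observation is that adding an element $u = (i,j) \in \cN \times [n]$ to a set $S$ not containing it changes only the $i$-th coordinate of $x = \frac{1}{n} \psi(S)$, increasing it by exactly $1/n$, while leaving the other coordinates unchanged. So every discrete marginal of $f$ corresponds to an increment of $F$ along one axis direction of length $1/n$.

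For the monotonicity part, fix $S \subseteq \cN \times [n]$ and $u = (i,j) \notin S$. Writing $x = \frac{1}{n}\psi(S)$ and using absolute continuity of $F$ on the segment, I would simply invoke the fundamental theorem of calculus:
\[
    f(S + u) - f(S) = F(x + e_i/n) - F(x) = \int_0^{1/n} \frac{\partial F}{\partial x_i}(x + t e_i)\, dt \geq 0,
\]
since the integrand is pointwise non-negative. This immediately gives $f(S+u) \geq f(S)$, hence $f$ is monotone.

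For submodularity, I would use the equivalent diminishing-returns formulation: for every $A \subseteq B$ and every $u = (i,j) \notin B$ it suffices to show $f(A+u) - f(A) \geq f(B+u) - f(B)$. Setting $x = \frac{1}{n}\psi(A) \leq \frac{1}{n}\psi(B) = y$, this reduces to
\[
    \int_0^{1/n} \frac{\partial F}{\partial x_i}(x + t e_i)\, dt \;\geq\; \int_0^{1/n} \frac{\partial F}{\partial x_i}(y + t e_i)\, dt,
\]
so it is enough to prove, for every $t$, the monotonicity $\frac{\partial F}{\partial x_i}(x + t e_i) \geq \frac{\partial F}{\partial x_i}(y + t e_i)$. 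I would establish this by showing that $z \mapsto \frac{\partial F}{\partial x_i}(z)$ is non-increasing in each coordinate $z_j$. Since $\frac{\partial F}{\partial x_i}$ is absolutely continuous, for any direction $e_j$ and $s \geq 0$,
\[
    \frac{\partial F}{\partial x_i}(z + s e_j) - \frac{\partial F}{\partial x_i}(z) = \int_0^s \frac{\partial^2 F}{\partial x_j \partial x_i}(z + r e_j)\, dr \leq 0,
\]
because the integrand is $\leq 0$ almost everywhere. Changing coordinates from $x + t e_i$ to $y + t e_i$ one axis at a time then yields the desired pointwise comparison.

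The only real subtlety, and what I expect to be the main technical obstacle, is the \emph{almost everywhere} hypothesis on the second partial derivatives: pointwise monotonicity of $\frac{\partial F}{\partial x_i}$ cannot be concluded directly from a derivative sign that holds only a.e. The absolute continuity of the first partial derivatives is exactly the hypothesis that lets one apply the fundamental theorem of calculus to integrate the second partial along any segment, absorbing the null set on which $\frac{\partial^2 F}{\partial x_j \partial x_i}$ may fail to be non-positive. Once that step is handled, everything else is just bookkeeping with one-dimensional integrals.
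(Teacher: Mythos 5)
The paper cites this lemma from~\cite{V13} without reproducing its proof, so there is no in-paper argument to compare against. Your reduction --- discrete marginals of $f$ become increments of $F$ along axis-aligned segments of length $1/n$, and monotonicity/submodularity of $f$ reduce to monotonicity of $F$ and of $\frac{\partial F}{\partial x_i}$ along such segments --- is the right and essentially the standard one. One small overshoot: for the monotone bullet you appeal to absolute continuity of $F$, which is neither hypothesized there nor needed; since $\frac{\partial F}{\partial x_i}\geq 0$ \emph{everywhere}, the one-variable function $t\mapsto F(x+te_i)$ is differentiable with non-negative derivative and is therefore non-decreasing by the mean value theorem alone.

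In the submodular bullet you have correctly located the technical obstruction, but your proposed resolution of it does not actually work. You need $\frac{\partial F}{\partial x_i}$ to be non-increasing along each coordinate direction, which you derive from $\frac{\partial F}{\partial x_i}(z+se_j)-\frac{\partial F}{\partial x_i}(z)=\int_0^s \frac{\partial^2 F}{\partial x_j\partial x_i}(z+re_j)\,dr$. Absolute continuity of $\frac{\partial F}{\partial x_i}$ licenses the fundamental theorem of calculus here, but the integrand it produces is the a.e.\ derivative of the \emph{one-dimensional restriction} of $\frac{\partial F}{\partial x_i}$ to that segment, whereas the hypothesis $\frac{\partial^2 F}{\partial x_i\partial x_j}\leq 0$ is only almost everywhere in the $|\cN|$-dimensional cube. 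A measure-zero set in the cube can meet a fixed segment in a set of full one-dimensional measure, so the sign of the integrand on the particular segments anchored at the rational points $\frac{1}{n}\psi(S)$ is not automatic; the fundamental theorem of calculus does not ``absorb'' the ambient null set as you assert. The standard fix is an approximation argument: by Fubini, for almost every base point $y$ the segment through $y$ in direction $e_j$ meets the bad set in a one-dimensional null set, so the desired monotonicity of $\frac{\partial F}{\partial x_i}$ along $e_j$ holds at almost every $y$; since an absolutely continuous partial derivative is in particular continuous, the monotonicity then extends to every $y$. With that additional step your argument closes.
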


The following lemma uses Lemma~\ref{lem:F_G_opt} to bound the values of the optimal solutions of the {\SW} instances corresponding to $\hat{f}$ and $\hat{g}$.

\begin{lemma} \label{lem:f_g_opt_bounds}
Let $\opt_{\hat{f}}$ and $\opt_{\hat{g}}$ denote the optimal values of the two {\SW} instances having $\cN_n$ as the set of items and $k$ players whose (common) objective functions are $\hat{f}$ and $\hat{g}$, respectively. Then: $k(1 - \eps) \leq \opt_{\hat{f}} \leq k(1 + \eps)$ and $\opt_{\hat{g}} \leq k[1 - (1 - 1/k)^{k - 1} + \eps]$.
\end{lemma}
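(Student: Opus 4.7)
The plan is to derive both bounds by setting up a straightforward correspondence between partitions of $\cN_n$ into $k$ parts (feasible solutions of the {\SW} instances) and $k$-tuples of vectors in $[0,1]^\cN$ summing to $\characteristic_\cN$ (the objects appearing in Lemma~\ref{lem:F_G_opt}). The correspondence is simply $S_i \mapsto x_i := \frac{1}{n}\psi(S_i)$; since every element $(v,j) \in \cN_n$ lies in exactly one part $S_i$, we have $\sum_{i=1}^k \psi(S_i) = n \cdot \characteristic_\cN$, hence $\sum_{i=1}^k x_i = \characteristic_\cN$, and by definition $\hat{f}(S_i) = \hat{F}(x_i)$ and $\hat{g}(S_i) = \hat{G}(x_i)$.

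For the two upper bounds, I would take any partition $S_1, \ldots, S_k$ of $\cN_n$ (in particular, an optimal one for the corresponding instance), apply the map above to obtain $x_1, \ldots, x_k \in [0,1]^\cN$ summing to $\characteristic_\cN$, and then invoke Lemma~\ref{lem:F_G_opt} to conclude $\opt_{\hat{f}} = \sum \hat{f}(S_i) = \sum \hat{F}(x_i) \leq k(1+\eps)$ and $\opt_{\hat{g}} = \sum \hat{g}(S_i) = \sum \hat{G}(x_i) \leq k[1 - (1-1/k)^{k-1} + \eps]$.

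For the lower bound on $\opt_{\hat{f}}$, I would go in the reverse direction starting from the integral vectors $o_1, \ldots, o_k \in \{0,1\}^\cN$ guaranteed by Lemma~\ref{lem:F_G_opt}. Define the partition of $\cN_n$ by assigning to player $p_i$ the set $S_i = \{(v,j) : o_i(v) = 1,\ j \in [n]\}$; since the $o_i$'s sum to $\characteristic_\cN$ and are integral, the $S_i$'s are pairwise disjoint and cover $\cN_n$, so this is a valid {\SW} assignment. By construction $\psi(S_i) = n \cdot o_i$, so $\hat{f}(S_i) = \hat{F}(o_i)$, and summing gives $\opt_{\hat{f}} \geq \sum_{i=1}^k \hat{F}(o_i) \geq k(1-\eps)$.

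There is no real obstacle here; the only thing to be mildly careful about is justifying that the map $S_i \mapsto \frac{1}{n}\psi(S_i)$ indeed sends partitions to $k$-tuples summing to $\characteristic_\cN$, and that in the reverse direction the integrality of the $o_i$'s is exactly what allows the preimage construction to yield an actual partition (as opposed to a fractional one). Everything else is bookkeeping that feeds directly into Lemma~\ref{lem:F_G_opt}.
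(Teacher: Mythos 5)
Your proposal matches the paper's proof essentially line for line: both directions use the correspondence $S_i \mapsto \frac{1}{n}\psi(S_i)$ between partitions of $\cN_n$ and $k$-tuples summing to $\characteristic_\cN$, feeding the result into Lemma~\ref{lem:F_G_opt}, and for the lower bound both lift the integral vectors $o_i$ to a partition of $\cN_n$ by blowing each $o_i$ up to a block of the form $S \times [n]$. The argument is correct and no step is missing.
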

\begin{proof}
By Lemma~\ref{lem:F_G_opt} there exists a set of $k$ integral vectors $o_1, o_2, \ldots, o_k \in \{0,1\}^\cN$ such that $\sum_{i = 1}^k o_i = \characteristic_\cN$ and $\sum_{i = 1}^k \hat{F}(o_i) \geq k(1 - \eps)$. Since each vector $o_i$ is integral, there exists a corresponding set $S_i$ for which $o_i = \characteristic_{S_i}$. Since $\sum_{i = 1}^k o_i = \characteristic_\cN$, the sets $S_1, S_2, \ldots, S_k$ form a partition of $\cN$.

Define $A_i = S_i \times [n]$. Clearly the sets $A_1, A_2, \ldots, A_n$ form a partition of $\cN_n$, and 
\[
	\frac{1}{n} \cdot \psi(A_i) = o_i
	\enspace.
\]
Hence,
\[
	\opt_{\hat{f}}
	\geq
	\sum_{i = 1}^k \hat{f}(A_i)
	=
	\sum_{i = 1}^k \hat{F}(o_i)
	\geq
	k(1 - \eps)
	\enspace.
\]

Next, fix arbitrary $k$ sets $B_1, B_2, \ldots, B_k$ partitioning $\cN_n$. For every element $v \in \cN$,
\[
	\sum_{i = 1}^k \frac{\psi_v(B_i)}{n}
	=
	\sum_{i = 1}^k \frac{|B_i \cap (\{v\} \times [n])|}{n}
	=
	\frac{|\{v\} \times [n]|}{n}
	=
	1
	\enspace.
\]
Hence, $\sum_{i = 1}^k \frac{\psi(B_i)}{n} = \characteristic_\cN$. The last equality implies, by Lemma~\ref{lem:F_G_opt},
\[
	\sum_{i = 1}^k \hat{f}(B_i)
	=
	\sum_{i = 1}^k \hat{F}\left(\frac{\psi(B_i)}{n}\right)
	\leq
	k(1 + \eps)
	\enspace,
\]
and
\[
	\sum_{i = 1}^k \hat{g}(B_i)
	=
	\sum_{i = 1}^k \hat{G}\left(\frac{\psi(B_i)}{n}\right)
	\leq
	k[1 - (1 - 1/k)^{k - 1} + \eps]
	\enspace.
	\qedhere
\]
\end{proof}

The following lemma shows that it is difficult to distinguish between the two above instances of {\SW} (in some sense).

\begin{lemma} \label{lem:no_distinguish}
Assume a uniformly random renaming is applied to the ground set (\ie, every element of $\cN_n$ is unified with a unique uniformly random element from $[nk]$), then any deterministic algorithm distinguishing between $\hat{f}$ and $\hat{g}$ with a constant probability requires an exponential number of value oracle queries.
\end{lemma}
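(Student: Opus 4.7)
The plan is to follow the standard symmetry gap / renaming argument (compare Section~3 of~\cite{V13}): under the random renaming, every query set $S \subseteq [nk]$ of a deterministic algorithm corresponds to a uniformly random subset of $\cN_n$ of the same size, and we want to show that with overwhelming probability such a set induces a ``balanced'' vector $\psi(S)/n$ on which $\hat{F}$ and $\hat{G}$ coincide. Once both functions return the same answer on every query, the algorithm's entire transcript is identical in the two worlds, so it cannot distinguish them.

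Concretely, I will couple the two executions: run the algorithm against $\hat{f}$ and against $\hat{g}$ on the same random renaming, and let $q$ be the (polynomial) number of queries. Say a query $S$ is \emph{safe} if $\|\psi(S)/n - \overline{\psi(S)/n}\|_2^2 \leq \delta$, where $\delta$ is the constant from Lemma~\ref{lem:to_fractional}. On a safe query, Lemma~\ref{lem:to_fractional} guarantees $\hat{F}(\psi(S)/n) = \hat{G}(\psi(S)/n)$, hence $\hat{f}(S) = \hat{g}(S)$, and the two coupled executions continue to agree. Thus, it suffices to prove that all $q$ queries are safe except with probability $o(1)$.

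For a single query, after conditioning on $|S|$, the preimage in $\cN_n$ is a uniformly random subset of $\cN_n = \cN \times [n]$ of that fixed size. For each $v \in \cN$, the coordinate $\psi_v(S)$ follows a hypergeometric distribution with mean $|S|/k$, so $\psi_v(S)/n$ has mean $|S|/(kn) = \overline{\psi(S)/n}_v$. Standard hypergeometric concentration (Chernoff / Hoeffding for sampling without replacement) gives $\Pr[|\psi_v(S)/n - |S|/(kn)| > \sqrt{\delta/k}] \leq e^{-\Omega(n)}$; a union bound over the $k$ coordinates of $\cN$ (a constant) yields that a single query is safe except with probability $e^{-\Omega(n)}$. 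Crucially, the first query $S_1$ is fixed (nonadaptive), and by induction, conditional on all previous queries being safe (which uniquely determines the transcript and hence the next query $S_{i+1}$), the randomness in the renaming restricted to elements not yet seen is still close to uniform, so the same concentration applies to $S_{i+1}$; a union bound over $q$ queries completes the argument as long as $q \cdot e^{-\Omega(n)} = o(1)$, i.e.\ as long as $q$ is sub-exponential in $n$.

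The main obstacle is controlling the adaptive conditioning cleanly: after several queries, the conditional distribution of the renaming is not uniform, because the algorithm has observed function values that leak information about $\psi(S_i)$ for past queries. The standard workaround is to observe that on the event ``all past queries are safe,'' the returned values depend on the renaming only through $\psi(S_i)/n$, and by the concentration statement these values are essentially constant (namely $\hat{F}(\overline{\psi(S_i)/n})$) up to a negligible error; hence the conditional distribution of the renaming on this event is within total variation $o(1)$ of uniform, and the concentration bound for the next query survives with only a negligible loss. Iterating this over the $q$ queries and taking a union bound gives the desired exponential lower bound on $q$.
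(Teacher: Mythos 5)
Your core ingredients are right (per-coordinate hypergeometric concentration, union bound over queries, and the conclusion that once all queries are safe the two transcripts coincide), but the ``main obstacle'' you flag --- adaptive conditioning of the renaming --- has a much cleaner resolution than the total-variation workaround you sketch, and that workaround is not rigorous as stated. The key observation, which the paper uses, is that $\hat{g}(S) = \hat{G}(\psi(S)/n) = \hat{F}\bigl(\overline{\psi(S)/n}\bigr)$ depends \emph{only} on $|S|$, since $\overline{\psi(S)/n} = \frac{|S|}{nk}\cdot\characteristic_\cN$ regardless of the renaming. Hence, when a deterministic algorithm is run with $\hat{g}$ as the oracle, its full sequence of queried sets $D_1,\dotsc,D_h\subseteq[nk]$ is a fixed, renaming-independent sequence. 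You should therefore couple by running against $\hat{g}$ (not $\hat{f}$): fix the deterministic query list $D_1,\dotsc,D_h$, apply the random renaming so that each $D_i$ is marginally a uniformly random size-$|D_i|$ subset of $\cN_n$, and union-bound the $h$ events $\{\hat{f}(D_i)\ne\hat{g}(D_i)\}$ using your concentration estimate. There is nothing adaptive left to control.

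Two smaller corrections to your sketch. First, you write that on a safe query the returned values ``depend on the renaming only through $\psi(S_i)/n$ \dots up to a negligible error'': in fact on a safe query the answer is \emph{exactly} $\hat{F}\bigl(\overline{\psi(S_i)/n}\bigr)$, by the third bullet of Lemma~\ref{lem:to_fractional} (inside the $\delta$-ball, $\hat F=\hat G$ and the common value depends only on $\bar x$), so the answers are exactly constant, not merely close. Second, the claim ``hence the conditional distribution of the renaming on this event is within total variation $o(1)$ of uniform'' is a non sequitur as written --- what actually closes the induction is precisely that the transcript on the safe event is deterministic, which is the same observation as above, just arrived at indirectly. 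Replacing the TV argument with the determinism-of-the-$\hat g$-transcript observation both repairs the gap and shortens the proof.
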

\begin{proof}
Consider a deterministic algorithm $ALG$, and let $D_1, \dotsc, D_h$ be the list of sets whose value is queried by $ALG$ when it is given $\hat{g}$ as the input. Observe that $\hat{g}$ depends on nothing except for the size of its input set, hence, the sets $D_1, \dotsc, D_h$ are identical regardless of the random renaming applied.

Assume, w.l.o.g., that $ALG$ returns one of the sets whose values it queries, and consider what happens when $ALG$ gets $\hat{f}$ as its input. If $\hat{f}(D_i) = \hat{g}(D_i)$ for every set $1 \leq i \leq h$, then $ALG$ is guaranteed to follow the same computation path as when it gets $\hat{g}$, and outputs a set of the same value in both cases. Hence, we only need to show that if $h$ is sub-exponential then with high probability $g(D_i) = f(D_i)$ for every $1 \leq i \leq h$.

Fix some $1 \leq i \leq h$. Let $x = \frac{1}{n}\psi(D_i)$. By definition,
\[
	\bar{x} = \frac{|D_i|}{nk} \cdot \characteristic_{\cN}
	\enspace.
\]
Due to the random renaming, $D_i$ is in fact a random subset of size $|D_i|$ of $\cN_n$. For every $v \in \cN$, $n \cdot x_v$ has a hypergeometric distribution. We bound the probability $n \cdot x_v$ deviates from its expectation using bounds given in \cite{S13} (these bounds are based on results of \cite{C79,H64}). First, observe that $\bE[n \cdot x_v] = |D_i|/k$. Hence,
\begin{align*}
	\Pr\left[\left|x_v - \frac{|D_i|}{nk}\right| > \sqrt{\frac{\delta}{k}}\right]
	=\inJournal{{} &}
	\Pr\left[\left|n \cdot x_v - \bE[n \cdot x_u]\right| > |D_i| \cdot \sqrt{\frac{n^2\delta}{k|D_i|^2}}\right] \inJournal{\\}
	\leq\inJournal{{} &}
	2e^{-2 \cdot \frac{n^2\delta}{k|D_i|^2} \cdot |D_i|}
	\leq
	2e^{-2 \cdot \frac{n\delta}{k^2}}
	\enspace.
\end{align*}

By the union bound, with probability at least $1 - 2ke^{-2 \cdot \frac{n\delta}{k^2}}$, $|x_v - |D_i| / (nk)| \leq \sqrt{\delta/k}$ for every $v \in \cN$, and thus:
\[
	\|x - \bar{x}\|^2_2
	\leq
	k \cdot \left(\sqrt{\frac{\delta}{k}}\right)^2
	=
	\delta
	\enspace.
\]

Hence, by Lemma~\ref{lem:to_fractional}, with probability at least $1 - 2ke^{-2 \cdot \frac{n\delta}{k^2}}$,
\[
	\hat{g}(D_i)
	=
	\hat{G}(x)
	=
	\hat{F}(x)
	=
	\hat{f}(D_i)
	\enspace.
\]

Using the union bound again, we get that with probability $1 - 2khe^{-2 \cdot \frac{n\delta}{k^2}}$, $\hat{f}(D_i) = \hat{g}(D_i)$ for every $1 \leq i \leq h$. The lemma now follows since $\delta$ and $k$ are constants and $h$ is sub-exponential in $n$.
\end{proof}

We are now ready to prove the negative part of Theorem~\ref{thm:identical_submodular_welfare}.

\begin{proof}[\inArXiv{Proof }of the Negative Part of Theorem~\ref{thm:identical_submodular_welfare}]
Fix an arbitrary deterministic algorithm $ALG$ for {\SW} with identical utility functions making a sub-exponential number of value oracle queries. By Lemma~\ref{lem:no_distinguish}, there exists a distribution of instances $\cD$ (produced via the random renaming) such that:
\begin{\itemizeEnv}
	\item Given an instance drawn from $\cD$, $ALG$ finds with probability at least $1 - \eps$ a solution of value at most $\opt_{\hat{g}}$.
	\item No instance in $\cD$ has a solution of value more than $\opt_{\hat{f}}$.
\end{\itemizeEnv}
Hence, given an instance drawn from $\cD$, the expected value of $ALG$'s solution is at most:
\[
	(1 - \eps) \cdot \opt_{\hat{g}} + \eps \cdot \opt_{\hat{f}}
	\leq
	k[1 - (1 - 1/k)^{k - 1} + \eps] + \eps k(1 + \eps)
	\leq
	k[1 - (1 - 1/k)^{k - 1} + \eps] + 2\eps k
	\enspace,
\]
where the first inequality follows from Lemma~\ref{lem:f_g_opt_bounds} and the second one follows by assuming $\eps \leq 1$ (notice that we may assume $\eps$ is smaller than any arbitrary positive constant since proving the theorem for a small value of $\eps$ proves it also for larger values of $\eps$). The approximation ratio of $ALG$ is, therefore, no better than:
\begin{align*}
	\frac{k[1 - (1 - 1/k)^{k - 1} + \eps] + 2\eps k}{\opt_{\hat{f}}}
	\leq{} &
	\frac{k[1 - (1 - 1/k)^{k - 1} + \eps] + 2\eps k}{k(1 - \eps)}\\
	={} &
	\frac{1 - (1 - 1/k)^{k - 1} + 3\eps}{1 - \eps}
	\leq
	1 - (1 - 1/k)^{k - 1} + 6\eps
	\enspace,
\end{align*}
where the last inequality assumes $\eps \leq 1/3$. This completes the proof of the theorem for deterministic algorithms. The proof extends to randomized algorithms via Yao's Principle since we have found a single distribution $\cD$ which is difficult for every deterministic algorithm using a sub-exponential number of value oracle queries.
\end{proof}

\inArXiv{
\inArXiv{\paragraph{Acknowledgment.}}
\inJournal{\begin{acknowledgment}}
We would like to thank Uri Feige for pointing out the relevance of the result of \citeWithName{Khot et al.}{KLMM08} for our work, and for simplifying the proof of Theorem~\ref{thm:identical_submodular_welfare}.
\inJournal{\end{acknowledgment}}
\bibliographystyle{plain}
\bibliography{submodular}
}

\appendix
\inJournal{\section*{APPENDIX}}
\section{Hardness of Cardinality Constraints under Symmetric Objectives} \label{sec:hardness}

In this section we prove Theorem~\ref{thm:hardness_uniform_symmetric}.

\begin{reptheorem}{thm:hardness_uniform_symmetric}
Consider the problems $\max\{f(S) : |S| = p/q \cdot n\}$ and $\max\{f(S) : |S| \leq p/q \cdot n\}$ where $p < q$ are positive constant integers and $f$ is a non-negative symmetric submodular function $f\colon 2^\cN \to \bR^+$ obeying $n / q \in \bZ$. Then, every algorithm with an approximation ratio of $\nicefrac[]{1}{2} + \eps$ for one of the above problems (for any constant $\eps > 0$) uses an exponential number of value oracle queries.
\end{reptheorem}

The classes of problems referred to by Theorem~\ref{thm:hardness_uniform_symmetric} are closed under the refinement defined by Definition~1.7 of~\cite{V13} (for given $p$ and $q$). Thus, by Theorem~1.8 of~\cite{V13}, to prove Theorem~\ref{thm:hardness_uniform_symmetric} it is enough to find (for given $p$ and $q$) a symmetric submodular function $f_{p, q}\colon 2^\cN \to \bR^+$ (and a corresponding multilinear extension $F_{p, q}$) obeying:
\begin{\itemizeEnv}
	\item $\cN = \{1, 2, \dotsc, 2q\}$.
	\item There exists a set $S' \subseteq \cN$ of size $2p$ such that $f_{p, q}(S') = 1$.
	\item There exists a permutation $\sigma\colon \cN \to \cN$ such that: $f_{p, q}(S) = f_{p, q}(\sigma(S))$ for every set $S \subseteq \cN$ and $F_{p, q}(x) \leq 1/2$ for every vector $x \in \{z \in [0, 1]^\cN : \sigma(z) = z\}$.
\end{\itemizeEnv}

\begin{proof}[\inArXiv{Proof }of Theorem~\ref{thm:hardness_uniform_symmetric}]
We define $f_{p, q}$ as follows:
\[
	f_{p, q}(S) =
	\begin{cases}
		1 & \text{if $|\{1, 2q\} \cap S| = 1$} \enspace, \\
		0 & \text{otherwise} \enspace.
	\end{cases}
\]

Let us show that $f_{p, q}$ has all the required properties. First observe that $f_{p, q}$ can be presented as the cut function of a graph containing $2q$ nodes and a single edge, hence, it is symmetric and submodular. Also, the set $S' = \{1, 2, \dotsc, 2p\}$ is of size $2p$ and have $f_{p, q}(S') = 1$.

Consider now the permutation $\sigma$ mapping every node $i$ to $2q - i + 1$. Since this permutation maps $1$ and $2q$ to each other, we get $f_{p, q}(S) = f_{p, q}(\sigma(S))$ for every set $S \subseteq \cN$. Moreover, every vector $x \in \{z \in [0, 1]^\cN : \sigma(z) = z\}$ must have: $z_1 = z_{2q}$. Hence,
\[
	F_{p, q}(x)
	=
	z_1(1 - z_{2q}) + z_{2q}(1 - z_1)
	=
	2z_1(1 - z_1)
	\leq
	\frac{1}{2}
	\enspace.
	\qedhere
\]
\end{proof}
\section{Proof of Theorem~\ref{thm:uniform_base_general_approximation}} \label{sec:uniform_base_approximation_general}

The algorithm we use to prove Theorem~\ref{thm:uniform_base_general_approximation} is Algorithm~\ref{alg:DoubleContinuousGreedy2}, which is a close variant of Algorithm~\ref{alg:DoubleContinuousGreedy}. The two algorithms defer in three points:
\begin{\itemizeEnv}
	\item The way $T$ is set.
	\item The method of choosing $I^1(t)$ (and therefore, also $I^2(t)$).
	\item The third ``foreach'' loop is removed.
\end{\itemizeEnv}
We observe that all the proofs of Section~\ref{sec:cardinality} up to Corollary~\ref{cor:value_as_worse_option} can be made to work with these changes. In other words, we know that Algorithm~\ref{alg:DoubleContinuousGreedy2} is a polynomial time algorithm whose output $y$ is a feasible solution obeying $F(y) \geq \min\{F(y^1(T)), F(y^2(T))\}$.

\begin{algorithm}[!ht]
\caption{\textsf{Double Measured Continuous Greedy - General Submodular Objectives}$(f, \cN, k)$} \label{alg:DoubleContinuousGreedy2}
\DontPrintSemicolon
\tcp{Initialization}
Set: $T \gets 1$ and $\delta \gets T (\lceil n^5 T \rceil)^{-1}$.\\
Initialize: $t \gets 0$, $y^1(0) \gets \characteristic_\varnothing$ and $y^2(0) \gets \characteristic_\cN$.\\

\BlankLine

\tcp{Main loop}
\While{$t < T$}
{
    \ForEach{$u \in \cN$}
    {
        Let $w^1_u(t) \gets F(y^1(t) \vee \characteristic_u) - F(y^1(t))$ and $w^2_u(t) \gets F(y^2(t) \wedge \characteristic_{\cN - u}) - F(y^2(t))$.
    }
    Let $I^1(t) \in [0, 1]^\cN$ and $I^2(t) \in [0, 1]^\cN$ be two vectors maximizing
		\[
			\min\{I^1(t) \cdot w^1(t) + F(y^1(t)), I^2(t) \cdot w^2(t) + F(y^2(t)) \}
		\]
		among the vectors obeying $|I^1(t)| = k$, $|I^2(t)| = n - k$ and $I^1(t) + I^2(t) = \characteristic_\cN$.\\
    \ForEach{$u \in \cN$}
    {
        Let $y^1_u(t + \delta) \gets y^1_u(t) + \delta I^1_u(t) \cdot (1 - y^1_u(t))$ and $y^2_u(t + \delta) \gets y^2_u(t) - \delta I^2_u(t) \cdot y^2_u(t)$.
    }
    $t \leftarrow t + \delta$.
}

\BlankLine

\lIf{$|y^1(T)| = |y^2(T)|$}
{
	\Return{$y^1(T)$}.
}
\lElse
{
	\Return{$y^1(T) \cdot \frac{|y^2(T)| - k}{|y^2(T)| - |y^1(T)|} + y^2(T) \cdot \frac{k - |y^1(T)|}{|y^2(T)| - |y^1(T)|}$}.
}
\end{algorithm}

The following lemma and corollary give a lower bound on the improvement of the solutions maintained by Algorithm~\ref{alg:DoubleContinuousGreedy2} in every iteration. These lemma and corollary are the counter-part of Lemma~\ref{lem:step_approximation_two_sided} and Corollary~\ref{cor:step_improvment_two_sided} from Section~\ref{sec:cardinality}. Let $\Delta(t) = \min\{F(y^1(t) \vee \characteristic_{OPT}), F(y^2(t) \wedge \characteristic_{OPT})\}$

\begin{lemma} \label{lem:step_approximation_two_sided_general}
For every time $0 \leq t < T$:
\[
	\sum_{u \in \cN} (1 - y^1_u(t)) \cdot I^1_u(t) \cdot \partial_u F(y^1(t)) + F(y^1(t)) \geq \Delta(t)
	\enspace,
\]
\[
	-\sum_{u \in \cN} y^2_u(t) \cdot I^2_u(t) \cdot \partial_u F(y^2(t)) + F(y^2(t)) \geq \Delta(t)
	\enspace.
\]
\end{lemma}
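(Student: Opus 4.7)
The proof plan is essentially a direct transcription of the proof of Lemma~\ref{lem:step_approximation_two_sided}, with the coefficient of $F(y^i(t))$ adjusted from $2$ to $1$. This adjustment is consistent on both sides: in this non-symmetric setting $\Delta(t)$ no longer contains an additive $F(y^i(t))$ term, and correspondingly the max-min objective on the definition of $I^1(t), I^2(t)$ in Algorithm~\ref{alg:DoubleContinuousGreedy2} uses $+F(y^i(t))$ rather than $+2\cdot F(y^i(t))$. So the single $F(y^i(t))$ that appears on the left-hand sides of the claimed inequalities has to precisely absorb the $-F(y^i(t))$ produced by the submodularity step, which it does exactly.

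First I would compute the weight of $\characteristic_{OPT}$ under $w^1(t)$ and the weight of $\characteristic_{\cOPT}$ under $w^2(t)$. By submodularity these are bounded below, respectively, by $F(y^1(t) \vee \characteristic_{OPT}) - F(y^1(t))$ and $F(y^2(t) \wedge \characteristic_{OPT}) - F(y^2(t))$, using exactly the telescoping argument from the proof of Lemma~\ref{lem:step_approximation_two_sided}. Since $|OPT| = k$ and $|\cOPT| = n - k$ and $\characteristic_{OPT} + \characteristic_{\cOPT} = \characteristic_\cN$, the pair $(\characteristic_{OPT}, \characteristic_{\cOPT})$ is feasible for the max-min program defining $I^1(t), I^2(t)$. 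Hence
\[
    \min\{I^1(t) \cdot w^1(t) + F(y^1(t)),\; I^2(t) \cdot w^2(t) + F(y^2(t))\} \geq \Delta(t).
\]

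Finally, I would invoke multilinearity of $F$: for each $u \in \cN$,
\[
    w^1_u(t) = F(y^1(t) \vee \characteristic_u) - F(y^1(t)) = (1 - y^1_u(t)) \cdot \partial_u F(y^1(t)),
\]
\[
    w^2_u(t) = F(y^2(t) \wedge \characteristic_{\cN - u}) - F(y^2(t)) = -y^2_u(t) \cdot \partial_u F(y^2(t)).
\]
Substituting these identities into $I^1(t) \cdot w^1(t)$ and $I^2(t) \cdot w^2(t)$ and rearranging the above $\min$-inequality split into its two components yields precisely the two claimed bounds.

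There is no real obstacle here; the step that is most important to verify carefully is the bookkeeping of coefficients, namely that the algorithm's choice of adding $+F(y^i(t))$ (rather than $+2F(y^i(t))$) matches the new definition of $\Delta(t)$ which lacks the extra $F(y^i(t))$ summand. Apart from this coefficient bookkeeping, every step reproduces an argument already carried out in the symmetric case.
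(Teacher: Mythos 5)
Your proposal is correct and follows essentially the same path as the paper's proof: compute the weights of $\characteristic_{OPT}$ and $\characteristic_{\cOPT}$ under $w^1(t)$ and $w^2(t)$, bound them from below via submodularity, use that the pair $(\characteristic_{OPT},\characteristic_{\cOPT})$ is feasible in the max-min program while $(I^1(t),I^2(t))$ is optimal, and finally translate $w^i_u(t)$ into partial derivatives using multilinearity. Your observation about the consistent coefficient shift (replacing $2F(y^i(t))$ with $F(y^i(t))$ on both sides to match the altered definition of $\Delta(t)$) is exactly the right bookkeeping check and matches what the paper does.
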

\begin{proof}
Let us calculate the weights of $OPT$ and $\cOPT$ according to the weight functions $w^1(t)$ and $w^2(t)$, respectively.
\[
    w^1(t) \cdot \characteristic_{OPT}
    =
    \sum_{u \in OPT} w^1_u(t)
    =
    \sum_{u \in OPT} \left[F(y^1(t) \vee \characteristic_u) - F(y^1(t))\right]
		\geq
    F(y^1(t) \vee \characteristic_{OPT}) - F(y^1(t)) \enspace,
\]
and
\[
    w^2(t) \cdot \characteristic_{\cOPT}
    =
    \sum_{u \in \cOPT} w^2_u(t)
    =
    \sum_{u \in \cOPT} \left[F(y^2(t) \wedge \characteristic_{\cN - u}) - F(y^2(t))\right]
		\geq
    F(y^2(t) \wedge \characteristic_{OPT}) - F(y^2(t)) \enspace,
\]
where the inequalities follow from submodularity. Since $|OPT| = k$,
\begin{align*}
	&
	\min\{I^1(t) \cdot w^1(t) + F(y^1(t)), I^t(t) \cdot w^2(t) + F(y^2(t)) \}\\
	\geq{} &
	\min\{\characteristic_{OPT} \cdot w^1(t) + F(y^1(t)), \characteristic_{\cOPT} \cdot w^2(t) + F(y^2(t)) \}\\
	\geq{} &
	\min\{F(y^1(t) \vee \characteristic_{OPT}), F(y^2(t) \wedge \characteristic_{OPT})\}
	=
	\Delta(t)
	\enspace.
\end{align*}
Hence,
\begin{align*}
    \sum_{u \in \cN} (1 - y^1_u(t)) \cdot I^1_u(t) \cdot \partial_u F(y^1(t))
    ={} &
    \sum_{u \in \cN} I^1_u(t) \cdot [F(y^1(t) \vee \characteristic_u) - F(y^1(t))]
    =
    I^1(t) \cdot w^1(t)\\
    \geq{} &
    \Delta(t) - F(y^1(t)) \enspace,
\end{align*}
and
\begin{align*}
    -\sum_{u \in \cN} y^2_u(t) \cdot I^2_u(t) \cdot \partial_u F(y^2(t))
    ={} &
    \sum_{u \in \cN} I^2_u(t) \cdot [F(y^2(t) \wedge \characteristic_{\cN - u}) - F(y^2(t))]
    =
    I^2(t) \cdot w^2(t)\\
    \geq{} &
    \Delta(t) - F(y^2(t)) \enspace.
		\qedhere
\end{align*}
\end{proof}

\begin{corollary} \label{cor:step_improvment_two_sided_general}
For every time $0 \leq t < T$,
\[
	F(y^1(t + \delta)) - F(y^1(t)) \geq \delta \cdot [\Delta(t) - F(y^1(t))] - O(n^3\delta^2) \cdot f(OPT)
	\enspace,
\]
and
\[
	F(y^2(t + \delta)) - F(y^2(t)) \geq \delta \cdot [\Delta(t) - F(y^2(t))] - O(n^3\delta^2) \cdot f(OPT)
	\enspace,
\]
\end{corollary}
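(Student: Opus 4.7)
The plan is to mimic the proof of Corollary~\ref{cor:step_improvment_two_sided} from the symmetric setting, with two small adjustments: (i) the coefficient of $F(y^i(t))$ in the lower bound is $1$ rather than $2$, matching the weakened form of Lemma~\ref{lem:step_approximation_two_sided_general}; and (ii) since Algorithm~\ref{alg:DoubleContinuousGreedy2} omits the corrective ``foreach'' loop, there is no post-processing step to account for at the end of each iteration.

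For the first inequality I would apply Lemma~\ref{lem:greedy_step_increase_bound_general} to $x = y^1(t)$ and $x' = y^1(t + \delta)$. The update rule gives
\[
	y^1_u(t + \delta) - y^1_u(t) \;=\; \delta\, I^1_u(t)\,(1 - y^1_u(t)),
\]
which has absolute value at most $\delta$, so the lemma applies and yields
\[
	F(y^1(t + \delta)) - F(y^1(t)) \;\geq\; \delta \sum_{u \in \cN} (1 - y^1_u(t)) \cdot I^1_u(t) \cdot \partial_u F(y^1(t)) - O(n^3 \delta^2) \cdot \max_{u \in \cN} f(\{u\}).
\]
The first inequality of Lemma~\ref{lem:step_approximation_two_sided_general} bounds the sum on the right-hand side from below by $\Delta(t) - F(y^1(t))$, giving the claimed inequality up to replacing the $\max_u f(\{u\})$ factor in the error term by $f(OPT)$.

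The argument for the $y^2$ inequality is entirely analogous: the update $y^2_u(t + \delta) - y^2_u(t) = -\delta I^2_u(t) y^2_u(t)$ again has magnitude at most $\delta$, Lemma~\ref{lem:greedy_step_increase_bound_general} applies, and combining it with the second inequality of Lemma~\ref{lem:step_approximation_two_sided_general} yields $F(y^2(t+\delta)) - F(y^2(t)) \geq \delta[\Delta(t) - F(y^2(t))] - O(n^3 \delta^2) \cdot \max_u f(\{u\})$.

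The one point that merits real care -- and is the main (minor) obstacle -- is the conversion $\max_u f(\{u\}) = O(f(OPT))$ in the error term, because Lemma~\ref{le:OPT_not_too_bad} was stated and used in Section~\ref{sec:cardinality} under the symmetry assumption. However, an inspection of its proof shows that symmetry is never invoked: the argument only uses non-negativity and submodularity, together with the existence of two disjoint $(k-1)$-subsets of $\cN - u^*$, which requires $n \geq 2k - 1$. We can guarantee this by an analog of Reduction~\ref{re:base}: if $k > n/2$, replace $(f, k)$ by $(\bar{f}, n - k)$. Even though $\bar{f} \ne f$ in the non-symmetric setting, $\bar{f}$ is still a non-negative submodular function, and $\max\{f(S): |S| = k\}$ is equivalent to $\max\{\bar{f}(S): |S| = n - k\}$, so we can assume $2k \leq n$ without loss of generality. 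With this in place, Lemma~\ref{le:OPT_not_too_bad} goes through verbatim and completes the proof.
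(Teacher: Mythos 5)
Your proof is correct and takes essentially the same approach as the paper, which cites Lemmata~\ref{lem:greedy_step_increase_bound_general}, \ref{le:OPT_not_too_bad}, and~\ref{lem:step_approximation_two_sided_general} in a one-line proof with a footnote noting that Lemma~\ref{le:OPT_not_too_bad} does not require symmetry. You correctly fill in the details, and your observation that the assumption $2k \leq n$ (via the analog of Reduction~\ref{re:base} for general submodular $\bar f$) is needed for Lemma~\ref{le:OPT_not_too_bad} matches the paper's intent, which carries the reduction over via the blanket remark at the start of the appendix that all proofs up to Corollary~\ref{cor:value_as_worse_option} still apply.
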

\begin{proof}
The corollary follows immediately from Lemmata~\ref{lem:greedy_step_increase_bound_general}, \ref{le:OPT_not_too_bad}\footnote{The proof of Lemma~\ref{le:OPT_not_too_bad} does not use the symmetry of $f$.} and~\ref{lem:step_approximation_two_sided_general}.
\end{proof}

To use the lower bounds give by Corollary~\ref{cor:step_improvment_two_sided_general}, we need the following lemma of~\cite{FNS11}.

\begin{lemma}[{\rm Lemma~III.5 of \cite{FNS11}}] \label{lem:target_lower_bound}
Consider a vector $x \in [0, 1]^\cN$. Assuming $x_u \leq a$ for every $u \in \cN$, then for every set $S \subseteq \cN$, $F(x \vee \characteristic_{S}) \geq (1 - a)f(S)$.
\end{lemma}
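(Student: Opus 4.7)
The plan is to realize the statement as a direct instance of Lemma~\ref{lem:max_probability_max_damage}. First I would expand the left-hand side using the definition of the multilinear extension: since $(x \vee \characteristic_S)_u = 1$ for every $u \in S$ and $(x \vee \characteristic_S)_u = x_u \leq a$ for every $u \not\in S$, we have $F(x \vee \characteristic_S) = \bE[f(S \cup R)]$, where $R$ is a random subset of $\cN \setminus S$ obtained by including each $u \not\in S$ independently with probability $x_u$.

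Next, I would define the contracted function $g\colon 2^{\cN \setminus S} \to \bR^+$ by $g(T) = f(S \cup T)$. Standard facts imply that $g$ is non-negative (inherited from $f$) and submodular (as a contraction of a submodular function). In these terms the claim becomes $\bE[g(R)] \geq (1-a)\, g(\varnothing)$, since $g(\varnothing) = f(S)$.

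The key observation is that every element $u \in \cN \setminus S$ appears in $R$ with probability $x_u \leq a$, so $R$ satisfies the hypothesis of Lemma~\ref{lem:max_probability_max_damage} with parameter $p = a$ (the independence of the inclusions is not even needed). Applying that lemma to $g$ yields $\bE[g(R)] \geq (1-a)\, g(\varnothing) = (1-a)\, f(S)$, which combined with the first step gives the desired inequality.

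There is no real obstacle here; the whole lemma is essentially a bookkeeping identity plus one invocation of Lemma~\ref{lem:max_probability_max_damage}. The only mild subtlety is being explicit that the contracted function $g$ is still non-negative submodular, which is immediate but worth stating so that the application of Lemma~\ref{lem:max_probability_max_damage} is clean.
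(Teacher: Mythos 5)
The paper imports this as Lemma~III.5 of \cite{FNS11} and does not reprove it, so there is no in-paper proof to compare against. Your argument is correct and self-contained: the decomposition $F(x \vee \characteristic_S) = \bE[f(S \cup R)]$ with $R \subseteq \cN \setminus S$ including each $u$ with probability $x_u \leq a$ is exactly right, the contracted function $g(T) = f(S \cup T)$ is indeed non-negative and submodular, and Lemma~\ref{lem:max_probability_max_damage} applies directly with $p = a$ to give $\bE[g(R)] \geq (1-a)g(\varnothing) = (1-a)f(S)$. This is the same mechanism the paper itself uses in the proof of Lemma~\ref{lem:partial_union_recursive}, so the route is well aligned with the paper's toolbox; note only that \cite{FNS11} predates \cite{BFNS14} and so could not have phrased its original proof via Lemma~\ref{lem:max_probability_max_damage}, making yours a (slightly more modern) rederivation rather than a reconstruction of the cited source's argument.
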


We notice that Lemma~\ref{lem:target_lower_bound} applies also to $\bar{F}$ since $\bar{f}$ is also submodular. The following lemma is a counterpart of Lemma~III.6 of \cite{FNS11}.

\begin{lemma} \label{lem:max_y}
For every time $0 \leq t \leq T$ and element $u \in \cN$, $\max\{y^1_u(t), 1 - y^2_u(t)\} \leq 1 - (1 - \delta)^{t / \delta} \leq 1 - e^{-t} + O(\delta)$.
\end{lemma}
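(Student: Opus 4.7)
The plan is to handle the two quantities $y^1_u(t)$ and $1 - y^2_u(t)$ separately, each by a simple induction on $t/\delta$, and then to convert the discrete bound $1-(1-\delta)^{t/\delta}$ into the continuous bound $1 - e^{-t} + O(\delta)$ by a standard estimate on $\ln(1-\delta)$.

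For $y^1$, observe that $I^1_u(t) \in [0,1]$, so the update rule in Algorithm~\ref{alg:DoubleContinuousGreedy2} gives $y^1_u(t+\delta) \leq y^1_u(t) + \delta(1 - y^1_u(t))$, which rearranges to $1 - y^1_u(t+\delta) \geq (1-\delta)(1 - y^1_u(t))$. Since $y^1_u(0) = 0$, a straightforward induction on the number of iterations yields $1 - y^1_u(t) \geq (1-\delta)^{t/\delta}$, i.e., $y^1_u(t) \leq 1 - (1-\delta)^{t/\delta}$. The argument for $y^2$ is symmetric: since $I^2_u(t) \in [0,1]$, the update gives $y^2_u(t+\delta) \geq (1-\delta) y^2_u(t)$, and $y^2_u(0) = 1$ implies $y^2_u(t) \geq (1-\delta)^{t/\delta}$, so $1 - y^2_u(t) \leq 1 - (1-\delta)^{t/\delta}$.

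It remains to verify the continuous upper bound $1 - (1-\delta)^{t/\delta} \leq 1 - e^{-t} + O(\delta)$. Using $\ln(1-\delta) \geq -\delta - \delta^2$ for small enough $\delta$ (which holds since $\delta \leq n^{-5}$), we get $(1-\delta)^{t/\delta} = e^{(t/\delta)\ln(1-\delta)} \geq e^{-t - t\delta} \geq e^{-t}(1 - t\delta)$, and therefore $1 - (1-\delta)^{t/\delta} \leq 1 - e^{-t} + t\delta\, e^{-t} = 1 - e^{-t} + O(\delta)$, where the $O(\delta)$ is justified by $t \leq T = 1$ being constant.

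I do not expect any real obstacle here; both steps are routine. The only subtlety is to keep the direction of the inequalities straight (upper bound on $y^1_u(t)$ versus lower bound on $y^2_u(t)$), and to note that the argument works uniformly in $u$ because it uses only $I^i_u(t) \in [0,1]$ and the respective boundary conditions $y^1_u(0)=0$, $y^2_u(0)=1$.
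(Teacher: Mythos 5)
Your proposal is correct and takes essentially the same route as the paper: induction on $t$ for the discrete bound $1-(1-\delta)^{t/\delta}$, followed by a standard exponential estimate to pass to $1 - e^{-t} + O(\delta)$. The only cosmetic differences are that you track $1 - y^1_u(t)$ rather than $y^1_u(t)$ directly, and you invoke $t \le T = 1$ to absorb $t\delta e^{-t}$ into $O(\delta)$, whereas the paper uses the $T$-independent bound $t e^{-t} \le e^{-1}$.
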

\begin{proof}
We first prove the inequality $y^1_u(t) \leq 1 - (1 - \delta)^{t / \delta}$. The proof is done by induction on $t$. For $t = 0$, the inequality holds because $y^1_u(0) = 0 = 1 - (1 - \delta)^{0/\delta}$. Assume the inequality holds for some $t$, and let us prove it for $t + \delta$.
\begin{align*}
    y^1_u(t + \delta)
    ={} &
    y^1_u(t) + \delta I^1_u(t) (1 - y^1_u(t))
    =
    y^1_u(t) (1 - \delta I^1_u(t)) + \delta I^1_u(t)\\
    \leq{} &
    (1 - (1 - \delta)^{t / \delta}) (1 - \delta I^1_u(t)) + \delta I^1_u(t)
    =
    1 - (1 - \delta)^{t / \delta} + \delta I^1_u(t)(1 - \delta)^{t / \delta}\\
    \leq{} &
    1 - (1 - \delta)^{t / \delta} + \delta (1 - \delta)^{t / \delta}
    =
    1 - (1 - \delta)^{(t + \delta) / \delta} \enspace.
\end{align*}
The proof that $1 - y^2_u(t) \leq 1 - (1 - \delta)^{t / \delta}$ is analogous to the above proof. To complete the proof of the lemma, we still need show that the inequality $1 - (1 - \delta)^{t / \delta} \leq 1 - e^{-t} + O(\delta)$ holds:
\[
    1 - (1 - \delta)^{t / \delta}
    \leq
    1 - [e^{-1}(1 - \delta)]^t
    =
    1 - e^{-t}(1 - \delta)^t
    \leq
    1 - e^{-t}(1 - t \delta)
    \leq
    1 - e^{-t} + O(\delta)
		\enspace,
\]
where the last inequality holds since $e^{-t}t \leq e^{-1}$ for every $t$.
\end{proof}

\begin{corollary} \label{cor:delta_bound_general}
For every time $0 \leq t < T$, $\Delta(t) \geq (e^{-t} - O(\delta)) \cdot f(OPT)$
\end{corollary}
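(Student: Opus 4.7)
The plan is to bound the two terms inside the minimum that defines $\Delta(t)$ separately, using Lemma~\ref{lem:max_y} to control the coordinates of $y^1(t)$ and $\characteristic_\cN - y^2(t)$, and Lemma~\ref{lem:target_lower_bound} to translate such coordinate bounds into a lower bound on the multilinear extension evaluated at an OR with $\characteristic_{OPT}$. Since this is the non-symmetric setting, Lemma~\ref{lem:union_bound_symmetric} is unavailable, but the weaker bound from Lemma~\ref{lem:target_lower_bound} is exactly what suffices here.

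For the first term, I would apply Lemma~\ref{lem:target_lower_bound} directly to $F$ with $x = y^1(t)$ and $S = OPT$. By Lemma~\ref{lem:max_y}, every coordinate $y^1_u(t)$ is at most $1 - e^{-t} + O(\delta)$, so the lemma yields $F(y^1(t) \vee \characteristic_{OPT}) \geq (e^{-t} - O(\delta)) \cdot f(OPT)$. The second term requires the standard ``complementation'' trick already used in the proof of Lemma~\ref{lem:delta_bound}: observe that
\[
    F(y^2(t) \wedge \characteristic_{OPT}) = \bar{F}\bigl((\characteristic_\cN - y^2(t)) \vee \characteristic_{\cOPT}\bigr),
\]
since $\characteristic_\cN - (y^2(t) \wedge \characteristic_{OPT}) = (\characteristic_\cN - y^2(t)) \vee \characteristic_{\cOPT}$. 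Now $\bar{f}$ is also submodular, so Lemma~\ref{lem:target_lower_bound} applies to $\bar{F}$; the coordinates of $\characteristic_\cN - y^2(t)$ are $1 - y^2_u(t)$, which are also bounded by $1 - e^{-t} + O(\delta)$ thanks to Lemma~\ref{lem:max_y}. Hence
\[
    F(y^2(t) \wedge \characteristic_{OPT}) \geq (e^{-t} - O(\delta)) \cdot \bar{f}(\cOPT) = (e^{-t} - O(\delta)) \cdot f(OPT),
\]
using $\bar{f}(\cOPT) = f(OPT)$ by the definition of $\bar{f}$.

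Taking the minimum of the two bounds gives $\Delta(t) \geq (e^{-t} - O(\delta)) \cdot f(OPT)$, as required. There is no real obstacle here: the two steps are essentially symmetric applications of Lemma~\ref{lem:target_lower_bound}, one directly and one via the complementary function $\bar{F}$. The only point that deserves care is recognising that Lemma~\ref{lem:target_lower_bound} is stated for any submodular $f$, so applying it to the multilinear extension of $\bar{f}$ is legitimate and does not require symmetry of the original objective.
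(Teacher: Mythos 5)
Your proof is correct and matches the paper's own argument essentially line for line: both apply Lemma~\ref{lem:target_lower_bound} directly to $F$ for the $y^1$-term and to $\bar{F}$ (noting $\bar{f}$ is submodular) for the $y^2$-term, with the coordinate bound $\max\{y^1_u(t), 1-y^2_u(t)\} \leq 1 - e^{-t} + O(\delta)$ supplied by Lemma~\ref{lem:max_y}. Nothing further is needed.
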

\begin{proof}
By Lemmata~\ref{lem:target_lower_bound} and \ref{lem:max_y},
\[
	F(y^1(t) \vee \characteristic_{OPT})
	\geq
	(e^{-t} - O(\delta)) \cdot f(OPT)
	\enspace,
\]
and
\[
	F(y^2(t) \wedge \characteristic_{OPT})
	=
	\bar{F}((\characteristic_\cN - y^2(t)) \vee \characteristic_{\cOPT})
	\geq
	(e^{-t} - O(\delta)) \cdot \bar{f}(\cOPT)
	=
	(e^{-t} - O(\delta)) \cdot f(OPT)
	\enspace.
\]
The lemma follows by plugging both observations into the definition of $\Delta(t)$.
\end{proof}

Combining Corollaries~\ref{cor:step_improvment_two_sided_general} and~\ref{cor:delta_bound_general}, we get the following corollary.

\begin{corollary} \label{cor:step_improvment_two_sided_assignment}
For every time $0 \leq t < T$,
\[
	F(y^1(t + \delta)) - F(y^1(t)) \geq \delta \cdot [e^{-t} \cdot f(OPT) - F(y^1(t))] - O(n^3\delta^2) \cdot f(OPT)
	\enspace,
\]
and
\[
	F(y^2(t + \delta)) - F(y^2(t)) \geq \delta \cdot [e^{-t} \cdot f(OPT) - F(y^2(t))] - O(n^3\delta^2) \cdot f(OPT)
	\enspace,
\]
\end{corollary}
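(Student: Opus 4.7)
The plan is to obtain the corollary by direct substitution: Corollary~\ref{cor:step_improvment_two_sided_general} gives a lower bound on the one-step improvement of $F(y^1(t+\delta))-F(y^1(t))$ (and symmetrically for $y^2$) in terms of $\Delta(t)$ and the current value $F(y^i(t))$, while Corollary~\ref{cor:delta_bound_general} provides a lower bound on $\Delta(t)$ in terms of $e^{-t}\cdot f(OPT)$. Plugging the latter into the former should directly yield the desired expression.

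Concretely, I would start with the first inequality of Corollary~\ref{cor:step_improvment_two_sided_general}, namely
\[
F(y^1(t+\delta))-F(y^1(t)) \;\geq\; \delta\cdot[\Delta(t)-F(y^1(t))] - O(n^3\delta^2)\cdot f(OPT),
\]
and then replace $\Delta(t)$ using $\Delta(t) \geq (e^{-t}-O(\delta))\cdot f(OPT)$ from Corollary~\ref{cor:delta_bound_general}. This turns the right-hand side into
\[
\delta\cdot[e^{-t}\cdot f(OPT)-F(y^1(t))] - \delta\cdot O(\delta)\cdot f(OPT) - O(n^3\delta^2)\cdot f(OPT),
\]
and the middle term, being $O(\delta^2)\cdot f(OPT)$, is absorbed into the existing $O(n^3\delta^2)\cdot f(OPT)$ error term. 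The identical argument applied to the second inequality of Corollary~\ref{cor:step_improvment_two_sided_general} gives the statement for $y^2$.

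There is essentially no obstacle here beyond bookkeeping of the big-$O$ terms; the nontrivial work was done in proving the two input corollaries (in particular establishing the $e^{-t}$ factor via the pointwise bound of Lemma~\ref{lem:max_y} and Lemma~\ref{lem:target_lower_bound}, applied also to $\bar{F}$ and $\bar{f}$ for the $y^2$ side). Thus the proof can be stated in one or two lines, simply citing the two corollaries and noting that the $O(\delta)\cdot\delta\cdot f(OPT)$ contribution is dominated by the $O(n^3\delta^2)\cdot f(OPT)$ term already present.
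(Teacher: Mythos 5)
Your proof is correct and matches the paper's approach exactly: the paper obtains Corollary~\ref{cor:step_improvment_two_sided_assignment} by combining Corollary~\ref{cor:step_improvment_two_sided_general} with Corollary~\ref{cor:delta_bound_general}, with the extra $\delta\cdot O(\delta)\cdot f(OPT)$ term absorbed into the $O(n^3\delta^2)\cdot f(OPT)$ error exactly as you note.
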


In order to complete the analysis of Algorithm~\ref{alg:DoubleContinuousGreedy2}, we need to derive from the last corollary lower bounds on $F(y^1(T))$ and $F(y^2(T))$. This derivation is identical to the one used by~\cite{FNS11} to derive their result from their Corollary~III.7 (which is the counterpart of Corollary~\ref{cor:step_improvment_two_sided_assignment}). We give the proof again below for completeness.

Let $g(t)$ be defined as follows: $g(0) = 0$ and $g(t + \delta) = g(t) + \delta [e^{-t} \cdot f(OPT) - g(t)]$. The next lemma shows that a lower bound on $g(t)$ also gives a lower bound on $F(y^1(t))$ and $F(y^2(t))$.

\begin{lemma} \label{lem:g_bound}
For every time $0 \leq t \leq T$, $g(t) \leq \min\{F(y^1(t)), F(y^2(t))\} + O(n^3\delta) \cdot t \cdot f(OPT)$.
\end{lemma}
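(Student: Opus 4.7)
The plan is to prove the lemma by a standard induction on $t$ (with step size $\delta$), closely mirroring the proof of Lemma~\ref{lem:g_bound2} but applied separately to both $y^1$ and $y^2$. Since the desired bound involves $\min\{F(y^1(t)), F(y^2(t))\}$, it suffices to establish the stronger statement that, for each $i \in \{1,2\}$, $g(t) \leq F(y^i(t)) + c n^3 \delta t \cdot f(OPT)$, where $c$ is the constant hidden by the big-$O$ in Corollary~\ref{cor:step_improvment_two_sided_assignment}. Taking the minimum over $i$ then yields the lemma.

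For the base case $t = 0$, we have $g(0) = 0$, while $F(y^1(0)) = f(\varnothing) \geq 0$ and $F(y^2(0)) = f(\cN) \geq 0$ by non-negativity of $f$, so the inductive hypothesis holds trivially for both $i$. For the inductive step, I would fix $i \in \{1,2\}$, assume the bound at time $t$, and compute
\begin{align*}
g(t+\delta)
&= (1-\delta) g(t) + \delta e^{-t} f(OPT) \\
&\leq (1-\delta)[F(y^i(t)) + c n^3 \delta t \cdot f(OPT)] + \delta e^{-t} f(OPT) \\
&= F(y^i(t)) + \delta[e^{-t} f(OPT) - F(y^i(t))] + c(1-\delta) n^3 \delta t \cdot f(OPT).
\end{align*}
Applying Corollary~\ref{cor:step_improvment_two_sided_assignment} to the first two terms gives an upper bound of $F(y^i(t+\delta)) + c n^3 \delta^2 \cdot f(OPT)$ for them (absorbing the big-$O$ error into $c$), so altogether
\[
g(t+\delta) \leq F(y^i(t+\delta)) + c n^3 \delta^2 \cdot f(OPT) + c(1-\delta) n^3 \delta t \cdot f(OPT) \leq F(y^i(t+\delta)) + c n^3 \delta (t+\delta) \cdot f(OPT),
\]
completing the induction for each $i$ separately and therefore for the minimum.

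There is no real obstacle here; the argument is essentially bookkeeping. The only subtlety worth flagging is that one must run the induction separately (not jointly) on the two coordinates $i=1,2$: if one tried to induct directly on $\min\{F(y^1(t)), F(y^2(t))\}$, the index achieving the minimum could switch from time $t$ to time $t+\delta$, and Corollary~\ref{cor:step_improvment_two_sided_assignment} only bounds the one-step growth of each $F(y^i)$ individually. Proving the stronger per-coordinate bound and then taking the minimum at the end sidesteps this issue entirely.
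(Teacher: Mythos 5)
Your proof is correct and matches the paper's own argument: the paper likewise proves the per-coordinate bound $g(t) \leq F(y^1(t)) + cn^3\delta t \cdot f(OPT)$ by induction using Corollary~\ref{cor:step_improvment_two_sided_assignment}, notes that the $y^2$ case is analogous, and then implicitly takes the minimum. Your remark about why one must induct separately on each coordinate (rather than on the minimum directly) is a valid subtlety that the paper leaves implicit, but it does not change the substance of the argument.
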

\begin{proof}
We prove $g(t) \leq F(y^1(t)) + O(n^3\delta^{-2}) \cdot t \cdot f(OPT)$. The proof of the claim for $F(y^2(t))$ is analogous. Let $c$ be the constant hiding behind the big $O$ notation in Corollary~\ref{cor:step_improvment_two_sided_assignment}. We prove by induction on $t$ that $g(t) \leq F(y^1(t)) + c n^3\delta t \cdot f(OPT)$. For $t = 0$, $g(0) = 0 \leq F(y^1(0))$. Assume now that the claim holds for some $t$, and let us prove it for $t + \delta$. Using Corollary~\ref{cor:step_improvment_two_sided_assignment}, we get:
\begin{align*}
    g(t + \delta)
    &=
    g(t) + \delta [e^{-t} \cdot f(OPT) - g(t)]
    =
    (1 - \delta) g(t) + \delta e^{-t} \cdot f(OPT)\\
    &\leq
    (1 - \delta) [F(y^1(t)) + cn^3\delta t \cdot f(OPT)] + \delta e^{-t} \cdot f(OPT)\\
    &=
    F(y^1(t)) + \delta [e^{-t} \cdot f(OPT) - F(y^1(t))] + c(1 - \delta) n^3\delta t \cdot f(OPT)\\
    &\leq
    F(y^1(t + \delta)) + cn^3\delta^2 \cdot f(OPT) + c(1 - \delta)n^3\delta t \cdot f(OPT)\\
    &\leq
    F(y^1(t + \delta)) + cn^3\delta(t + \delta) \cdot f(OPT)
		\enspace.
		\qedhere
\end{align*}
\end{proof}

The function $g$ is given by a recursive formula, thus, evaluating it is not immediate. Instead, we show that the function $h(t) = t e^{-t} \cdot f(OPT)$ lower bounds $g$ within the range $[0, 1]$ (recall that Algorithm~\ref{alg:DoubleContinuousGreedy2} sets $T = 1$).
\begin{lemma} \label{lem:h_bound}
For every $0 \leq t \leq 1$, $g(t) \geq h(t)$.
\end{lemma}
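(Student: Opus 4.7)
The plan is induction on $t$ over the discrete values $\{0, \delta, 2\delta, \ldots, 1\}$, directly paralleling the proof of Lemma~\ref{lem:h_bound2} given earlier in the paper. The base case $t = 0$ is immediate: $g(0) = 0 = 0 \cdot e^{0} \cdot f(OPT) = h(0)$.

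For the inductive step, assume $g(t) \geq h(t)$ for some $t$ with $t + \delta \leq 1$. The idea is to bound $h(t+\delta) - h(t)$ from above via the integral of $h'$, in exact analogy with the earlier proof, but here one must exploit the restriction $t \leq 1$ because $h$ is no longer monotone. Writing
\[
    h(t + \delta) - h(t) = \int_t^{t + \delta} h'(\tau)\, d\tau, \qquad h'(\tau) = f(OPT) \cdot e^{-\tau}(1 - \tau),
\]
observe that on $[0, 1]$ the integrand is nonnegative, and its derivative $-e^{-\tau}(2-\tau)$ is nonpositive for $\tau \leq 2$, so $e^{-\tau}(1-\tau)$ is decreasing on $[t, t+\delta] \subseteq [0,1]$. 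Hence
\[
    h(t + \delta) - h(t) \leq \delta \cdot f(OPT) \cdot e^{-t}(1 - t) = \delta \cdot f(OPT) \cdot e^{-t} - \delta \cdot t e^{-t} \cdot f(OPT) = \delta \cdot f(OPT) \cdot e^{-t} - \delta \cdot h(t).
\]

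Rearranging gives $h(t+\delta) \leq (1 - \delta) h(t) + \delta e^{-t} \cdot f(OPT)$. Since $\delta \leq n^{-5} < 1$, the coefficient $1-\delta$ is nonnegative, so applying the induction hypothesis yields
\[
    h(t + \delta) \leq (1 - \delta) g(t) + \delta e^{-t} \cdot f(OPT) = g(t) + \delta [e^{-t} \cdot f(OPT) - g(t)] = g(t + \delta),
\]
which closes the induction.

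The only subtle point, and the reason this lemma is restricted to $t \leq 1$ (unlike its Lemma~\ref{lem:h_bound2} counterpart, which applies for all $t$), is the monotonicity argument on $e^{-\tau}(1-\tau)$: the bound $\int_t^{t+\delta} e^{-\tau}(1-\tau)\, d\tau \leq \delta e^{-t}(1-t)$ would fail if $\tau$ were allowed to exceed $1$ (the integrand would become negative there, and while the inequality technically remains valid, the subsequent identification of $\delta t e^{-t} f(OPT)$ with $\delta h(t)$ would lose the sign control we need). Since Algorithm~\ref{alg:DoubleContinuousGreedy2} sets $T = 1$, this restriction is exactly compatible with the range of $t$ that arises in the algorithm's analysis.
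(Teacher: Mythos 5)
Your proof is correct and follows exactly the same route as the paper's: induction on $t$ in steps of $\delta$, upper-bounding $h(t+\delta)-h(t)$ by $\delta\, h'(t)$ using the monotone decrease of $e^{-\tau}(1-\tau)$, rearranging to $h(t+\delta)\le(1-\delta)h(t)+\delta e^{-t}f(OPT)$, and applying the induction hypothesis. The paper simply states the integral bound without showing the derivative computation; you spell it out, which is fine.

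One small quibble with your concluding remark: the integral bound $\int_t^{t+\delta}e^{-\tau}(1-\tau)\,d\tau\le\delta e^{-t}(1-t)$ does \emph{not} fail once $\tau$ exceeds $1$ — the integrand remains decreasing up to $\tau=2$, and the rearrangement into $(1-\delta)h(t)+\delta e^{-t}f(OPT)$ is an identity regardless of sign, so the induction step would in fact go through somewhat beyond $t=1$. The restriction to $t\le1$ in the lemma is there because Algorithm~\ref{alg:DoubleContinuousGreedy2} sets $T=1$, so that range is all that is needed; it is not forced by the argument. This doesn't affect the validity of your proof, just the framing of the final paragraph.
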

\begin{proof}
The proof is by induction on $t$. For $t = 0$, $g(0) = 0 = 0 \cdot e^{-0} \cdot f(OPT) = h(0)$. Assume now that the lemma holds for some $t$, and let us prove it holds for $t + \delta$.
\begin{align*}
    h(t + \delta)
    ={} &
    h(t) + \int_t^{t + \delta} h'(\tau) d\tau
    =
    h(t) + f(OPT) \cdot \int_t^{t + \delta} e^{-\tau}(1 - \tau) d\tau\\
    \leq{} &
    h(t) + f(OPT) \cdot \delta e^{-t}(1 - t)
    =
    (1 - \delta) h(t) + \delta e^{-t} \cdot f(OPT)\\
    \leq{} &
    (1 - \delta) g(t) + \delta e^{-t} \cdot f(OPT)
    =
    g(t) + \delta \cdot [e^{-t} \cdot f(OPT) - g(t)]
    =
    g(t + \delta) \enspace. \qedhere
\end{align*}
\end{proof}

The last result implies lower bounds on $F(y^1(T))$ and $F(y^2(T)$.
\begin{corollary} \label{cor:final_value_no_enum}
$\min\{F(y^1(T)), F(y^2(T))\} \geq [e^{-1} - o(1)] \cdot f(OPT)$.
\end{corollary}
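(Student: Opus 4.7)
The plan is essentially to chain together Lemma~\ref{lem:g_bound} (which lower-bounds $\min\{F(y^1(t)), F(y^2(t))\}$ in terms of the recursive sequence $g(t)$) with Lemma~\ref{lem:h_bound} (which lower-bounds $g(t)$ by the closed-form function $h(t) = te^{-t} f(OPT)$), and then evaluate the resulting inequality at the endpoint $t = T = 1$.

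More concretely, I would first apply Lemma~\ref{lem:g_bound} at $t = T$ to get
\[
    \min\{F(y^1(T)), F(y^2(T))\} \geq g(T) - O(n^3 \delta) \cdot T \cdot f(OPT).
\]
Next, since Algorithm~\ref{alg:DoubleContinuousGreedy2} sets $T = 1$, the range of Lemma~\ref{lem:h_bound} covers $t = T$, so I would substitute $g(T) \geq h(T) = T e^{-T} f(OPT) = e^{-1} \cdot f(OPT)$ to obtain
\[
    \min\{F(y^1(T)), F(y^2(T))\} \geq e^{-1} \cdot f(OPT) - O(n^3 \delta) \cdot f(OPT).
\]

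Finally, I would observe that $\delta \leq T(\lceil n^5 T \rceil)^{-1} \leq n^{-5}$ by the initialization in Algorithm~\ref{alg:DoubleContinuousGreedy2}, so the error term $O(n^3 \delta) = O(n^{-2}) = o(1)$, which yields the claimed bound $[e^{-1} - o(1)] \cdot f(OPT)$.

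There is essentially no obstacle: the two preceding lemmas have already done all the real work (the recursive propagation of the step-improvement bound, and the comparison with the exponential function). The only small point to be careful about is that $T = 1$ lies in the range $[0,1]$ for which Lemma~\ref{lem:h_bound} holds, so we are entitled to invoke it at the endpoint.
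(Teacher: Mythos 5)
Your proof is correct and follows the same route as the paper: chain Lemma~\ref{lem:g_bound} and Lemma~\ref{lem:h_bound} at $t = T = 1$, then use $\delta \leq n^{-5}$ to absorb the error term into $o(1)$. The only cosmetic difference is that you keep $T$ symbolic before substituting $T=1$, whereas the paper substitutes immediately.
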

\begin{proof}
By Lemmata~\ref{lem:g_bound} and \ref{lem:h_bound}:
\begin{align*}
    \min\{F(y^1(T)), F(y^2(T))\}
		={} &
		\min\{F(y^1(1)), F(y^2(1))\}
    \geq
    g(1) - O(n^3\delta) \cdot 1 \cdot f(OPT)\\
    \geq{} &
    h(1) - O(n^{-2}) \cdot f(OPT)
    =
    [e^{-1} - O(n^{-2})] \cdot f(OPT)
		\enspace.
		\qedhere
\end{align*}
\end{proof}

The approximation ratio guaranteed by Theorem~\ref{thm:uniform_base_general_approximation} follows immediately from Corollaries~\ref{cor:value_as_worse_option} and~\ref{cor:final_value_no_enum}.
\section{An Alternative Proof of the Positive Part of Theorem~\ref{thm:identical_submodular_welfare}} \label{app:SWOriginal}

In this section we give an analysis of the approximation ratio of Algorithm~\ref{alg:IdenticalSW} which is different than the one given in Section~\ref{ssc:positive_welfare}. First, we need to define some notation. 
Given a collection $\cA$ of disjoint sets and an integer $0 \leq h \leq |\cA|$, let $\RSet(\cA, h)$ be the random set resulting from taking the union of $h$ sets of $\cA$ chosen uniformly at random (without replacements).

Lemma~\ref{lem:disjoint_unions} relates $\bE[f(\RSet(\cA, h))]$ with the average value of a set of $\cA$.

\begin{lemma} \label{lem:disjoint_unions}
Given a non-negative submodular function $f\colon 2^\cN \to \bR^+$, a collection $\cA$ of $\ell \geq 2$ \emph{disjoint} subsets $A_1, A_2, \ldots, A_\ell$ of the ground set $\cN$, and an integer $1 \leq h \leq \ell$, then:
\[
	\bE[f(\RSet(\cA, h))]
	\geq
	\left(1 - \frac{h - 1}{\ell - 1}\right) \cdot \frac{\sum_{i=1}^\ell f(A_i)}{\ell}
	\enspace.
\]
\end{lemma}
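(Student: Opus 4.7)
The plan is to establish, for every $i \in [\ell]$, the pointwise lower bound
\[
\bE[f(\RSet(\cA, h)) \mid i \in T] \geq \frac{\ell - h}{\ell - 1} \cdot f(A_i)
\enspace,
\]
where $T \subseteq [\ell]$ is the (random) size-$h$ index set underlying $\RSet(\cA, h)$, so that $\RSet(\cA, h) = \bigcup_{j \in T} A_j$. Once this is in hand, the lemma follows by averaging: sampling $T$ uniformly from $\binom{[\ell]}{h}$ and then $i$ uniformly from $T$ induces the uniform distribution on $[\ell]$ for the marginal of $i$, so a direct double count yields
\[
\bE[f(\RSet(\cA, h))] = \frac{1}{\ell} \sum_{i=1}^{\ell} \bE[f(\RSet(\cA, h)) \mid i \in T] \geq \frac{\ell - h}{\ell(\ell - 1)} \sum_{i=1}^{\ell} f(A_i)
\enspace,
\]
which is exactly the claimed inequality after rearrangement.

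For the pointwise bound, fix $i$ and define $g\colon 2^{\cN} \to \bR^+$ by $g(U) = f(A_i \cup U)$. A short manipulation of the submodular inequality for $f$ shows that $g$ is itself non-negative and submodular, with $g(\varnothing) = f(A_i)$. Conditioned on $i \in T$, the set $T \setminus \{i\}$ is uniform over $\binom{[\ell] \setminus \{i\}}{h-1}$, so each $j \neq i$ lies in $T \setminus \{i\}$ with probability exactly $(h-1)/(\ell-1)$. By the \emph{disjointness} of the $A_j$'s, every element $v \in \cN$ belongs to the random set $R := \bigcup_{j \in T \setminus \{i\}} A_j$ with probability at most $(h-1)/(\ell-1)$: the probability equals $(h-1)/(\ell-1)$ when $v \in A_{j_0}$ for some $j_0 \neq i$, and is $0$ otherwise (in particular elements of $A_i$ never appear in $R$). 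Applying Lemma~\ref{lem:max_probability_max_damage} to $g$ and the random set $R$ with $p = (h-1)/(\ell-1)$ gives
\[
\bE[g(R)] \geq \left(1 - \frac{h-1}{\ell-1}\right) g(\varnothing) = \frac{\ell - h}{\ell - 1} \cdot f(A_i)
\enspace,
\]
and on the event $i \in T$ we have $g(R) = f(A_i \cup R) = f(\RSet(\cA, h))$, establishing the desired conditional bound.

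There is no real obstacle to this plan, but one subtlety is worth flagging: one must invoke Lemma~\ref{lem:max_probability_max_damage} rather than the stronger Lemma~\ref{lem:equal_prob_bound}, because the inclusions of elements in $R$ are far from independent---indicators of elements inside the same $A_j$ are perfectly correlated, and indicators across distinct $A_j$'s are coupled by the cardinality constraint $|T \setminus \{i\}| = h-1$. Disjointness of the $A_j$'s is precisely what guarantees that the \emph{marginal} probability of each element of $\cN$ being in $R$ is bounded by $(h-1)/(\ell-1)$, so the hypothesis of Lemma~\ref{lem:max_probability_max_damage} (which tolerates arbitrary dependence) holds with exactly the right value of $p$.
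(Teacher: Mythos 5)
Your proof is correct and is essentially the same as the paper's: both reduce $\bE[f(\RSet(\cA,h))]$ to the average over $i$ of $\bE[f(A_i \cup \RSet(\cA\setminus\{A_i\}, h-1))]$ (you via conditioning on $i \in T$, the paper via a combinatorial double-count of the size-$h$ subcollections), and then both apply Lemma~\ref{lem:max_probability_max_damage} to the submodular function $S \mapsto f(A_i \cup S)$, using disjointness to bound marginal membership probabilities by $(h-1)/(\ell-1)$. Your closing remark correctly identifies why Lemma~\ref{lem:max_probability_max_damage} (which tolerates dependence) is the right tool here rather than Lemma~\ref{lem:equal_prob_bound}.
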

\begin{proof}
Observe that:
\begin{align*}
	\bE[f(\RSet(\cA, h))]
	={} &
	{\binom{\ell}{h}}^{-1} \cdot \sum_{\substack{\cB \subseteq \cA \\ |\cB| = h}} f\mathopen{}\left({\textstyle \bigcup_{A_i \in \cB} A_i} \right)
	=
	h^{-1}{\binom{\ell}{h}}^{-1} \cdot \sum_{i = 1}^\ell \sum_{\substack{\cB \subseteq \cA - A_i \\ |\cB| = h - 1}} f\mathopen{}\left(A_i \cup {\textstyle \bigcup_{A_j \in \cB} A_j} \right)\\
	={} &
	\frac{\sum_{i = 1}^\ell \mathbb{E}[f(A_i \cup \RSet(\cA - A_i, h - 1))]}{\ell}
	\enspace.
\end{align*}
Since the sets of $\cA$ are disjoint, an element appears in $\RSet(\cA - A_i, h - 1)$ with probability at most $(h - 1)/(\ell - 1)$. On the other hand, since $f(A_i \cup S)$ is a non-negative submodular function of $S$, we get by Lemma~\ref{lem:max_probability_max_damage} that:
\[
	\mathbb{E}[A_i \cup f(\RSet(\cA - A_i, h - 1))]
	\geq
	\left(1 - \frac{h - 1}{\ell - 1}\right) \cdot f(A_i \cup \varnothing)
	=
	\left(1 - \frac{h - 1}{\ell - 1}\right) \cdot f(A_i)
	\enspace.
\]
The lemma follows by combining the above results.
\end{proof}

The next lemma bounds the value of a certain random set obtained by taking the union of multiple random sets. Notice that Lemmata~2.2 and~2.3 of~\cite{FMV11} correspond to the cases of $\ell = 1$ and $\ell = 2$ of this lemma. The proof of the lemma is based on the same technique used by \citeWithName{Feige et al.}{FMV11} to derive their Lemma~2.3 from their Lemma~2.2.

\begin{lemma} \label{lem:equal_prob_repeated_bound}
Given a non-negative submodular function $f\colon 2^\cN \to \bR^+$, $\ell$ subsets $A_1, A_2, \ldots, A_\ell$ of the ground set $\cN$ and a probability $p$, then:
\[
	\mathbb{E}\mathopen{}\left[f\mathopen{}\left({\textstyle \bigcup_{i = 1}^\ell A_i(p)}\right)\right]
	\geq
	\sum_{I \subseteq [\ell]} p^{|I|}(1 - p)^{\ell - |I|} \cdot f\mathopen{}\left({\textstyle \bigcup_{i \in I} A_i} \right)
	\enspace,
\]
assuming the random sets $\{A_i(p)\}_{i = 1}^\ell$ are independent and $\bigcup_{A_i \in \varnothing} A_i$ is defined as $\varnothing$.
\end{lemma}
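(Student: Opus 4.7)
The plan is to prove the lemma by induction on $\ell$, using Lemma~\ref{lem:equal_prob_bound} (which is exactly the $\ell=1$ case) as both the base case and the engine of the inductive step. For $\ell = 1$ the claimed inequality reads $\bE[f(A_1(p))] \geq (1-p)f(\varnothing) + p \cdot f(A_1)$, which is precisely Lemma~\ref{lem:equal_prob_bound}.

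For the inductive step, assume the lemma for $\ell - 1$ subsets, and write $U = \bigcup_{i=1}^{\ell-1} A_i(p)$ so that $\bigcup_{i=1}^{\ell} A_i(p) = U \cup A_\ell(p)$, where $U$ and $A_\ell(p)$ are independent. Condition on $U$ and apply Lemma~\ref{lem:equal_prob_bound} to the set function $S \mapsto f(U \cup S)$, which is non-negative and submodular in $S$ for every fixed $U$. This yields $\bE[f(U \cup A_\ell(p)) \mid U] \geq (1-p) f(U) + p \cdot f(U \cup A_\ell)$. Taking expectation over $U$ gives
\[
  \bE\mathopen{}\left[f\mathopen{}\left({\textstyle \bigcup_{i=1}^\ell A_i(p)}\right)\right]
  \geq
  (1-p) \cdot \bE[f(U)] + p \cdot \bE[f(U \cup A_\ell)].
\]
Then I would apply the inductive hypothesis to $\bE[f(U)]$ directly, and to $\bE[f(U \cup A_\ell)]$ via the non-negative submodular function $h(S) = f(S \cup A_\ell)$, which is still non-negative and submodular in $S$ (translation by a fixed set preserves both properties).

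The final step is purely bookkeeping: the $(1-p)$-weighted sum will range over subsets $I \subseteq [\ell-1]$ with weights $p^{|I|}(1-p)^{\ell - |I|}$, accounting exactly for the subsets of $[\ell]$ that omit $\ell$; reindexing the $p$-weighted sum via $J = I \cup \{\ell\}$ produces weights $p^{|J|}(1-p)^{\ell - |J|}$ on all subsets $J \subseteq [\ell]$ that contain $\ell$. Adding the two pieces recovers the full sum $\sum_{I \subseteq [\ell]} p^{|I|}(1-p)^{\ell-|I|} f(\bigcup_{i \in I} A_i)$.

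I do not expect any serious obstacle. The only point that deserves a moment of care is the verification that the two auxiliary functions $S \mapsto f(U \cup S)$ (for fixed $U$) and $S \mapsto f(S \cup A_\ell)$ are both non-negative submodular so that Lemma~\ref{lem:equal_prob_bound} and the inductive hypothesis, respectively, apply to them; both facts are immediate from the definition of submodularity.
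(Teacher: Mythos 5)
Your proof is correct and uses essentially the same inductive strategy as the paper's. The only difference is a symmetric swap of roles: the paper conditions on the realization of $A_\ell(p)$ and applies the inductive hypothesis to $S \mapsto f(A' \cup S)$ before invoking Lemma~\ref{lem:equal_prob_bound}, whereas you condition on $U = \bigcup_{i=1}^{\ell-1}A_i(p)$, peel off $A_\ell(p)$ with Lemma~\ref{lem:equal_prob_bound} first, and then invoke the inductive hypothesis on each of the two resulting terms --- both variants rely on the same two facts and the same reindexing, and each is equally valid.
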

\begin{proof}
The proof is by induction on $\ell$. The case $\ell = 1$ is identical to Lemma~\ref{lem:equal_prob_bound} (Lemma~2.2 of \cite{FMV11}). Assume the lemma holds for $\ell - 1 \geq 1$, and let us prove it for $\ell$. Then,
\begin{align*}
	\mathbb{E}\mathopen{}\left[f\mathopen{}\left({\textstyle \bigcup_{i = 1}^\ell A_i(p)}\right)\right]
	={} &
	\sum_{A' \subseteq A_\ell} \Pr[A_\ell(p) = A'] \cdot \mathbb{E}\mathopen{}\left[f\mathopen{}\left({\textstyle \bigcup_{i = 1}^\ell A_i(p)}\right) \mid A_\ell(p) = A'\right]\\
	={} &
	\sum_{A' \subseteq A_\ell} \Pr[A_\ell(p) = A'] \cdot \mathbb{E}\mathopen{}\left[f\mathopen{}\left(A' \cup {\textstyle \bigcup_{i = 1}^{\ell - 1} A_i(p)}\right)\right]
	\enspace.
\end{align*}

Since $f(A' \cup S)$ is a non-negative submodular function of $S$, we get by the induction hypothesis that:
\[
	\mathbb{E}\mathopen{}\left[\mathopen{}f\left(A' \cup {\textstyle \bigcup_{i = 1}^{\ell - 1} A_i(p)} \right)\right]
	\geq
	\sum_{I \subseteq [\ell-1]} p^{|I|}(1 - p)^{(\ell - 1) - |I|} \cdot f\mathopen{}\left(A' \cup {\textstyle \bigcup_{i \in I} A_i} \right)
	\enspace.
\]

Combining the above equality and inequality, and changing the order of summation, gives:
\begin{align*}
	\mathbb{E}\mathopen{}\left[f\mathopen{}\left({\textstyle \bigcup_{i = 1}^\ell A_i(p)}\right)\right]
	\geq{} &
	\sum_{I \subseteq [\ell-1]} p^{|I|}(1 - p)^{(\ell - 1) - |I|} \cdot \left[ \sum_{A' \subseteq A_\ell} \Pr[A_\ell(p) = A'] \cdot f\mathopen{}\left(A' \cup {\textstyle \bigcup_{i \in I} A_i} \right)\right]\\
	={} &
	\sum_{I \subseteq [\ell-1]} p^{|I|}(1 - p)^{(\ell - 1) - |I|} \cdot \bE \mathopen{}\left[ f\mathopen{}\left(A_\ell(p) \cup {\textstyle \bigcup_{i \in I} A_i} \right)\right]\\
	\geq{} &
	\sum_{I \subseteq [\ell-1]} p^{|I|}(1 - p)^{(\ell - 1) - |I|} \cdot \left[ (1- p) f\mathopen{}\left({\textstyle \bigcup_{i \in I} A_i} \right) + p \cdot f\mathopen{}\left({\textstyle \bigcup_{i \in I + \ell} A_i} \right) \right]
	\enspace,
\end{align*}
where the last inequality follows from Lemma~\ref{lem:equal_prob_bound} since $f\mathopen{}\left(A_\ell(p) \cup {\textstyle \bigcup_{i \in I} A_i} \right)$ is a submodular function of $A_\ell(p)$.
\end{proof}

We are now ready to prove the positive part of Theorem~\ref{thm:identical_submodular_welfare}.

\begin{proof}[\inArXiv{Proof }of the Positive Part of Theorem~\ref{thm:identical_submodular_welfare}]
Let $OPT_i$ denote the set of elements assigned to player $p_i$ by the optimal solution, and let $\cO = \{OPT_i\}_{i = 1}^k$. By the linearity of the expectation, we can bound the value of the solution produced by Algorithm~\ref{alg:IdenticalSW} as follows:
\begin{equation} \label{eq:rephrasing_S}
	\bE\mathopen{}\left[\sum_{i = 1}^k f(S_i)\right]
	=
	k \cdot \bE[f(\cN(1/k))]
	=
	k \cdot \bE\mathopen{}\left[f\mathopen{}\left({\textstyle \bigcup_{i = 1}^k OPT_i(1/k)}\right)\right]
	\enspace.
\end{equation}

By Lemmata~\ref{lem:disjoint_unions} and~\ref{lem:equal_prob_repeated_bound}:
\begin{align*}
	\bE\mathopen{}\left[f\mathopen{}\left({\textstyle \bigcup_{i = 1}^k OPT_i(1/k)}\inJournal{\right.\right.&\left.\left.\mspace{-14mu}\vphantom{\textstyle\bigcup_{i = 1}^k}}\right)\right]
	\geq\inArXiv{{} &}
	\sum_{I \subseteq [k]} (1/k)^{|I|}(1 - 1/k)^{k - |I|} \cdot f\mathopen{}\left({\textstyle \bigcup_{i \in I} OPT_i} \right)\\
	\geq{} &
	\sum_{h = 1}^{k - 1} \left[(1/k)^{h}(1 - 1/k)^{k - h} \cdot \binom{k}{h} \cdot \bE[f(\RSet(\cO, h))] \right]\\
	\geq{} &
	\sum_{h = 1}^{k - 1} \left[(1/k)^{h}(1 - 1/k)^{k - h} \cdot \binom{k}{h} \cdot \left(1 - \frac{h - 1}{k - 1}\right) \cdot \frac{\sum_{i = 1}^k f(OPT_i)}{k} \right]
	\enspace.
\end{align*}

Rearranging the rightmost hand side of the above inequality yields:
\begin{align*}
	\bE\mathopen{}\left[f\mathopen{}\left({\textstyle \bigcup_{i = 1}^k OPT_i(1/k)}\right)\right]
	\geq{} &
	\frac{\sum_{i = 1}^k f(OPT_i)}{k} \cdot \sum_{h = 1}^{k - 1} \left[\binom{k - 1}{h} \cdot (1/k)^{h}(1 - 1/k)^{(k - 1) - h} \right]\\
	={} &
	\frac{\sum_{i = 1}^k f(OPT_i)}{k} \cdot \left[1 - \binom{k - 1}{0} \cdot (1/k)^{0}(1 - 1/k)^{(k - 1) - 0}\right]\\
	={} &
	\frac{\sum_{i = 1}^k f(OPT_i)}{k} \cdot [1 - (1 - 1/k)^{k - 1}]
	\enspace,
\end{align*}
where the first equality holds since $\sum_{h = 0}^{k - 1} \left[\binom{k - 1}{h} \cdot a^{h}b^{(k - 1) - h} \right] = 1$ whenever $a + b = 1$. The theorem now follows by plugging the last inequality into Equality~\eqref{eq:rephrasing_S}.
\end{proof}

\inJournal{

\bibliographystyle{ACM-Reference-Format-Journals}
\bibliography{submodular}
\received{Month Year}{Month Year}{Month Year}
}

\end{document}